\documentclass{article}
\usepackage{authblk}
\usepackage[english]{babel}

\usepackage[letterpaper,top=2cm,bottom=3cm,left=3cm,right=3cm,marginparwidth=1.75cm]{geometry}

\usepackage{amssymb,amsthm} 
\usepackage{bbm}
\usepackage{amsmath}
\usepackage{todonotes}
\usepackage{mathrsfs}
\usepackage{thmtools} 

\numberwithin{equation}{section}
\usepackage{graphicx}
\usepackage[colorlinks=true, allcolors=blue]{hyperref}

\newcounter{hypA}

\newcounter{hypB}

\newcounter{hypD}

\newcounter{hypW}

\providecommand{\argmin}{\mathop\mathrm{arg min}}

\usepackage{algorithm}
\usepackage{algpseudocode}

\newcommand{\UBU}{\text{UBU}}
\newtheorem{theorem}{Theorem}[section]
\newtheorem{lemma}{Lemma}[section]
\newtheorem{corollary}{Corollary}[section]
\newtheorem{proposition}{Proposition}[section]
\newtheorem{definition}{Definition} [section]
\newtheorem{assumption}{Assumption}[section]
\newtheorem{remark}[theorem]{Remark}
\newtheorem{example}{Example}[section]

\def\rmd{\mathrm{d}}

\def\1{\mathds{1}}

\DeclareMathOperator{\R}{\mathbb{R}}

\newcommand{\E}{\mathbb{E}}

\def\U{\mathcal{U}}
\def\B{\mathcal{B}}
\def\K{\mathcal{K}}

\def\uu{f}

\def\btheta{\boldsymbol\theta}
\def\btheta{\boldsymbol\theta}
\newcommand{\proj}{\mathtt{P}}
\newcommand{\wsq}{{\sf W}_q}

\newcommand{\wsone}{{\sf W}_1}
\newcommand{\wstwo}{{\sf W}_2}
\newcommand{\bigo}{\mathcal{O}}
\newcommand{\logo}{\widetilde{\mathcal{O}}}

\def\bvartheta{\boldsymbol\vartheta}  
\def\bv{\mathbf v}
\def\btheta{\boldsymbol\theta}

\def\bfC{\mathbf C}
\def\bfA{\mathbf A}
\def\bfB{\mathbf B}

\def\bv{\mathbf v}
\def\bV{\mathbf V}

\def\bL{\boldsymbol L}
\def\bW{\boldsymbol W}
\def\bw{\boldsymbol w}
\def\by{\boldsymbol y}
\def\Ltwo{\mathbb L_2}
\title{Kinetic Langevin Splitting Schemes for  Constrained Sampling}
\author{Neil K. Chada\thanks{Department of Mathematics, City University of Hong Kong, Hong Kong SAR, \texttt{neilchada123@gmail.com}} \qquad Lu Yu\thanks{Department of Data Science, City University of Hong Kong, Hong Kong SAR, \texttt{lu.yu@cityu.edu.hk}}}
\date{\today}

\begin{document}
\maketitle
\begin{abstract}
Constrained sampling is an important and challenging task in computational statistics, concerned with generating samples from a distribution under certain constraints. There are numerous types of algorithm aimed at this task, ranging from general Markov chain Monte Carlo, to unadjusted Langevin methods. In this article we propose a series of new sampling algorithms based on the latter of these, specifically the kinetic Langevin dynamics. Our series of algorithms are motivated on advanced numerical methods which are splitting order schemes, which include the BU and BAO families of splitting schemes.Their advantage lies in the fact that they have favorable strong order (bias) rates and computationally efficiency. In particular we provide a number of theoretical insights which include a Wasserstein contraction and convergence results. We are able to demonstrate favorable results, such as improved complexity bounds over existing non-splitting methodologies. Our results are verified through numerical experiments on a range of models with constraints, which include a toy example and Bayesian linear regression.
\\\\
\textbf{Keywords}: Constrained sampling, Splitting schemes, Unadjusted Langevin algorithms, \\ Wasserstein complexity
\end{abstract}

\tableofcontents

\section{Introduction}

%\red{LU TODO: double check assumptions on $h$}

Sampling from probability distributions plays a critical role in various fields of science and engineering. Examples of such fields include numerical weather prediction, geophysical sciences, and more recently machine learning, such as generative modelling or Bayesian neural networks \cite{score,wideBNN,dick}. Most of these applications exploit Monte Carlo methodologies for sampling, with limited, or rarely any significant, restrictions. However, in many situations it is of interest to consider sampling with dealing with convex or compact sets, in other words, restricting the sampling to a particular region within the sample space. We will refer to this task as \textit{constrained sampling}, which will be the focus of this work. 
Many scenarios involve sampling on constrained, or confined, spaces, for example stochastic optimal control and molecular dynamics \cite{regression2012,leimkuhler2015molecular}. 
In this context, the problem involves sampling from a probability measure (which we often refer to as the target) $\nu$ on such sets, which are characterized by a density function of the form
\begin{equation}
\label{eq:density}
\nu(\btheta)=\frac{e^{-U^{}(\btheta)}}{\int_{\mathbb{R}^p} e^{-U^{}(\btheta)}\rmd \btheta}\,,
\end{equation}
where from the equation, $U^{}(\btheta)$ takes the form $U(\btheta) = f(\btheta) + \ell_{\mathcal{K}}(\btheta)$, where $f(\btheta)$ represents a potential function and $\ell_{\mathcal{K}}(\btheta)$ is an indicator function ensuring the parameters $\btheta$ lies within a convex and compact set $\mathcal{K}\subset \mathbb{R}^p$, for $p \geq 1$,
%Specifically, %$\ell_{\mathcal{K}}(x)$ takes the form
\begin{equation}
\label{eq:indicator}
\ell_{\K}(\btheta):=
\begin{cases}
+\infty & \text{ if } \btheta\notin\K\\
 0& \text{ if } \btheta\in\K\,.
\end{cases}
\end{equation}

The absence of smoothness in the target distribution $\nu$ presents significant challenges because sampling algorithms often depend, quite heavily, on the smoothness properties of the target distribution. This is to ensure that one can effectively explore the state space and produce representative samples of the target $\nu$. Motivated by this, we consider the approximation for $\ell_{\K}$ of the form
\begin{align*}
    \ell_{\K}^{\lambda}(\btheta):=\dfrac{1}{2\lambda^2}d_{\K}(\btheta),
\end{align*}
where $\lambda>0$ is the tuning parameter and
$d_{\K}:\mathbb{R}^p\to\mathbb R_+$ is a distance function that quantifies the distance between $\btheta$ and the constraint set $\K$.
We will later introduce specific choices for $d_{\K}.$
Now we define 
\begin{equation}
\label{eq:new_pot}
U^\lambda(\btheta):=f(\btheta)+\ell_{\K}^\lambda(\btheta)\,,  
\end{equation}
and the corresponding surrogate target density $\nu^\lambda$ given as
\begin{equation}
\label{eq:surrogate}
\nu^\lambda(\btheta):=\frac{e^{-U^{\lambda}(\btheta)}}{\int_{\mathbb{R}^p} e^{-U^{\lambda}(\btheta')}\rmd \btheta'}\,.
\end{equation}
Therefore, our task at hand is now to generate samples from the modified target, denoted as $\nu^\lambda\propto \exp(-U^\lambda)$,
which is designed with constrains through the constrained set $\mathcal{K}$ and our potential function \eqref{eq:new_pot}.
Therefore, we consider sampling from the target density~$\nu^\lambda\propto \exp(-U^\lambda)$. A popular approach for this task are kinetic Langevin methods. These methods are based on the kinetic Langevin dynamics (KLD) (also referred to as underdamped Langevin dynamics \cite{cheng2018underdamped,dalalyan2020sampling}), is a well-known approach to model the dynamics of molecular systems. It has been heavily utilized in different mathematical areas, which include molecular dynamics, Bayesian statistics, and more recently machine learning. It is defined through two coupled processes $\{\bL_t\}_{t \geq 0}$ and $\{\bV_t\}_{t \geq 0}$ which are the position and the velocity, defined as the following
\begin{equation}\label{eq:kinetic_langevin}
    \begin{split}
    d\bL_{t} &= \bV_{t}dt,\\
    d\bV_{t} &= -\nabla U(\bL_{t}) dt - \gamma \bV_{t}dt + \sqrt{2\gamma}d\bW_{t},
\end{split}
\end{equation}
where $U:\R^p\to \R$ is a potential energy function, $\{\bW_t\}_{t\geq 0}$ is a standard $p-$dimensional Brownian motion, and $\gamma>0$ is a friction coefficient. Under fairly weak assumptions, the unique invariant measure of the process $\{\bL_t,\bV_t\}_{t\geq 0}$ is of the form
\begin{equation}
\label{eq:inv}
\nu(d\bvartheta d\bv) \propto \exp\left(-U(\bvartheta)-\frac{ \|\bv\|^2}{2}\right) d\bvartheta d\bv.
\end{equation}
An alternative to KLD is to use the overdamped Langevin dynamics (OLD), which instead just considers the process $\{\bL_t\}_{t \geq 0}$ given as
\begin{equation}
\label{eq:OLD}
d\bL_t = - \nabla U(\bL_t)dt  + \sqrt{2\beta^{-1}} d\bW_t,
\end{equation}
where $\beta^{-1}>0$ is the inverse temperature parameter. We notice we have no friction term within \eqref{eq:OLD}. The invariant measure associated with \eqref{eq:OLD} is defined as 
$$
\nu(d\bvartheta) \propto \exp\left(-U(\bvartheta)\right) d\bvartheta.
$$
The benefit of using the KLD over the OLD, is that is known to converge faster to its respective invariant measure. 
{In the context of Bayesian sampling, both \eqref{eq:kinetic_langevin} and \eqref{eq:OLD} can be used to generate samples from a distribution $\nu$. This is achieved by simply setting the potential as the log-posterior, i.e. $U=\log \nu$. Recent work has looked at at connecting these processes, to well-known optimization procedures. In order to implement such dynamics, doing so with the full-continuous process is challenging, and often impossible. Therefore one must resort to discretization schemes.
The most obvious discretization method is the Euler-Maruyama (EM) scheme, however existing results suggest alternative methods, based on the bias, complexity and stability. One such method is the randomized midpoint method (RMM) \cite{rmp}. The RMM has also provided advantages in error
tolerance and condition number dependence, demonstrated through applications in machine learning \cite{LJ24,YuYu1,Kandasamy2024RMP}. In the context of constrained sampling, this was used in the work of Yu et al. \cite{YuYu2}, where they demonstrate sharper approximation analysis, and improved error bounds in terms of Wasserstein distance.

Despite the success of the RMM applied to constrained sampling, there still remains limitations, such as that it uses two gradient evaluations per time-step. Secondly, for a fixed $\kappa>0$ it is notes that the RMM has a complexity of order $\mathcal{O}(\varepsilon^{-2/3}p^{1/3})$, for dimension $p>0$ and $\varepsilon>0$, which can be improved on. Therefore this promotes the use of potentially better schemes achieving improvements in those aspects. For this, we introduce and motivate the use of an alternative series of schemes referred to splitting order schemes.

\subsection{Family of Splitting Schemes}

A popular choice for solving ODEs are splitting schemes, motivated by Gilbert Strang \cite{strang}, where one splits the dynamics into different components and solve them individually. This is possible due to the fact they can be integrated exactly. This also translates to SDEs which include the KLD. We will introduce two family of splitting schemes, (i) the BU splitting family, and (ii) the BAO splitting family. For the former, this was introduced in \cite{BUBthesis}. The method is based on splitting the SDE \eqref{eq:kinetic_langevin} as follows 
\[
\begin{pmatrix}
d\bvartheta \\
d\bv
\end{pmatrix} = \underbrace{\begin{pmatrix}
0 \\
-\nabla U(\bvartheta)dt
\end{pmatrix}}_{\mathcal{B}} +\underbrace{\begin{pmatrix}
\bv dt \\
-\gamma \bv dt + \sqrt{2\gamma}d\bW_t
\end{pmatrix}}_{\mathcal{U}},
\]
which can be integrated exactly over a step of size $h>0$. Given $\gamma > 0$, let $\eta = \exp{\left(-\gamma h/2\right)}$, and for ease of notation, we define the following operators 
\begin{equation}\label{eq:Bdef}
\mathcal{B}(\bvartheta,\bv,h) = (\bvartheta,\bv - h\nabla U(\bvartheta)),
\end{equation}
and
\begin{equation}\label{eq:Udef}
\begin{split}
\mathcal{U}(\bvartheta,\bv,h/2,\xi^{(1)},\xi^{(2)}) &= \Big(\bvartheta + \frac{1-\eta}{\gamma}\bv + \sqrt{\frac{2}{\gamma}}\left(\mathcal{Z}^{(1)}\left(h/2,\xi^{(1)}\right) - \mathcal{Z}^{(2)}\left(h/2,\xi^{(1)},\xi^{(2)}\right) \right),\\
& \eta \bv + \sqrt{2\gamma}\mathcal{Z}^{(2)}\left(h/2,\xi^{(1)},\xi^{(2)}\right)\Big),
\end{split}
\end{equation}
where 
\begin{equation}\label{eq:Z12def}
\begin{split}
\mathcal{Z}^{(1)}\left(h/2,\xi^{(1)}\right) &= \sqrt{\frac{h}{2}}\xi^{(1)},\\
\mathcal{Z}^{(2)}\left(h/2,\xi^{(1)},\xi^{(2)}\right) &= \sqrt{\frac{1-\eta^{2}}{2\gamma}}\Bigg(\sqrt{\frac{1-\eta}{1+\eta}\cdot \frac{4}{\gamma h}}\xi^{(1)} + \sqrt{1-\frac{1-\eta}{1+\eta}\cdot\frac{4}{\gamma h}}\xi^{(2)}\Bigg).
\end{split}
\end{equation}
The $\mathcal{B}$ operator indicated here is as given previously, whereas $\mathcal{U}$ as defined above is the exact solution in the weak sense of the remainder of the dynamics when  $\xi^{(1)}, \xi^{(2)} \sim \mathcal{N}\left(0,I_{p}\right)$ are independent random vectors. Different palindromic orders of composition of $\mathcal{B}$ and $\mathcal{U}$ can be taken to define different numerical integrators of kinetic Langevin dynamics, two such methods are BUB, a half step ($h/2$) in $\mathcal{B}$, followed by a full step in $\mathcal{U}$ and a further half step ($h/2$) in $\mathcal{B}$ and UBU, a half step ($h/2$) in $\mathcal{U}$ followed by a full ($h$) $\mathcal{B}$ step, followed by a half ($h/2$) $\mathcal{U}$ step.

The Markov kernel for an $\UBU$ step with stepsize $h$ will be denoted by $P_{h}$, which can be described by \eqref{eq:PhUBU} as follows.
\begin{equation}\label{eq:PhUBU}
\begin{split}
\left(\xi^{(i)}_{k+1}\right)^{4}_{i = 1}, \quad &\xi^{(i)}_{k+1} \sim \mathcal{N}(0_{p},I_{p}) \text{ for all } i = 1,...,4.\\
\left(\bvartheta_{k+1},\bv_{k+1}\right)&=\mathcal{UBU}\left(\bvartheta_k,\bv_k,h,\xi^{(1)}_{k+1},\xi^{(2)}_{k+1},\xi^{(3)}_{k+1},\xi^{(4)}_{k+1}\right)
\\
&=\mathcal{U}\left(\mathcal{B}\left(\mathcal{U}\left(\bvartheta_{k},\bv_{k},h/2,\xi^{(1)}_{k+1},\xi^{(2)}_{k+1}\right),h\right),h/2,\xi^{(3)}_{k+1},\xi^{(4)}_{k+1}\right).
\end{split}
\end{equation}

The UBU scheme was analyzed in \cite{sanz2021wasserstein}, where its advantages lie in that it only requires one gradient evaluation per iteration but has strong error order of two. Furthermore, its stepsize stability analysis, as presented in \cite{supp:UBUBU}, is independent of dimension $p$.

Other symmetric splitting methods are possible and
have been extensively studied in recent years. In particular one of them are BAO splitting schemes. The solution maps corresponding to these parts may be denoted by $\mathcal{B}$, $\mathcal{A}$, and $\mathcal{O}$, which are defined as
\begin{equation}
\label{eq:BAO}
\mathcal{B} = (\bvartheta,\bv-h\nabla U(\bvartheta)), \quad \mathcal{A} = (\bvartheta+h \bv,\bv), \quad \mathcal{O} = (\bvartheta,\eta^2 \bv + \sqrt{1-\eta^4}
\xi),
\end{equation}

where as before $\eta = \exp(-\gamma h/2)$.
Examples of such schemes in the BAO family
include BAOAB, ABOBA and OBABO schemes \cite{leimkuhler2013jcp, leimkuhler2015molecular}, which have the advantage that they are second order in
the weak (sampling bias). The two families can be related through the relationship,
breaking the $$\mathcal{U} = \mathcal{A}+ \mathcal{O}.$$ As mentioned, these methods are known to attain a weak order of $\mathcal{O}(h^2)$. However, unlike UBU, they do not attain the same rate for the strong error.
A contraction and convergence analysis was presented, for such schemes, in the following \cite{leimkuhler2023contractiona, leimkuhler2023contractionb}, which also includes the extension to stochastic gradients.  

Thus far, there has been no connection in understanding splitting schemes for KLD in the context of constrained sampling, which acts as our motivation, in the penalized setting. Before we discuss our contributions of this paper, we provide a brief overview of related work on constrained sampling using various Langevin-based algorithms.

\subsection{Other Related Works}

The notion of constrained sampling can have many different interpretations, depending on the setup and assumptions placed. 
Our setup, defined briefly thus far through \eqref{eq:surrogate}, is based on a penalized constrained setting first introduced in the work of \cite{Mert24}. Their motivation arises from penalty methods from continuous optimization, which includes a penalty term for constraint violations. However, this is not the only setup one could consider for constrained settings. A natural, but difficult setting could be to consider manifolds directly, which require sophisticated mathematical methods to effectively sample from the target. Recent work that has considered this, in the context of Langevin dynamics, includes \cite{kook2022samp,noble2024unbiased}. Some of these works require technical couplings, to ensure one can obtain both Wasserstein convergence and contraction. Furthermore, much of these works are based on accept-reject MCMC algorithms like HMC. Despite the elegance of such proposed work, we consider a more simplified setting which is easier to begin with, before moving to a potential manifold setting in future work. Other such works include \cite{Leimkuhler2024confined}, where the authors develop a confined BAO integrators for Langevin dynamics. It is the first paper aimed at exploiting such splitting schemes in the context of a confined space. However, it differs in that the dynamics are designed to reflect of boundaries, and the information is encoded into each step of the integrator.  More recent literature includes that of \cite{nonreverse,reza}, where these papers consider a different setup, motivated through a primal-dual setting, and which do not include a KLD setting, i.e. only one underlying process $\{\bL_t\}_{t \geq 0}$. 

\subsection{Contributions}

We highlight our contributions of this work through the following points below:
\begin{itemize}
    \item We introduce new algorithms for the task of constrained sampling, based on sampling from \eqref{eq:density}. Specifically we introduce ULA-type algorithms, based on splitting order discretization schemes, namely UBU and BAOAB. As a result our new algorithms we propose for this work include the CUBU and CBAOAB algorithms.
    \item We develop a step-size stability analysis for each of the new algorithms discussed above. Such an analysis utilizes assumptions on the underlying potential, such as convexity and smoothness assumptions, which we will provide later in within the document.
    \item We provide a complexity analysis in terms of Wasserstein convergence, which translates to the number of steps to achieve an order of accuracy $\varepsilon>0$. A summary of our complexity bounds are provided in Table \ref{table:summary}, which highlight the gains and improvement by using splitting schemes, over traditional discretization schemes. Our results will consider different projection methods, which include the Bregman and Gauge projection.
    \item To complement the newly developed constrained sampling algorithms, we also provide extensions to the case of stochastic gradients. This will include both SG-CUBU and SG-BAOAB, as well as a non-splitting scheme for comparison, which was not previously derived. These are also presented in Table \ref{table:summary}, which also demonstrate favorable complexity over SG-CKLMC, which we also consider and provide proofs, for this work.
    \item Numerical experiments are provided on a range of problems to demonstrate the performance gains of the constrained splitting schemes. This will include a toy problem, with constraints defined through both a triangle and circle, and a more advanced numerical example of Bayesian logistic regression. For each example we compare our newly developed algorithms to existing ones.
    %\item 
\end{itemize}

%\textcolor{blue}{To discuss above, and highlight, that in the full-gradient setting the splitting schemes are favorable in the constrained setting, this is not the case for the SG Methods. I don't think this will be a big limitation.}

\begin{table}
\begin{center}
\begin{tabular}{ |c|c|c|c|c| }
\hline
\textbf{Algorithm} & \textbf{Complexity} & \textbf{Reference} & \textbf{Metric} & \textbf{Projection}\\
\hline
CLMC (PULMC) & $\mathcal{O}(\varepsilon^{-7})$ & \cite{Mert24}  & $ \wstwo$ & Euclidean\\
CKLMC & $\mathcal{O}(\varepsilon^{-4})$ & \cite{YuYu2}  & $ \wstwo$ & Euclidean\\
%CRKLMC & $\mathcal{O}(\varepsilon^{-8/3})$ & \cite{YuYu2}  & $\wsone, \wstwo$ & Bregman, Gauge\\
% Stochastic Euler Scheme & $1$ & $1$ & \cite{sanz2021wasserstein}\\
CUBU  &  $\mathcal{O}(\varepsilon^{\textcolor{red}{-3}})$ &  This work, Theorem~\ref{thm:CUBU}  & $\wsone$ & Bregman, Gauge \\
CUBU  &  $\mathcal{O}(\varepsilon^{\textcolor{red}{-3}})$ &  This work, Theorem~\ref{thm:CUBU}  & $\wstwo$ & Bregman, Gauge \\
CBAOAB & $\mathcal{O}(\varepsilon^{\textcolor{red}{-3}})$ &  This work, Theorem~\ref{thm:cbaoab} & $\wsone$ & Bregman, Gauge\\ 
CBAOAB & $\mathcal{O}(\varepsilon^{\textcolor{red}{-3}})$ &  This work, Theorem~\ref{thm:cbaoab} & $\wstwo$ & Bregman, Gauge\\ 
SG-CUBU & $\mathcal{O}(\varepsilon^{\textcolor{red}{-6}})$ &  This work, Theorem~\ref{thm:CUBU_SG} & $\wsone$ & Bregman, Gauge\\ 
SG-CUBU & $\mathcal{O}(\varepsilon^{\textcolor{red}{-21/2}})$ &  This work, Theorem~\ref{thm:CUBU_SG} & $\wstwo$ & Bregman, Gauge\\ 
SG-CBAOAB & $\mathcal{O}(\varepsilon^{\textcolor{red}{-7}})$ &  This work, Theorem~\ref{thm:CBAOAB_SG}  & $\wsone$ & Bregman, Gauge \\ 
SG-CBAOAB & $\mathcal{O}(\varepsilon^{\textcolor{red}{-25/2}})$ &  This work, Theorem~\ref{thm:CBAOAB_SG}  & $\wstwo$ & Bregman, Gauge \\ 
SG-CKLMC & $\mathcal{O}(\varepsilon^{-18})$ & This work, Theorem~\ref{thm:CKLMC_SG}  & $\wstwo$ & Bregman, Gauge \\
\hline
\end{tabular}
\end{center}
\caption{Table of comparison between different discretization schemes for constrained KLD, in terms of their complexity. Our improved rates are highlighted in red.}
\label{table:summary}
\end{table}

\subsection{Outline}
The outline of this paper is as follows, we will begin with Section \ref{sec:back}, which will provide an overview of the material needed for the latter sections. This will include a discussion on assumptions we will use based on the potential and density function, and the introduction of our constrained splitting order schemes. This will lead onto Section \ref{sec:main} where we present our main theoretical findings which include both a step-size stability analysis and Wasserstein complexity bounds. We will defer the proofs to Appendix \ref{sec:app_theory}. To verify our theoretical results, we will introduce our numerical simulations in Section \ref{sec:num} demonstrating improvements under our new methods, where we conclude our findings in Section \ref{sec:conc}. We also present our new methods in algorithmic form in Appendix \ref{sec:app_alg}.

\subsection{Notation}
\label{sec:not}
Denote the $p$-dimensional Euclidean space by $\R^p$.
We use $\btheta$ for deterministic vectors and 
$\bvartheta$ for random vectors.
% $\bfI_p$ and $\mathbf 0_p$ denote the $p \times p$ identity and zero matrices, respectively.  
For symmetric matrices $\bfA$ 
and $\bfB$, we write $\bfA
\preccurlyeq\bfB$ (or $\bfB\succcurlyeq \bfA$) 
if $\bfB - \bfA$ is 
positive semi-definite. 
For a measurable function $f:\R^p\to\R$ and a set $\K\subset\R^p$, define the oscillation
$
\operatorname{osc}_{\K}(f)\;:=\;\sup_{x\in\K} f(x)\;-\;\inf_{x\in\K} f(x).
$
% For a twice differentiable fucntion, The gradient and Hessian of a function $f$ are denoted by $\nabla f$  and 
%  $\nabla^2 f$, respectively.
For a twice differentiable function $f$, we denote its gradient and Hessian by $\nabla f$ and $\nabla^2 f$, respectively.
$\delta_{x}$ denotes the Dirac measure concentrated at the point $x$.
% Given probability measures $\mu$ and $\nu$ on $(\R^p,\mathcal{B}(\R^p))$, the Wasserstein-$q$ distance is defined as
% \[
% \wsq(\mu,\nu) = \big(\inf_{\varrho\in \Gamma(\mu,\nu)} \int_{\R^p\times \R^p}
% \|\btheta -\btheta'\|^q\,\rmd\varrho(\btheta,\btheta')\big)^{1/q}, q\geqslant 1\,,
% \]
% where the infimum is over all couplings $\varrho$ with marginals $\mu$ and $\nu$. 

\section{Background Material \& Algorithms}
\label{sec:back}
In this section we provide a primer on the necessary background material before discussing our main results. This includes various assumptions required on the potential function, and on the compact set $\mathcal{K}$. After-which, we will introduce our new constrained algorithms which we refer to as CUBU and CBAOAB. This will be then be considered for stochastic gradients. All algorithms ar expressed in algorithmic form in the appendices. 
\bigskip

We begin this section by assuming the convex and compact set $\K$ satisfies the following  assumptions.

\begin{assumption}\label{asm:radius}
Given constants $0<r<R<\infty$, it holds that $B_2(r)\subset \K\subset B_2(R)$, where
$B_2(r)$ denotes the Euclidean ball of radius $r$ centered at the origin.
\end{assumption}
Moreover, we impose the following assumption on the functions $f$ and $d_{\K}$.
\begin{assumption}\label{asm:f}
% The function $f: \mathbb R^p\to\mathbb R$ is lower bounded on $\R^p$.
{The function $f: \mathbb R^p\to\mathbb R$ attains its global minimum at the origin.}
\end{assumption}

\begin{assumption}
\label{asm:dK}
There exists some constants $0<c_1\leqslant c_2$ such that $$c_1\|\btheta-\proj_{\K}(\btheta)\|_2^{2}\leqslant d_{\K}(x)\leqslant c_2\|\btheta-\proj_{\K}(\btheta)\|_2^{2},$$ where $\proj_{\K}:\R^p\to\R^p$ is a projection operator onto set $\K.$
Moreover, $d_{\K}(\btheta)\geqslant 0$ for all $\btheta\in\mathbb{R}^p$, and $d_{\K}(\btheta)=0$ whenever $\btheta\in\K.$
\end{assumption}

Let us briefly discuss each of the above assumptions.
Assumption \ref{asm:radius} is important and has been made in the various works in the area of constrained sampling \cite{pereyra,Mert24}. Assumption \ref{asm:f} is made to ensure a convergence analysis can be provided, with convexity. Finally 
Assumption \ref{asm:dK} is an assumption on the distance function, which ensures the potential $U^{\lambda}$ inherits smoothness and strong convexity from $f$.

Related to Assumption \ref{asm:dK}, we now introduce two choices of $\ell_K$, based on commonly used projection classes, which are the Bregman and the Gauge projection. These projections have been widely used in various machine learning applications, which include clustering and detection of outliers \cite{class,detect}. We state this below.

\begin{example}[Bregman projection]
\label{ex:breg}
Consider the Bregman projection $\proj^B_{\mathcal{K}}: \mathbb{R}^p \rightarrow \mathcal{K}$, defined as
$$
\proj^B_{\mathcal{K}}(\btheta) : = \argmin_{\btheta' \in \mathcal{K}}(\btheta-\btheta')^{\top}Q(\btheta-\btheta'),
$$
such that $Q \in \mathbb{R}^{p\times p}$ is a positive semi-definite symmetric matrix.
\end{example}

\begin{example}[Scaling-based projection]
Consider the Gauge projection $\proj^G_{\mathcal{K}}: \mathbb{R}^p \rightarrow \mathcal{K}$, defined as
$$
\proj^G_{\mathcal{K}} : = \frac{\btheta}{g_{\mathcal{K}}(\btheta)},
$$
such that $g_{\mathcal{K}}$ is a variant of the Gauge function, which is associated with the set $\mathcal{K}$, given as
$$
g_{\mathcal{K}}(\btheta) := \inf \{t \geq 1: \btheta \in t\mathcal{K}\}.
$$
\end{example}

We remark that the Bregman distance is equivalent to the squared Mahalanobis distance, defined as $F(\btheta) = \frac{1}{2}\btheta^{\top}Q\btheta$. Therefore, we can consider a distance function of the form
$$
d_{\mathcal{K}}(\btheta) = \| \btheta - \proj^B_{\mathcal{K}}(\btheta) \|^2_2.
$$
As as a result The Bregman projection, can be written as
$$
\ell^{B,\btheta}_{\mathcal{K}}(\btheta) = \frac{1}{2\lambda^2 }\| \btheta - \proj^B_{\mathcal{K}}(\btheta) \|^2_2 
= \inf_{\btheta \in \mathbb{R}^p} \Big( \ell_{\mathcal{K}}(\btheta') + \frac{1}{2\lambda^2}(\btheta-\btheta')^{\top}Q(\btheta-\btheta') \Big).
$$

Let us now place a number of common, and important, assumptions on $f$ as well the potential $U^{\lambda}$, which can be found below. These assumptions are crucial for our error analysis, and step-size stability analysis. They are mostly generic assumptions used within the literature of sampling as an optimization procedure \cite{Chewi2025}.

\begin{assumption}
\label{asm:smooth}
The function $f:\mathbb{R}^p\to\mathbb{R}$ is  twice continuously differentiable, and $m$-strongly convex, that is,
$$
f(\btheta') \geqslant f(\btheta) + \langle \nabla f(\btheta),\btheta'-\btheta \rangle + \frac{m}{2}\|\btheta-\btheta'\|^2.
$$
Moreover, $f$ is $L$-smooth, which satisfies the following
\begin{align*}
\| \nabla f(\btheta) - \nabla f(\btheta') \| \leqslant L \| \btheta - \btheta' \|, \quad \forall, \btheta, \btheta' \in \mathbb{R}^p, 
\end{align*}
where $m>0$ and $L>0$ are the strong convex, and Lipschitz, constants.
% Additionally, the function $f$ satisfies a Lipschitz condition on its Hessian
% \begin{align*}
% \|\nabla^2 f(\btheta)-\nabla^2 f(\btheta')\|_{\operatorname{op}}\leqslant L_1\|\btheta-\btheta'\|\,,
% \end{align*}
% where $\|\|_{\operatorname{op}}$ denotes the spectral norm.
\end{assumption}
% We mpte tjat 
% $U^{\lambda}:\mathbb{R}^p \rightarrow \mathbb{R}^p$
% is also twice differentiable and $M^\lambda$-smooth,
% $$
% | U^{\lambda}(\btheta) - U^{\lambda}(\btheta') | \leqslant M^\lambda \|\btheta-\btheta'\|, \quad \forall, \btheta, \btheta' \in \mathbb{R}^p.
% $$
As shown in Lemmas 12 and 13 from~\cite{YuYu2}, the surrogate potential $U^\lambda$ inherits the smoothness and strong convexity of $f$, provided that the distance function $d_{\K}$ satisfies Assumption~\ref{asm:dK}. 
More specifically, $U^\lambda$ is $m$-stronlgy convex and $M^\lambda$-smooth with
\begin{align*}
    M^\lambda = L + \frac{1}{\lambda^2} C_{\mathrm{proj}},
\end{align*}
where the constants $C_{\mathrm{proj}}>0$ depends on the specific choice of the projection operator.
In this work, we focus on the Bregman projection and the Gauge projection introduced in~\cite{YuYu2}, for which $C_{\mathrm{proj}}$ can be taken as a universal positive constant.
% Furthermore, we additionally require the Hessian of $d_{\K}$ to be Lipschitz continuous.
% {
% We note that the tuning parameter $\lambda$ influences the bound through the constant $L^\lambda$.
% As established in Lemma 12 and 13 from~\cite{YuYu2}, for projection schemes such as Bregman projection and Gauge projection, $L^\lambda$ takes the form
% \begin{align*}
%     L^\lambda = M + \frac{1}{\lambda^2} C_{\K},
% \end{align*}
% where the universal constant $C_{\K}>0$ depends on the specific choice of the projection operator.
% }
% Let me know if you want to integrate this into a longer proof or paragraph!
% We note that the tuning parameter $\lambda$ will enter the bound through $L^\lambda$. 
% : $L=M+\frac{1}{\lambda^2}C$ where $M$ is the smoothness constant of $f$, i,e.
% $$
% \| f(\btheta)-f(\btheta') \| \leq M\|\btheta-\btheta'\|, \quad \btheta,\btheta' \in \mathbb{R}^d,
% $$
% and $C$ depends on the choice of the projection operator.}
% \begin{assumption}[$m$-strongly convex]
% \label{asm:strong_c}
% For $\btheta,\btheta' \in \mathbb{R}^p$, the potential $f:\mathbb{R}^p \rightarrow \mathbb{R}$
% is $m$-strongly convex, i.e
% $$
% f(\btheta') \geqslant f(\btheta) + \langle \nabla f(\btheta),\btheta'-\btheta \rangle + \frac{m}{2}\|\btheta-\btheta'\|^2.
% $$
% \end{assumption}
% \red{unify the assumption of bounded derivative of the Hessian?}
Furthermore, we require that $U^\lambda$ satisfies the following assumption.
\begin{assumption}[Lipschitz Hessian]
\label{asm:three}
Assume both the function $f:\R^p\to\R$ and the surrogate potential $U^{\lambda}:\mathbb{R}^p \rightarrow \mathbb{R}^p$ are three times differentiable. Suppose there exist constants $L_1, M_1^\lambda > 0$ such that, for any $\btheta \in \mathbb{R}^p$ and arbitrary $w_1,w_2 \in \mathbb{R}^p$,
$$
\| H_f'(\btheta)[w_1,w_2] \| \leqslant L_1 \|w_1\|\|w_2\|,\qquad \| H_\lambda'(\btheta)[w_1,w_2] \| \leqslant M_1^\lambda \|w_1\|\|w_2\|,
$$
where $H_f:\mathbb{R}^p \rightarrow \mathbb{R}^{p \times p}$ and $H_\lambda:\mathbb{R}^p \rightarrow \mathbb{R}^{p \times p}$ denotes the Hessian of $f$ and $U^{\lambda}$, respectively.
\end{assumption}
We note that the constant $M_1^\lambda$ depends on the geometry of the constraint set $\K$, and and the parameter $\lambda$ enters the bound through $M_1^\lambda.$
Consider Example \ref{ex:breg} with $Q = I$, which corresponds to the Euclidean projection. In this case, the distance function simplifies to
\begin{align*}
d_{\K}(\btheta)=\|\btheta-\proj_{\K}^B(\btheta)\|^2_2.
\end{align*}

\begin{example} [Ellipsoid]
\label{ex:ellip}
Consider the ellipsoidal constraint
\begin{align*}
\K=\{\btheta\in\mathbb{R}^p: \btheta^\top A\btheta\leqslant 1\},~~~A=A^\top \succ 0\,,
\end{align*}
where the matrix $A$ has minimum and maximum eigenvalues denoted by 
 $\lambda_{min}$ and $\lambda_{max}$, respectively.
Then we have the following constant $$M_1^\lambda=L_1+\frac{4}{\lambda^2}\Big(\frac{\lambda_{\max}}{\lambda_{\min}}\Big)^3.$$
\end{example}

\begin{example}[$\ell_q$-ball with $q>2$]
\label{ex:lq}
Consider the constraint
\begin{align*}
\K=\{\btheta\in\mathbb{R}^p: \|\btheta\|_q\leqslant 1\}
\end{align*}
Then we have the following constant $$M_1^\lambda=L_1+\frac{8}{\lambda^2}p^{3/2}(q-1)^2.$$
\end{example}

% \begin{example}[Superlevel Set??]
    
% \end{example}

\subsection{Constrained Splitting Schemes}
%in \cite{sanz2021wasserstein}}

Let us introduce our first constrained algorithm which we refer to as \textit{constrained-UBU}, or abbreviated to CUBU. 
The form of the splitting is in the same spirit as UBU~\cite{sanz2021wasserstein}, with the key difference of the $\mathcal{B}$-operator, which is defined as
\begin{equation}\label{eq:Bdef2}
\mathcal{B}^{\sf{c}}(\bvartheta,\bv,h) = (\bvartheta,\bv - h\nabla U^{\lambda}(\bvartheta)),
\end{equation}
for $h>0$, which includes the potential $ U^{\lambda}(\bvartheta)$ given as \eqref{eq:Udef}, and where the $\mathcal{U}$ operator remains the same as before, i.e.
\begin{equation*}
\begin{split}
\mathcal{U}(\bvartheta,\bv,h/2,\xi^{(1)},\xi^{(2)}) &= \Big(\bvartheta + \frac{1-\eta}{\gamma} \bv + \sqrt{\frac{2}{\gamma}}\left(\mathcal{Z}^{(1)}\left(h/2,\xi^{(1)}\right) - \mathcal{Z}^{(2)}\left(h/2,\xi^{(1)},\xi^{(2)}\right) \right),\\
& \eta \bv + \sqrt{2\gamma}\mathcal{Z}^{(2)}\left(h/2,\xi^{(1)},\xi^{(2)}\right)\Big),
\end{split}
\end{equation*}
with similar definitions of $\mathcal{Z}^{(1)}$ and $\mathcal{Z}^{(2)}$ provided through Eqn. \eqref{eq:Z12def}.
The acceleration of UBU requires the surrogate function
$U^\lambda$ to be smooth and strongly convex. This requirement is satisfied under Assumption~\ref{asm:dK}, provided that $f$ is strongly convex and smooth.
The bounded third derivative condition, as stated in Assumption 3 of~\cite{sanz2021wasserstein}, requires an additional assumption on the boundary of the constraint set $\K.$
% \begin{assumption}
% Give a general assumption on $\K$ and $d_{\K}$, along with substantiation?
% \end{assumption}
Algorithm \ref{alg:SG-CUBU} in the appendix provides an algorithmic form of the CUBU.

To establish the convergence of UBU under constraints, we need the following results by Theorems 23 and 25 from~\cite{sanz2021wasserstein}. 
\\
We now extend the existing Wasserstein convergence result of UBU to CUBU, provided below. 
\begin{theorem}
\label{thm:CUBU_comp}
Under Assumptions \ref{asm:smooth} - \ref{asm:three},(Assumption 1-3 in~\cite{sanz2021wasserstein}), and setting $\gamma=2,$ it holds for a small $h>0$ that
\begin{align*}
\wstwo(\nu^\lambda,\nu_n^{\sf UBU})
\leqslant \Big(1-\frac{mh}{3M^\lambda}\Big)^n \wstwo(\nu^\lambda,\nu_0^{\sf UBU})+\Big(\frac{1}{\sqrt{M^\lambda}}+C_1\frac{M_1^\lambda}{(M^\lambda)^2}\Big)\kappa\sqrt{p}h^2\,.
\end{align*}
where $C_1$ is a universal constant. The exact notion of Wasserstein convergence, and the $\wstwo$ metric will be provided in Section \ref{subsec:wass}.
\end{theorem}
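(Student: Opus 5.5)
The proof will be a direct reduction to the unconstrained UBU analysis of \cite{sanz2021wasserstein}, applied with $U=U^\lambda$. The key observation is that CUBU is exactly UBU run on the surrogate potential $U^\lambda$: the $\mathcal{U}$ operator is unchanged and $\mathcal{B}^{\sf c}$ is the $\mathcal{B}$ operator \eqref{eq:Bdef} with $\nabla U$ replaced by $\nabla U^\lambda$. Thus $\nu_n^{\sf UBU}$ is the law of the $n$-th iterate of the UBU kernel $P_h$ for the target $\nu^\lambda$, and it suffices to verify Assumptions 1--3 of \cite{sanz2021wasserstein} for $U^\lambda$ and then track the constants.

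\textbf{Verifying the hypotheses.} First, by Assumptions \ref{asm:radius}--\ref{asm:smooth} and Lemmas 12--13 of \cite{YuYu2}, $U^\lambda$ is $m$-strongly convex and $M^\lambda$-smooth with $M^\lambda=L+C_{\mathrm{proj}}/\lambda^2$. Second, Assumption \ref{asm:three} provides the bounded third derivative of $U^\lambda$ with constant $M_1^\lambda$. Together these are the three hypotheses of \cite{sanz2021wasserstein}. The minimiser hypothesis (Assumption \ref{asm:f}), together with $\ell^\lambda_\K=0$ on $\K\ni 0$, ensures that $U^\lambda$ attains its minimum at the origin. That paper uses this only for moment bounds on $\nu^\lambda$, giving $\E_{\nu^\lambda}\|\btheta\|^2\lesssim p/m$.

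\textbf{Contraction.} I will invoke the contraction result (Theorem 23 there). In the rescaled setting where the smoothness constant is normalised to one and $\gamma=2$, UBU with $h$ small is a contraction in a modified Euclidean norm with rate $1-mh/(3M^\lambda)$. Undoing the time rescaling $t\mapsto \sqrt{M^\lambda}\,t$ converts the nondimensional step into $h$ and the rate into $m/M^\lambda$. The equivalence constants between the twisted norm and the Euclidean norm are absolute for $\gamma=2$, so they can be absorbed, or the statement can be read in that norm as in \cite{sanz2021wasserstein}. Coupling a chain started at $\nu_0^{\sf UBU}$ with one started at the stationary law of the continuous dynamics, and iterating, gives the geometric term.

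\textbf{Bias.} For the bias I will use the global strong error result (Theorem 25 there). It gives a local error of order $h^3$ in $L^2$, which accumulates over $\sim M^\lambda/(mh)$ steps via the contraction into a global $\wstwo$ bias of order $\kappa h^2$, with $\kappa=M^\lambda/m$. The local error splits into two parts. The first involves the Hessian of $U^\lambda$ and the velocity moments, contributing $\sqrt{p}/\sqrt{M^\lambda}$ after rescaling. The second involves the third derivative through $H_\lambda'[\bv,\bv]$, contributing $M_1^\lambda\sqrt{p}/(M^\lambda)^2$. Summing gives the stated form.

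\textbf{Main obstacle.} I expect the delicate step to be the constant tracking under rescaling, in particular confirming that the dimension dependence of the third-derivative term stays at $\sqrt p$ rather than $p$. Assumption \ref{asm:three} is stated in operator form $\|H_\lambda'[w_1,w_2]\|\le M_1^\lambda\|w_1\|\|w_2\|$, so a naive bound $\E\|\bv\|^2$ would give $p$. To avoid this, I will use Assumption 3 of \cite{sanz2021wasserstein}, where the stronger structure (a bound via $\|w_1\|\|w_2\|$ applied to Gaussian-like velocities) yields $\sqrt{p}$. If that fails, I would accept the weaker dimension dependence and note it. Everything else is a direct substitution.
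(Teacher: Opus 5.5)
Your reduction is the same as the paper's. The paper gives no separate argument for this theorem. It only observes, in the text before the statement, that $U^\lambda$ is $m$-strongly convex and $M^\lambda$-smooth (Lemmas 12--13 of \cite{YuYu2}) and has an $M_1^\lambda$-Lipschitz Hessian (Assumption~\ref{asm:three}), so Theorems 23 and 25 of \cite{sanz2021wasserstein} apply with $U=U^\lambda$.

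\textbf{Where your proposal goes wrong.} The remedy in your last paragraph is circular: Assumption~\ref{asm:three} \emph{is} Assumption 3 of \cite{sanz2021wasserstein}, and it contains no extra structure. The relevant quantity is the $O(h^3)$ conditional mean of the local error. The leading $O(h^{5/2})$ part of the $L^2$ local error is mean zero and accumulates only in quadrature, so ``local error of order $h^3$ in $L^2$'' is also inaccurate. That conditional mean contains a multiple of $h^3H_\lambda'(\btheta)[\bv,\bv]$, with $\bv\sim\mathcal N(0,I_p)$ independent of $\btheta$ along the stationary solution. Writing $\xi\sim\mathcal N(0,I_p)$,
\begin{align*}
\E\|H_\lambda'(\btheta)[\xi,\xi]\|^2=\|\nabla\Delta U^\lambda(\btheta)\|^2+2\sum_{i,j,k}\big(\partial_{ijk}U^\lambda(\btheta)\big)^2 .
\end{align*}
The operator-norm bound therefore gives only $\|H_\lambda'(\btheta)[\xi,\xi]\|_{L^2}\leqslant\sqrt3\,pM_1^\lambda$. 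This order is attained at the level of the tensor. Take $T_{ijk}=\frac13(\delta_{i1}\delta_{jk}+\delta_{j1}\delta_{ik}+\delta_{k1}\delta_{ij})$, the third derivative of $\frac16 x_1\|x\|^2$. It satisfies $\|T[w_1,w_2]\|\leqslant\|w_1\|\|w_2\|$, yet $\E\,T[\xi,\xi]=\frac{p+2}{3}e_1$. Getting $\sqrt p$ requires a stronger hypothesis, namely $\|\sum_{j,k}A_{jk}H_\lambda'(\btheta)[e_j,e_k]\|\leqslant M_1^\lambda\|A\|_F$ for all $A\in\mathbb{R}^{p\times p}$. That hypothesis gives $\E\|H_\lambda'(\btheta)[\xi,\xi]\|^2\leqslant 3p(M_1^\lambda)^2$.

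\textbf{Consequence for the statement.} Your suspicion is well founded. To my knowledge, a later correction note by Paulin and Whalley showed that the $\sqrt d$ dependence claimed in Theorem 25 of \cite{sanz2021wasserstein} under this hypothesis needs a stronger assumption of exactly this type. The paper's own CBAOAB bound (Theorem~\ref{thm:cbaoab}) likewise carries $M_1^\lambda p$ under the same assumption. So the $\sqrt p$ in the $M_1^\lambda$ term comes only from quoting Theorem 25 as published, which is what the paper does. A self-contained verification gives $C_1\frac{M_1^\lambda}{(M^\lambda)^2}\kappa\, p\, h^2$. That is your fallback, and it is strictly weaker than the statement, unless you strengthen Assumption~\ref{asm:three} as above.

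\textbf{Two smaller points.}
\begin{itemize}
\item The norm-equivalence factor from the twisted norm in \cite{sanz2021wasserstein} cannot be ``absorbed'' into the coefficient $1$ in front of $(1-\frac{mh}{3M^\lambda})^n$. You must either state the bound in that norm or carry the factor.
\item The statement is consistent only if $h$ and $\gamma=2$ are the rescaled quantities, $h=\sqrt{M^\lambda}\,h_{\rm phys}$ and $\gamma_{\rm phys}=2\sqrt{M^\lambda}$. So you undo only the spatial rescaling, which produces the factors $1/\sqrt{M^\lambda}$ and $M_1^\lambda/(M^\lambda)^2$. Undoing the time rescaling would instead turn the rate into $1-mh/(3\sqrt{M^\lambda})$.
\end{itemize}
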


%\subsection{Constrained-BAOAB (CBAOAB) scheme}

%\begin{center}
%\begin{tabular}{ |c|c|c| } 
%\hline
%\textbf{Algorithm} & \textbf{Convergence rate} & \textbf{step-size restriction} \\
%\hline
% Stochastic Euler Scheme & $1$ & $1$ & \cite{sanz2021wasserstein}\\
%BAOAB & $\mathcal{O}(mh^2/(1-\eta))$ & $\mathcal{O}(1/\sqrt{M})$ \\ 
%UBU  & $\mathcal{O}(mh/\gamma)$ & $\mathcal{O}(1/\gamma)$  \\
%SG-BAOAB  & $\mathcal{O}\big(\frac{h^2m}{4(1-\eta)}-5h^2 C_{G}(\frac{\eta}{M}+\frac{h^2}{4})\big)$ & $\mathcal{O}(1/\sqrt{M})$  \\
%SG-UBU  & $\mathcal{O}(\frac{hm}{4\gamma}-\frac{5h^2 C_{G}}{M})$ & $\mathcal{O}(1/\gamma)$  \\
%\hline
%\end{tabular}
%\end{center}

Our final constrained splitting scheme we introduce is the \textit{constrained-BAOAB}, abbreviated to CBAOAB. In this setup, we now modify our original mappings as
now
\begin{equation}
\label{eq:CBAO}
\mathcal{B} = (\bvartheta,\bv-h\nabla U^{\lambda}(\bvartheta)), \quad \mathcal{A} = (\bvartheta+h \bv,\bv), \quad \mathcal{O} = (\bvartheta,\eta^2 \bv + \sqrt{1-\eta^4}
\xi),
\end{equation}
for $\lambda>0$, arising from Eqn. \eqref{eq:new_pot}. 
Algorithm \ref{alg:SG-CBAOAB} in the appendix provides an algorithmic form of the CBAOAB. 
\begin{remark}
We remark there are other existing splitting schemes, based on the BAO family for the KLD. These include OBABO and ABOBA.
Our reason for considering only BAOAB, is related to its improvement in terms of the weak error (also asymptotic bias), which is 
of order 2, and that it has no bias when aiming to sample from Gaussian targets. Extending this, provides no additional difficulties
but would prolong the paper with very similar tedious calculations. Therefore we consider this for potential future work.

\end{remark}

\subsection{Extension to Stochastic Gradients}
For many practical scenarios, the exact computation of the potential can be difficult, or time-consuming, especially for high-dimensions. As a result, we consider the use of stochastic gradients (SG), which instead do not require the full evaluation of $\nabla U^{\lambda}(\btheta)$. The form of our SG-based methods we consider, are based on the notion of mini-batching.
In order to consider SG variants, we require a definition of our inexact gradient, which is given below.
\begin{definition} \label{def:stochastic_gradient}
A \textit{stochastic gradient approximation} of a potential $\uu$ is defined by a function $\mathcal{G}:\R^p \times \Omega \to \R^p$ and a probability distribution $\rho$ on a Polish space $\Omega$, such that for every $\btheta\in \R^p$, $\mathcal{G}(\btheta, \cdot)$ is measurable on $(\Omega,\mathcal{F})$, and for $\omega\sim \rho$,
\[\E(\mathcal{G}(\btheta,\omega)) = \nabla \uu(\btheta).\]
The function $\mathcal{G}$ and the distribution $\rho$  together define the stochastic gradient, which we denote as  $(\mathcal{G}, \rho)$.
\end{definition}

As we are considering a stochastic gradient, one possibility is to directly add noise to the gradient, which is written as $(\mathcal{G}, \rho) = \nabla f(\btheta) + \xi$, where $\xi \sim \mathcal{N}(0,1)$ is a random normal perturbation. Then, we can place the following assumptions on the stochastic gradient.

%\red{do we need to disentangle the assumptions 2 and 3? since only CKLMC uses the assumption 2 and CBAOAB uses a stronger version of assumption 3}

\begin{assumption}
\label{asm:stoch_grad}
Let us assume we have a stochastic (or noisy) gradient defined through Definition \ref{def:stochastic_gradient}, then we have the following assumptions.
\begin{itemize}
\item[(i)]  $(\mathcal{G}, \rho)$ is an unbiased estimate of $\nabla f(\btheta)$ (unbiased gradient).
\item[(ii)] $\mathbb{E}\Big[ \Big\|(\mathcal{G}, \rho) - \nabla f(\btheta)\Big\|^2 \Big|\btheta \Big]\leqslant \sigma^2_1L^2(\|\btheta\|^2)$ (bound on the stochastic gradient).
\item[(iii)] $\mathbb{E}\Big[ \Big\|\nabla_{\btheta}(\mathcal{G}, \rho) - \nabla^2 f(\btheta)\Big\|^2 \Big|\btheta \Big]\leqslant \sigma^2_2$ (bound on the second derivative).
\end{itemize}
\end{assumption}
In practice, the potential function $f$ often takes the form
\begin{align*}
f(\btheta)=\sum_{i=1}^n f_i(\btheta)\,.
\end{align*}
Accordingly, the stochastic gradient of $f$ is typically computed as
\begin{align*}
\tilde\nabla f(\btheta)=\frac{1}{b}\sum_{j\in \Omega}f_j(\btheta),
\end{align*}
where $\Omega\subset\{1,\dots,n\}$ is a mini-batch of size $b$.
In this setting, we have $\sigma_1^2=\bigo(1/b)$ and $\sigma_2^2=\bigo(1/b)$. 

Stochastic gradient versions of BAOAB and UBU have been analyzed, and discussed in a series of works, i.e. \cite{supp:UBUBU, leimkuhler2023contractionb}, which discuss contraction rates and non-asymptotic convergence in the Wasserstein metric. Later we will use some of these results, to establish similar results in the constrained setting. In our setup, we provide two additional algorithms, the SG-CUBU and SG-BAOAB, which are also provided in Algorithms \ref{alg:SG-CUBU} and  \ref{alg:SG-CBAOAB}. We do not state the full algorithms for for the full-gradient versions, to avoid repetition, but follow very similarly. 

We briefly remark that Assumption \ref{asm:stoch_grad} contains sub-assumptions that will be specific to different SG algorithms. %In our convergence analysis, we will specify what part of Assumption \ref{asm:stoch_grad} is required.

\subsection{Convergence in Wasserstein Distance}
\label{subsec:wass}

In order to verify our error bounds, we require a sufficient metric. We will consider the Wasserstein distance, which is a popular metric to show convergence of sampling algorithms. In particular, we will focus on Wasserstein contraction \cite{Eberle2019, dalalyan2017theoretical}. The key underlying idea is that if one shows contraction between two measures, then this implies a unique invariant measure and convergence towards it. The general setup of these results follow from 
$$
\wsq(\nu_0 P^n_h, \nu) \leq \underbrace{\wsq(\nu_0 P^n_h, \nu^\lambda)}_{\mathrm{sampling \ error}}+\underbrace{\wsq(\nu^\lambda, \nu)}_{\mathrm{approxiamtion \ error}},
$$
where $\nu_0$ is some initial measure, $P^n_h$ is a Markov transition kernel based on some algorithm with step-size $h>0$ after $n$-steps, and $\nu$ is the target measure. Providing bounds of the form, translates to how many $n$-steps are required to achieve an accuracy of $\varepsilon$, for $\varepsilon>0$. Below we provide the definition of the Wasserstein distance, which is required, before stating our first result which is a proposition.

%\red{Do we need the following definition of weighted Euclidean norm and $W_{q,a,b}$?}
%\begin{definition}[Weighted Euclidean norm]
%\label{def:weightednorm}
%For $z = (\btheta,\bv) \in \R^{2p}$ the weighted Euclidean norm of $z$ is defined by
%\[
%\left|\left| z \right|\right|^{2}_{a,b} = \left|\left| \btheta \right|\right|^{2} + 2b \left\langle \btheta,\bv \right\rangle + a \left|\left| \bv \right|\right|^{2},
%\]
%for $a,b > 0$ with $b^{2}<a$. 
%\end{definition}

%\begin{remark}
%Using the assumption $b^2<a$, we can show that this is equivalent to the Euclidean norm on $\R^{2p}$. Under the condition $b^2\le a/4$, we have
%\begin{align} \nonumber
%\frac{1}{2}\min(a,1) \|z\|^2&\leq \frac{1}{2}||z||^{2}_{a,0} \leq ||z||^{2}_{a,b} \leq \frac{3}{2}||z||^{2}_{a,0}\\&\leq \frac{3}{2}\max(a,1) \|z\|^2.\label{eq:normequiv}\end{align}
%\end{remark}

\begin{definition}[$q$-Wasserstein distance]
\label{def:wass}
Let us define $\mathcal{P}_q(\R^{2p})$
to be the set of probability measures which have $q$-th moment for $q\in[1,\infty)$ (i.e. $\E (\|Z\|^q)<\infty$). Then the $q$-Wasserstein distance, between two measures $\nu,\nu'\in \mathcal{P}_q(\R^{2p})$, is defined as
\begin{equation}
\label{eq:wass_dist}
\wsq(\nu,\nu') =\Big(\inf_{\xi \in \Gamma(\nu,\nu')} \int_{\R^{p} \times \R^{p}} \|z_1 - z_2\|^q_{2}d\xi(z_1,z_2)\Big)^{1/q},
\end{equation}
where $\| \cdot \|_{2}$ is the norm we consider, and $\Gamma(\nu,\nu')$  is the set of measures with respective marginals of $\nu$ and $\nu'$, where the infimum is over all couplings $\xi$.
\end{definition}

We now state our first result, which is an upper Wasserstein bound between our target measure $\nu$ and its corresponding approximation $\nu^\lambda$, based on the splitting schemes we have discussed. Such a result demonstrates that $\wsq(\nu,\nu^\lambda) \rightarrow 0$ as $\lambda \rightarrow 0$.

\begin{proposition}
\label{prop:w2_tight}
% Under Assumptions~\ref{asm:radius}-\ref{asm:dK},
% for any $q\geqslant 1$ and when $0<\lambda<\frac{c_1^{1/2}r}{p+q},$ it holds that
% \begin{align*}
% \wsq(\nu,\nu^\lambda)=(R/r)^p\times \begin{cases}
%         \bigo(\lambda p^{1/p+1/q}), &1/p+1/q>1\\
%         \bigo(\lambda\log^{1/q}(\frac{1}{\lambda})p), &1/p+1/q=1\\
%         \bigo(\lambda^{1/p+1/q}p), &1/p+1/q<1\,.
%     \end{cases}
% \end{align*}
Under Assumptions~\ref{asm:radius}-\ref{asm:dK},
for any $q\geqslant 1$ and any $\lambda$ satisfying
\begin{align*}
0< \lambda <\Bigl( \frac{\sqrt{c_1}\,r}{p+q} \bigwedge \frac{\sqrt{c_1}\,r\,e^{\operatorname{osc}_{\K}(f)}}{3\sqrt{\pi}\,\operatorname{Vol}(\K)\,p}\Bigr),
\end{align*}
it holds that
\begin{align*}
    \wsq(\nu,\nu^\lambda)\leqslant C(p,q)\cdot\begin{cases}
        \lambda, &1/p+1/q>1,\\
        \lambda\log^{1/q}(\frac{1}{\lambda}), &1/p+1/q=1,\\
        \lambda^{1/p+1/q}, &1/p+1/q<1
    \end{cases}
\end{align*}
where $$C(p,q)=C_0\left[p \cdot \frac{e^{4\operatorname{osc}_{\K}(f)}}{c_1^{1/2}}\cdot\left(\frac{\max(R,1)}{\min(r,1)}\right)^{(2p+1)}\right]^{1/p+1/q},$$ and $C_0>0$ is a universal constant, and recalling $\operatorname{osc}_{\K}(f)$ is defined in Subsection \ref{sec:not}.
\end{proposition}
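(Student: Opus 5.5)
Since $\nu$ is supported on $\K$ and $\nu^\lambda$ agrees with $\nu$ on $\K$ up to normalization, I will bound $\wsq(\nu,\nu^\lambda)$ by an explicit transport plan rather than by a functional inequality. Write $Z=\int_{\K}e^{-f}$ and $Z^\lambda=\int_{\R^p}e^{-U^\lambda}=Z+\int_{\K^c}e^{-f-d_{\K}/(2\lambda^2)}$, so that $\nu^\lambda|_{\K}=(Z/Z^\lambda)\,\nu$. The coupling keeps the mass $Z/Z^\lambda$ in place and sends the outside mass $\nu^\lambda|_{\K^c}$ to $\K$, for instance through $\proj_{\K}$ after renormalizing (or through a product coupling with $\nu$). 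Then
\begin{align*}
\wsq^q(\nu,\nu^\lambda)\leqslant \int_{\K^c}\sup_{\btheta'\in\K}\|\btheta-\btheta'\|^q\,\nu^\lambda(\rmd\btheta)\lesssim 2^{q}\int_{\K^c}\big(\|\btheta-\proj_{\K}(\btheta)\|^q+R^q\big)\,\nu^\lambda(\rmd\btheta).
\end{align*}
So it suffices to estimate the outer mass $\nu^\lambda(\K^c)$ and the outer $q$-th moment of the distance to $\K$.

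To estimate these, I will use Assumption~\ref{asm:radius} and star-shapedness around the origin, working in "gauge polar" coordinates $\btheta=t\,y$ with $y\in\partial\K$ and $t\geqslant 1$. For $\btheta=ty$ with $t>1$, the Euclidean distance to $\K$ is at least $(t-1)r$, by the inner ball and convexity: the supporting hyperplane at $y$ lies at distance $\geqslant r$ from $0$. Assumption~\ref{asm:dK} then gives $d_{\K}(\btheta)\geqslant c_1 r^2(t-1)^2$. Also, $\|\btheta-\proj_{\K}(\btheta)\|\leqslant \|\btheta\|\leqslant tR$, or a comparable bound for the projection used. For the factor $e^{-f}$ outside $\K$, Assumption~\ref{asm:f} (minimum at $0$) combined with convexity along rays gives $f(ty)\geqslant f(y)\geqslant \inf_{\K}f$, since $f$ is nondecreasing along rays from the minimizer. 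The volume element is $\rmd\btheta=t^{p-1}\rmd t\,\rmd\sigma_{\K}(y)$ with total cone measure $p\,\mathrm{Vol}(\K)$. Together with $Z\geqslant \mathrm{Vol}(\K)e^{-\sup_{\K}f}$, this yields
\begin{align*}
\int_{\K^c}\|\btheta-\proj_{\K}\btheta\|^s\,\nu^\lambda(\rmd\btheta)\lesssim p\,e^{\operatorname{osc}_{\K}(f)}R^s\int_1^\infty t^{p-1+s}e^{-c_1r^2(t-1)^2/(2\lambda^2)}\,\rmd t .
\end{align*}
Substituting $t=1+\lambda u/(\sqrt{c_1}r)$ produces a factor $\lambda/(\sqrt{c_1}r)$ times a Gaussian moment of $(1+\lambda u/\sqrt{c_1}r)^{p-1+s}$. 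The condition $\lambda<\sqrt{c_1}r/(p+q)$ keeps this polynomial factor $O(1)$, since $(1+x/(p+q))^{p+q}\leqslant e^{x}$ is absorbed by the Gaussian. The second condition on $\lambda$ ensures $Z^\lambda\leqslant 2Z$, so the normalization is harmless. This gives $\nu^\lambda(\K^c)\lesssim C\lambda$ and outer moments $\lesssim C\lambda^{1+s}$.

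The crude estimate above only yields $\wsq\lesssim \lambda^{1/q}$, so the claimed case split needs a sharper transport. I expect this to be the main obstacle. Instead of moving mass to arbitrary points of $\K$, I will rescale radially: in gauge coordinates, transport $\nu^\lambda$ onto $\nu$ by a monotone map $t\mapsto T(t)$ along each ray, matching the radial marginal densities. Deficit mass of order $\lambda$ must then be pushed inward through a thin shell. The displacement of a point at gauge radius $s<1$ is then about the mass of $\nu^\lambda$ beyond $s$ divided by the local density, roughly $\lambda/s^{p-1}$ near the boundary. Integrating $|{\rm displacement}|^q s^{p-1}\,\rmd s$, and capping the displacement at $R$, gives the integral $\int_0^1\min(R,\lambda s^{1-p})^q s^{p-1}\,\rmd s$. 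This is where the trichotomy comes from. The integral is $O(\lambda^q)$ when $(p-1)q<p$, i.e.\ $1/p+1/q>1$. It picks up a $\log(1/\lambda)$ at equality. Otherwise it is dominated by the region $s\lesssim\lambda^{1/(p-1)}$, giving $\lambda^{q(1/p+1/q)}$. Taking $q$-th roots yields the three cases. Finally, I will track the ratios $R/r$, $e^{\operatorname{osc}_{\K}(f)}$ (entering up to fourth power through the density comparisons in the radial matching and in $Z^\lambda/Z$), $c_1^{-1/2}$ and $p$. Raising their product to the power $1/p+1/q$ matches the stated form of $C(p,q)$, with $(\max(R,1)/\min(r,1))^{2p+1}$ coming from comparing cone densities $t^{p-1}$ across rays of different lengths.
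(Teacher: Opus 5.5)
Your first coupling is fine and, as you note, it only gives $\wsq(\nu,\nu^\lambda)\lesssim\lambda^{1/q}$. The whole content of the proposition is therefore in the sharpening step, and that step does not work as written. There are three problems.

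\textbf{The ray-by-ray map is not a valid coupling.} A monotone map along each ray can push $\nu^\lambda$ onto $\nu$ only if the two measures put the same mass on every ray, and in general they do not. The measure $\nu^\lambda$ gives the ray through $y\in\partial\K$ mass proportional to $\int_0^1 e^{-f(ty)}t^{p-1}\rmd t+\int_1^\infty e^{-U^\lambda(ty)}t^{p-1}\rmd t$. The ratio of the outer term to the inner one depends on $y$, through $f$, $d_{\K}$ and $\langle y,n(y)\rangle$ with $n(y)$ the outer normal. By contrast, the deficit of $\nu^\lambda$ inside $\K$ is the constant fraction $1-Z/Z^\lambda$. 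A single radial map shared by all rays, as ``matching the radial marginal densities'' suggests, does not push $\nu^\lambda$ to $\nu$ either. Repairing the angular mismatch with your crude coupling costs $\wsq^q\asymp\lambda\,\mathrm{diam}(\K)^q$, which puts you back at $\lambda^{1/q}$.

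\textbf{The displacement estimate is miscomputed.} Since $\nu^\lambda|_{\K}=(Z/Z^\lambda)\nu$, the radial distribution functions at gauge radius $s<1$ differ by $(1-Z/Z^\lambda)\,\nu(\{g_{\K}\leqslant s\})$. This is of order $\lambda s^p$, not $\lambda$. Dividing by the local density $\approx s^{p-1}$ gives displacements of order $\lambda s$, not $\lambda s^{1-p}$. A correct version of your picture therefore produces no phase transition at all.

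\textbf{The integral does not give the claimed exponent.} Even taking your integral at face value, for $1/p+1/q<1$ both the capped region $s\lesssim\lambda^{1/(p-1)}$ and its complement contribute $\lambda^{p/(p-1)}$. You get $\wsq\asymp\lambda^{p/(q(p-1))}$, which is strictly weaker than $\lambda^{1/p+1/q}$ exactly when $1/p+1/q<1$; for $p=3$, $q=2$ this is $\lambda^{3/4}$ against $\lambda^{5/6}$. That is the regime $q=2$, $p>2$ used in every later theorem. Your remarks on $C(p,q)$ are read off from the statement rather than derived.

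The paper gives no proof of this proposition; it points to the analogous result in \cite{YuYu1}. The trichotomy at $1/p+1/q=1$ is the signature of Sobolev embedding, and that is the idea your argument is missing. Suppose one measure has density at least $a$ on a convex domain. Interpolating linearly along $(1-t)\mu+t\nu$ in the Benamou--Brenier formulation gives $\wsq(\mu,\nu)\leqslant q\,a^{1/q-1}\|\mu-\nu\|_{\dot W^{-1,q}}$. By duality with $\dot W^{1,q'}$, the three regimes separate as follows.
\begin{itemize}
\item Morrey's inequality ($q'>p$, i.e.\ $1/p+1/q>1$) bounds this by the $L^1$ norm of the density difference.
\item Trudinger's inequality ($q'=p$) costs a logarithmic factor, which here is $\log^{1/q}(1/\lambda)$.
\item The Sobolev--Poincar\'e inequality ($q'<p$) bounds it by the $L^r$ norm with $1/r=1/p+1/q$.
\end{itemize}
Your shell computation supplies the right input. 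The density difference is bounded by roughly $e^{\operatorname{osc}_{\K}(f)}/\mathrm{Vol}(\K)$ and has $L^1$ norm $\lesssim p\,e^{\operatorname{osc}_{\K}(f)}\lambda/(\sqrt{c_1}\,r)$. Interpolating between these yields $\lambda^{1/p+1/q}$ with a prefactor raised to the power $1/p+1/q$, which is exactly the shape of $C(p,q)$.

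The lower-density requirement needs some care, since $\nu$ vanishes off $\K$. One option is to first move the outer mass of $\nu^\lambda$ into a thin layer inside $\K$, which costs only $O(\lambda^{1+1/q})$, and then take $\mu=\nu$ on $\K$. Alternatively, if you can correct the angular marginals at cost $O(\lambda)$ rather than $O(\lambda^{1/q})$, a properly computed radial argument gives $\wsq=O(\lambda)$, which dominates all three cases up to constants. But that correction needs essentially the same kind of global estimate.
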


If we analyze Proposition \ref{prop:w2_tight} above, we have a phase transition of the Wasserstein bound, that depends on both the dimension $p$ and and the order $q$. The above result is very similar to that of \cite{YuYu1}, with the modification being from the altered constant $C(p,q)$, and the tuning parameter $\lambda>0$. We will use this proposition for our convergence analysis in Section \ref{sec:main}.

%\begin{remark}
%\textcolor{red}{Maybe include some discussion on the proposition in general. Some explanation maybe is needed, also related to the oscillation function $e^{osc}$, and how/why it differs to the Lu's previous result in midpoint paper.}
%\end{remark}

\begin{remark}
Before we discuss our main results, we would like to refer to Table \ref{table:summary}, which provides a complexity analysis of our different algorithms. By CLMC (PULMC) we refer to the OLD, which is discretized by the Euler-Maruyama scheme, and similarly for CKLMC using the KLD. Complexity results of both have been proven, and we state them in the table as a point of comparison with our proposed splitting schemes. 
\end{remark}

\section{Main Results}
\label{sec:main}
In this section we provide our main results of the introduced constrained algorithms, from the previous section. In particular we will provide a series of convergence results based on the Wasserstein complexity defined in Subsection \ref{subsec:wass}. This will be done for CUBU, CBAOAB, their SG versions and also the SG-CKLMC, which utilizes the Euler-Maruyama discretization of the KLD. Finally, we provide a complexity analysis, in terms of the number of iterations to acquire a certain level of accuracy for each algorithm. We begin by first discussing CUBU. All proofs will be deferred to the Appendix. 

\subsection{Convergence Analysis of  CUBU}

\begin{theorem}[Convergence of CUBU]
\label{thm:CUBU}
% Under Assumptions \ref{asm:radius} - \ref{asm:three}, and setting $\gamma=2,$ it holds for a small $h>0$  and $p> 2$ that
Under Assumptions \ref{asm:radius}-\ref{asm:three}, set $\gamma=2$. Suppose that $\lambda$ is sufficiently small and satisfies
\begin{align*}
0< \lambda < \Bigl( \frac{\sqrt{c_1}\,r}{p+2} \;\bigwedge\; \frac{\sqrt{c_1}\,r\,e^{\operatorname{osc}_{\K}(f)}}{3\sqrt{\pi}\,\operatorname{Vol}(\K)\,p}\Bigr).
\end{align*}
Then, for any sufficiently small step size $h>0$ and any $p>2$, it holds that
\begin{align*}
\wstwo(\nu,\nu_n^{\sf UBU})
\leqslant e^{-\frac{mhn}{3M^\lambda}} \wstwo(\nu,\nu_0^{\sf UBU})+\Big(\frac{1}{\sqrt{M^\lambda}}+C_1\frac{M^\lambda_1}{(M^\lambda)^2}\Big)\kappa\sqrt{p}h^2
+ C(p,2)\lambda^{\frac{1}{2}+\frac{1}{p}}\,.
\end{align*}
Moreover, when the initial point $\btheta_0^{\sf UBU}=\mathbf{0}$  and $\bv_0^{\sf UBU} \sim \mathcal{N}_p(0, I_p)$, it holds that
\begin{align*}
\wsone(\nu,\nu_n^{\sf UBU})
\leqslant e^{-\frac{mhn}{3M^\lambda}} \sqrt{\frac{p}{m}}+\Big(\frac{1}{\sqrt{M^\lambda}}+C_1\frac{M^\lambda_1}{(M^\lambda)^2}\Big)\kappa\sqrt{p}h^2
+C(p,1)\lambda \,.
\end{align*}
Here, $\kappa=M^\lambda/m, C_1>0$ is a universal constant specified in Theorem~\ref{thm:CUBU_comp}, and $C(p,q)$ is defined in Proposition~\ref{prop:w2_tight}.

\end{theorem}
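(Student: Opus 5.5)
The argument is a triangle inequality that separates sampling error from approximation error. Writing $\nu$ for the constrained target and $\nu^\lambda$ for the surrogate, for $q\in\{1,2\}$ we use
\begin{align*}
\wsq(\nu,\nu_n^{\sf UBU})\leqslant \wsq(\nu^\lambda,\nu_n^{\sf UBU})+\wsq(\nu,\nu^\lambda),
\end{align*}
and bound the two terms separately: the first by Theorem~\ref{thm:CUBU_comp}, the second by Proposition~\ref{prop:w2_tight}.

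\textbf{The $\wstwo$ bound.} By Assumptions~\ref{asm:dK}--\ref{asm:smooth}, $U^\lambda$ is $m$-strongly convex and $M^\lambda$-smooth, and Assumption~\ref{asm:three} supplies the Hessian-Lipschitz constant $M_1^\lambda$. Hence Theorem~\ref{thm:CUBU_comp} applies with $\gamma=2$ and small $h$. Using $(1-x)^n\leqslant e^{-nx}$ with $x=mh/(3M^\lambda)$ turns its contraction factor into $e^{-mhn/(3M^\lambda)}$. The initialization term still contains $\wstwo(\nu^\lambda,\nu_0^{\sf UBU})$. We replace it by $\wstwo(\nu,\nu_0^{\sf UBU})$ via $\wstwo(\nu^\lambda,\nu_0^{\sf UBU})\leqslant \wstwo(\nu,\nu_0^{\sf UBU})+\wstwo(\nu,\nu^\lambda)$. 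The extra $\wstwo(\nu,\nu^\lambda)$ carries the factor $e^{-mhn/(3M^\lambda)}\leqslant 1$, so it merges with the approximation error at the cost of a factor $2$, which is absorbed into $C_0$.

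For the approximation error we take $q=2$ in Proposition~\ref{prop:w2_tight}. The hypothesis on $\lambda$ in the theorem is exactly the proposition's hypothesis with $q=2$. Since $p>2$, we have $1/p+1/2<1$, which puts us in the third case and gives $C(p,2)\lambda^{1/2+1/p}$.

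\textbf{The $\wsone$ bound.} We use $\wsone\leqslant\wstwo$ for the sampling part. Taking $q=1$ in Proposition~\ref{prop:w2_tight} gives $1/p+1$, which is greater than $1$, so the first case yields $C(p,1)\lambda$.

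The condition on $\lambda$ in the theorem uses $p+2$, whereas the $q=1$ case of the proposition needs $\lambda<\sqrt{c_1}r/(p+1)$. The stated condition is stronger, so it suffices. Note that for $q=1$ the approximation term enters as $\wsone(\nu,\nu^\lambda)$ directly, with no initialization mixing.

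\textbf{Main obstacle: the initialization term.} With $\btheta_0=\mathbf 0$ and $\bv_0\sim\mathcal N(0,I_p)$, we must show $\wstwo(\nu^\lambda,\nu_0^{\sf UBU})\lesssim\sqrt{p/m}$.

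The velocity marginals coincide, so the synchronous coupling on $\bv$ costs nothing. The position part is $(\E_{\nu^\lambda}\|\btheta\|^2)^{1/2}$.

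Assumption~\ref{asm:f} says $f$ is minimized at $0$, and $0\in\K$ by Assumption~\ref{asm:radius}, so $0$ is also the minimizer of $U^\lambda$. The standard moment bound for $m$-strongly convex potentials, $\E_{\nu^\lambda}\|\btheta-\btheta^*\|^2\leqslant p/m$, then gives exactly $\sqrt{p/m}$. Here $\btheta^*$ is the minimizer, and the bound follows from integrating $\langle\nabla U^\lambda(\btheta),\btheta-\btheta^*\rangle$ by parts.

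One subtlety remains. Applying the $\wsone\leqslant\wstwo$ route to the sampling term also produces the cross term $e^{-mhn/(3M^\lambda)}\wstwo(\nu,\nu^\lambda)$. Either this is absorbed, since $\wstwo(\nu,\nu^\lambda)$ is bounded by $\sqrt{p/m}$-type quantities, or, more cleanly, we bound $\wstwo(\nu^\lambda,\nu_0)$ directly as above without routing through $\nu$. The latter is the approach I would adopt.
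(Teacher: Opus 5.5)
Your proposal is correct and follows the paper's proof: a triangle inequality through $\nu^\lambda$, Theorem~\ref{thm:CUBU_comp} for the sampling error, Proposition~\ref{prop:w2_tight} for the approximation error (third case for $q=2$, first case for $q=1$), and $\wsone\leqslant\wstwo$ for the first-order bound. You are more careful than the paper, which leaves implicit both the bound $\wstwo(\nu^\lambda,\nu_0^{\sf UBU})\leqslant\sqrt{p/m}$ and the extra approximation term created by passing from $\wstwo(\nu^\lambda,\nu_0^{\sf UBU})$ to $\wstwo(\nu,\nu_0^{\sf UBU})$; strictly, with the proposition's fixed $C_0$ that step gives $2C(p,2)\lambda^{1/2+1/p}$ (as the paper itself writes in Theorems~\ref{thm:CUBU_SG} and~\ref{thm:CKLMC_SG}), so ``absorbing into $C_0$'' really means enlarging the universal constant.
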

For both Gauge projection and Euclidean projection, we adopt $M^\lambda=\bigo(1/\lambda^2),M^\lambda_1=\bigo(1/\lambda^2)$ corresponding to Example~\ref{ex:ellip} and Example~\ref{ex:lq}.
The preceding theorem implies the following corollary.
% Combining Theorem 2.2 and Proposition 2.1, when $p>2$, the number of iterations required to achieve $\wstwo(\nu,\nu_n^{\sf UBU})\leqslant \varepsilon$ is $\bigo(\varepsilon^{-\frac{11p+2}{4+2p}})$, and the number of iterations required to achieve $\wsone(\nu,\nu_n^{\sf UBU})\leqslant \varepsilon$ is $\bigo(\varepsilon^{-3})$. This is presented through the below corollary.
\begin{corollary}
\label{cor:cubu}
Let error level $\varepsilon \in (0,1)$ be small.
\begin{itemize}
    \item[(a)] We set $\lambda = \Theta(h^{\frac{4p}{3p+2}})$, and choose $h>0$, with $n \in \mathbb{N}^{+}$, such that
    $$
    h = \mathcal{O}\Big( \varepsilon^{\frac{3p+2}{2p+4}} \Big), \quad  n = \tilde{\Omega}\Big( \varepsilon^{-\frac{11p+2}{4+2p}} \Big),
    $$
    then we have that $\wstwo(\nu_n^{\sf CUBU},\nu)= \tilde{\mathcal{O}}(\varepsilon)$.
    \item[(b)]  We set $\lambda = \Theta(h)$, and choose $h>0$, with $n \in \mathbb{N}^{+}$, such that
    $$
    h = \mathcal{O}( \varepsilon ), \quad  n = \tilde{\Omega}\Big( \varepsilon^{-3} \Big),
    $$
    then we have that $\wsone(\nu_n^{\sf CUBU},\nu)= \tilde{\mathcal{O}}(\varepsilon)$.
\end{itemize}
\end{corollary}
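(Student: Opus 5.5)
Both parts follow by inserting the scalings $M^\lambda=\bigo(\lambda^{-2})$ and $M_1^\lambda=\bigo(\lambda^{-2})$ (from Examples~\ref{ex:ellip} and~\ref{ex:lq}) into the bounds of Theorem~\ref{thm:CUBU}, with $m$ and $p$ treated as fixed and $\kappa=M^\lambda/m=\bigo(\lambda^{-2})$. The discretization term then has the size
\begin{align*}
\Big(\frac{1}{\sqrt{M^\lambda}}+C_1\frac{M^\lambda_1}{(M^\lambda)^2}\Big)\kappa\sqrt{p}\,h^2
=\bigo\big((\lambda+\lambda^2)\lambda^{-2}h^2\big)=\bigo\big(h^2/\lambda\big),
\end{align*}
and the contraction factor is $e^{-mhn/(3M^\lambda)}=e^{-\Theta(hn\lambda^2)}$. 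The plan is to choose $\lambda$ as a power of $h$ that balances $h^2/\lambda$ against the approximation term, take $h$ as large as $\varepsilon$ permits, and then take $n$ large enough that the exponential term drops below $\varepsilon$. We must also check that the resulting $\lambda$ satisfies the smallness condition of Theorem~\ref{thm:CUBU} and that $h$ qualifies as ``sufficiently small''. Both hold once $\varepsilon$ is small, because $\lambda\to0$ as $h\to0$.

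For (a) we use the $\wstwo$ bound. The approximation term there is $C(p,2)\lambda^{1/2+1/p}$, so balancing $h^2/\lambda=\lambda^{(p+2)/(2p)}$ gives $h^2=\lambda^{(3p+2)/(2p)}$, that is, $\lambda=h^{4p/(3p+2)}$. This is exactly the stated choice. With it, both error terms equal $h^{2(p+2)/(3p+2)}$, and setting this equal to $\varepsilon$ yields $h=\varepsilon^{(3p+2)/(2p+4)}$.

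Next, $\wstwo(\nu,\nu_0)$ is bounded by a polynomial in $1/\lambda$ and $p$. For example, starting at $\mathbf 0$ we can use the second moment of $\nu$, which is bounded by $R$ since $\nu$ is supported on $\K$. Hence it suffices to take $hn\lambda^2\gtrsim\log(1/\varepsilon)$, i.e.
\begin{align*}
n=\tilde\Omega\big(h^{-1}\lambda^{-2}\big)=\tilde\Omega\Big(h^{-1-\frac{8p}{3p+2}}\Big)=\tilde\Omega\Big(h^{-\frac{11p+2}{3p+2}}\Big)=\tilde\Omega\Big(\varepsilon^{-\frac{11p+2}{2p+4}}\Big).
\end{align*}
The logarithmic factors coming from $\log(1/\varepsilon)$ and from the initial distance are absorbed into $\tilde\Omega$, and this gives $\wstwo=\tilde\bigo(\varepsilon)$.

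For (b) we use the $\wsone$ bound with the prescribed initialization. The initial term becomes $e^{-\Theta(hn\lambda^2)}\sqrt{p/m}$, and the approximation term is $C(p,1)\lambda$ since $1/p+1/q<1$ fails when $q=1$ (the case $1/p+1>1$). Balancing $h^2/\lambda=\lambda$ gives $\lambda=h$, so both terms are $\bigo(h)$, and $h=\varepsilon$. Then $n=\tilde\Omega(h^{-1}\lambda^{-2})=\tilde\Omega(\varepsilon^{-3})$.

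The only real subtlety is bookkeeping. The constants $C(p,q)$ and the initial Wasserstein distance depend on $p$, $R$, $r$ and $\operatorname{osc}_{\K}(f)$, but these are fixed, so they enter only as multiplicative constants or logarithms. In particular the initial distance in (a) must grow at most polynomially in $1/\lambda$, so that it contributes only a logarithm to $n$. Beyond that the argument is exponent arithmetic.
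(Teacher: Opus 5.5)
Your proposal is correct and follows the paper's route. The paper gives no separate proof; it only says the corollary follows from Theorem~\ref{thm:CUBU} after substituting $M^\lambda, M_1^\lambda=\bigo(\lambda^{-2})$. With that substitution the discretization term is $\bigo(h^2/\lambda)$ and the contraction exponent is $\Theta(hn\lambda^2)$, and your exponent arithmetic correctly recovers the stated $\lambda$, $h$ and $n$ in both parts.

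One caveat, which the paper shares: you argue that $h$ is ``sufficiently small'' because $\lambda\to0$, and this tacitly assumes the step-size threshold in Theorem~\ref{thm:CUBU} does not depend on $\lambda$. If that threshold instead scales like $1/\sqrt{M^\lambda}\asymp\lambda$, as in the UBU analysis behind Theorem~\ref{thm:CUBU_SG} (which needs $h<1/(2\gamma)$ with $\gamma\geqslant\sqrt{8M^\lambda}$), then two things follow. In (a), the choice $\lambda=\Theta(h^{4p/(3p+2)})=o(h)$ for $p>2$ would violate it. In (b), $\lambda=\Theta(h)$ satisfies it only if the constant in $\Theta(h)$ is taken large enough.
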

The number of iterations required by the CUBU algorithm to achieve $\wstwo(\nu,\nu_n^{\sf CUBU})\leqslant \varepsilon$ is of order $\bigo(\varepsilon^{-\frac{11p+2}{4+2p}})$,
When $p > 2$, we note that $\frac{11p+2}{2p+4} \geqslant 5.5 - \frac{10}{2p}$. 
%The implication is that the number of iterations $n$ to achieve $\wsone(\nu_n^{\sf CUBU},\nu)=\tilde{\mathcal{O}}(\varepsilon)$, can be related to $n = \tilde{\Omega}(\varepsilon^{-5.5+10/(2p)})$.
Moreover, the number of iterations required to achieve $\wsone(\nu,\nu_n^{\sf CUBU})\leqslant \varepsilon$ is of order $\bigo(\varepsilon^{-3})$. 
By direct comparison with other methods based on Euler–Maruyama discretization (see Table~\ref{table:summary}), we observe an improvement in iteration complexity.

\subsubsection{Extension to SG-CUBU}

Now we focus on the CUBU algorithm with gradients, referred to as SG-CUBU. 
% {
% \color{blue} \textbf{SG-UBUBU} 
% To establish the convergence theory of this algorithm under constraints, we introduce the following assumption on the stochastic gradient \red{and assumption 3.23-3.25?}, which is also employed in~\cite{supp:UBUBU}.
% \red{TODO: unify the notation for SG-CUBU iterates}
% \begin{assumption}
% \label{asm:stoch_grad1}
% For a stochastic gradient UBU integrator with iterates $(z_k)_{k\in\mathbb{N}}$ 
% and gradient evaluation points $(\bar{x}_k)_{k\in\mathbb{N}}$, 
% the stochastic gradient $(G, \rho)$ satisfies
% \begin{equation}
% \mathbb{E}\!\left[\|G(\bar{x}_k, \omega_{k+1}\mid \hat{x}_k) - \nabla f(\bar{x}_k)\|^2\right]
% \;\le\;
% \Theta\, \max_{j<k}\, \mathbb{E}\!\left[\|\bar{x}_{j+1} - \bar{x}_j\|^2\right],
% \end{equation}
% where $\Theta > 0$ is a constant, $\hat x_k$ denotes the point where the last gradient is evaluated for SVRG.
% \end{assumption}
%\red{$\Theta=\bigo(1/b)$ where $b$ is the minibatch size?}
In the following theorem, we quantify the Wasserstein-1 and Wasserstein-2 distances between the distribution of the output of the SG-CUBU algorithm and the target density $\nu$.
\begin{theorem}
\label{thm:CUBU_SG}  
Under Assumptions~\ref{asm:radius}-\ref{asm:smooth} and Assumption~\ref{asm:stoch_grad}, suppose that  $\lambda$ satisfies
\begin{align*}
0< \lambda < \left( \frac{\sqrt{c_1}\,r}{p+2} \;\bigwedge\; \frac{\sqrt{c_1}\,r\,e^{\operatorname{osc}_{\K}(f)}}{3\sqrt{\pi}\,\operatorname{Vol}(\K)\,p}\right).
\end{align*}
Then, for any $p>2$, $\gamma\geqslant \sqrt{8M^\lambda}$ and the step size
\begin{align*}
0< h < \left( \frac{m M^\lambda}{4\gamma \sigma_2^2 }
\bigwedge
\frac{1}{2\gamma}
% \bigwedge
% \frac{\sqrt{c_1}\,r}{p+2} 
% \bigwedge \frac{\sqrt{c_1}\,r\,e^{\operatorname{osc}_{\K}(f)}}{3\sqrt{\pi}\,\operatorname{Vol}(\K)\,p}
\right),
\end{align*}
it holds that
\begin{align*}
\wstwo(\nu^{\sf SG-CUBU}_n,\nu^\lambda)
&\leqslant \Big(1-\frac{mh}{8\gamma}\Big)^{n/2}\wstwo(\nu_0^{\sf SG-CUBU},\nu) + \frac{2\gamma\sqrt{h}}{m}\sqrt{\frac{\sigma_1^2L^2}{\sqrt{M^\lambda}}\Big(\frac{h^2(M^\lambda)^2d}{m^2}+\frac{d}{m}\Big)}\\
&\qquad +\sqrt{d}(\sqrt{M^\lambda}+\gamma)h^2 + 2C(p,2)\lambda^{1/p+1/2}\,.
\end{align*}
Moreover, when initialize the algorithm with  $\bvartheta_0^{\sf SG-CUBU}=\mathbf{0}$, and $\bv_0^{\sf SG-CUBU} \sim \mathcal{N}_p(0, I_p)$, it follows that
\begin{align*}
\wsone(\nu_n^{\sf SG-CUBU},\nu)
&\leqslant 
\Big(1-\frac{mh}{8\gamma}\Big)^{n/2}\sqrt{\frac{p}{m}} + \frac{2\gamma\sqrt{h}}{m}\sqrt{\frac{\sigma_1^2L^2}{\sqrt{M^\lambda}}\Big(\frac{h^2(M^\lambda)^2d}{m^2}+\frac{d}{m}\Big)}\\
&\qquad +\sqrt{d}(\sqrt{M^\lambda}+\gamma)h^2 + C(p,1)\lambda\,.
\end{align*}
\end{theorem}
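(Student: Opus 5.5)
The plan is the same two-piece decomposition used for Theorem~\ref{thm:CUBU}. By the triangle inequality,
\[
\wstwo(\nu^{\sf SG-CUBU}_n,\nu)\leqslant \wstwo(\nu^{\sf SG-CUBU}_n,\nu^\lambda)+\wstwo(\nu^\lambda,\nu).
\]
Proposition~\ref{prop:w2_tight} with $q=2$ bounds the approximation error. Since $p>2$ gives $1/p+1/2<1$, the admissible range of $\lambda$ in the statement yields $\wstwo(\nu^\lambda,\nu)\leqslant C(p,2)\lambda^{1/p+1/2}$.

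The sampling error is the substantive part. It is the Wasserstein analysis of stochastic-gradient UBU from \cite{supp:UBUBU} applied to the surrogate potential $U^\lambda$. I would first check that $U^\lambda$ meets the hypotheses of that analysis. By Lemmas 12--13 of \cite{YuYu2} and Assumption~\ref{asm:dK}, $U^\lambda$ is $m$-strongly convex and $M^\lambda$-smooth. The stochastic gradient is $\mathcal{G}(\btheta,\omega)+\nabla\ell_\K^\lambda(\btheta)$: only $f$ is subsampled, since the penalty gradient is cheap and exact. Its error relative to $\nabla U^\lambda$ is therefore exactly the error in $\nabla f$, so Assumption~\ref{asm:stoch_grad} transfers directly, with variance $\sigma_1^2L^2\|\btheta\|^2$ and Jacobian variance $\sigma_2^2$.

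Next comes the synchronous coupling. I would run SG-CUBU alongside an exact-gradient UBU chain started at stationarity of the UBU kernel, using the same Gaussian increments. In the modified norm $\|\cdot\|_{a,b}$, with $\gamma\geqslant\sqrt{8M^\lambda}$ and $h<1/(2\gamma)$, the one-step contraction of the exact-gradient UBU difference has rate $1-mh/(4\gamma)$. The Jacobian noise $\sigma_2^2$ degrades this by at most $h^2\gamma\sigma_2^2/M^\lambda$ in the squared distance. The constraint $h<mM^\lambda/(4\gamma\sigma_2^2)$ is chosen so this loss is absorbed, leaving $(1-mh/(8\gamma))$ per step in squared distance, hence the power $n/2$. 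The zeroth-order noise, namely $\nabla f-\mathcal{G}$ evaluated at the chain, has conditional mean zero, so it enters additively in the squared recursion as $h^2\sigma_1^2L^2\,\E\|\bvartheta_k\|^2$ up to norm-equivalence constants involving $\sqrt{M^\lambda}$.

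The main obstacle is bounding this variance, which requires a uniform-in-time second moment bound. I would derive it from the same contraction and the stationary moments of the UBU chain: $\E\|\bvartheta\|^2\lesssim d/m+h^2(M^\lambda)^2d/m^2$, where Assumption~\ref{asm:f} places the minimizer of $U^\lambda$ at the origin because $0\in\K$. Summing the geometric series in the recursion then gives
\[
\frac{2\gamma\sqrt{h}}{m}\sqrt{\frac{\sigma_1^2L^2}{\sqrt{M^\lambda}}\Big(\frac{h^2(M^\lambda)^2d}{m^2}+\frac{d}{m}\Big)}.
\]
The bias between the UBU invariant measure and $\nu^\lambda$ is the strong-order-two estimate of \cite{sanz2021wasserstein}/\cite{supp:UBUBU}, which gives $\sqrt{d}(\sqrt{M^\lambda}+\gamma)h^2$. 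I expect to have to track carefully that the constants appear as stated and that norm-equivalence factors are harmless under $\gamma^2\geqslant 8M^\lambda$. Initial-distance factors come from comparing against $\nu^\lambda$; replacing $\nu^\lambda$ by $\nu$ in the initial term costs one more approximation term, which explains the factor $2C(p,2)$.

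For $\wsone$ I would use $\wsone\leqslant\wstwo$ for the sampling part, but Proposition~\ref{prop:w2_tight} with $q=1$ for the approximation part, since $1/p+1>1$ gives $C(p,1)\lambda$. With $\bvartheta_0=\mathbf 0$ and $\bv_0\sim\mathcal N(0,I_p)$, the velocity marginal already matches the target. The initial distance then reduces to the position part, $(\E_{\nu^\lambda}\|\bvartheta\|^2)^{1/2}\leqslant\sqrt{p/m}$, by strong convexity of $U^\lambda$ with minimizer at $0$, using the standard moment bound for $m$-strongly log-concave measures. This gives the stated $\sqrt{p/m}$ prefactor.
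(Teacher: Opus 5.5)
Your overall route matches the paper's. The sampling error comes from a stochastic-gradient UBU Wasserstein estimate applied to $U^\lambda$. The approximation error comes from Proposition~\ref{prop:w2_tight}: $q=2$ with $1/p+1/2<1$, and $q=1$ together with $\wsone\leqslant\wstwo$ for the second bound. The factor $2C(p,2)$ comes from trading $\nu^\lambda$ for $\nu$ in the initial term, and $\sqrt{p/m}$ from the minimizer of $U^\lambda$ being $0$. The paper, however, does not re-derive the coupling. It quotes Theorem~15 of \cite{SMSUBU}, which already supplies the contraction factor $\bigl(1-\frac{mh}{4\gamma}+\frac{5h^2C_G}{M^\lambda}\bigr)^{n/2}$, the noise prefactor $\gamma\sqrt{h}/(m-h\gamma C_G/M^\lambda)$, and the bias $\sqrt{d}(\sqrt{M^\lambda}+\gamma)h^2$. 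The only actual work is then a step-size simplification.

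\textbf{The step that fails.} That simplification is where your sketch breaks. The Jacobian-noise loss is $5h^2C_G/M^\lambda$, with no factor $\gamma$; here $C_G$ plays the role of $\sigma_2^2$. Absorbing it into half the rate, $5h^2C_G/M^\lambda<mh/(8\gamma)$, requires $h<mM^\lambda/(40\gamma C_G)$. The same inequality gives $h\gamma C_G/M^\lambda<m/40$, hence $m-h\gamma C_G/M^\lambda>m/2$, and that is where the prefactor $2\gamma/m$ comes from.

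With your loss $h^2\gamma\sigma_2^2/M^\lambda$, the condition $h<mM^\lambda/(4\gamma\sigma_2^2)$ only makes the loss smaller than $mh/4$. That is not below $mh/(8\gamma)$, since $\gamma\geqslant\sqrt{8M^\lambda}$ is large. Even with the correct loss, the constant $4$ in the statement is too weak. It only gives $5h^2\sigma_2^2/M^\lambda<5mh/(4\gamma)$, which does not even preserve the rate $mh/(4\gamma)$. You need the $40$ that the paper's proof actually uses.

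\textbf{If you re-derive the estimate.} The $\sigma_2$ term arises from splitting the gradient error through $\mathcal G(\bar x^{\sf fg},\omega)$, where $\bar x^{\sf sg},\bar x^{\sf fg}$ are the gradient-evaluation points of the two chains. This split leaves the $\sigma_1$-noise at the stationary full-gradient chain, whose second moment is the factor $\frac{d}{m}+\frac{h^2(M^\lambda)^2d}{m^2}$. So no uniform moment bound on the SG iterates is needed. Evaluating the noise at the SG iterates, as your moment step suggests, would require an extra restriction on $h\sigma_1^2L^2$ that the statement does not contain.

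\textbf{Source of the bias term.} The $h^2$ bias is not the estimate of \cite{sanz2021wasserstein}. That bound (Theorem~\ref{thm:CUBU_comp}) has a different form involving $M_1^\lambda$ and needs Assumption~\ref{asm:three}, which this theorem does not assume. The term $\sqrt{d}(\sqrt{M^\lambda}+\gamma)h^2$ is inherited from the cited SG-UBU result.
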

% Setting $M_1^\lambda=\bigo(1/\lambda^2),
$M^\lambda=\bigo(1/\lambda^2)$ in the previously stated theorem, we obtain the following corollary.
\begin{corollary}
\label{cor:CUBU_SG}
Let the error level $\varepsilon \in (0,1)$
be small, and set $\gamma=\sqrt{8M^\lambda}$.
\begin{itemize}
    \item[(a)] Choose $\lambda = \Theta(h)$, and set $h>0$, the batch size and the number of iterations $b,n \in \mathbb{N}^{+}$ such that
    $$
    h = \mathcal{O}\Big( \varepsilon^{\frac{2p}{p+2}} \Big), \quad b=\Omega\Big(\varepsilon^{-\frac{6p+4}{p+2}}\Big), \quad  n = \tilde{\Omega}\Big( \varepsilon^{-\frac{4p}{p+2}} \Big),
    $$
    then we have that $\wstwo(\nu_n^{\sf SG-CUBU},\nu)= \tilde{\mathcal{O}}(\varepsilon)$.
    \item[(b)]  Set $\lambda = \Theta(h)$, and choose $h>0$, with the batch size and the number of iterations $b, n \in \mathbb{N}^{+}$, such that
    $$
    h = \mathcal{O}( \varepsilon ), 
    \quad b= \Omega\big(\varepsilon^{-4} \big
    ),
    \quad  n = \tilde{\Omega}\Big( \varepsilon^{-2} \Big),
    $$
    then we have that $\wsone(\nu_n^{\sf SG-CUBU},\nu)= \tilde{\mathcal{O}}(\varepsilon)$.
    %  \item[(a)] Choose $\lambda = \Theta(h^{\frac{1}{2}})$, and set $h>0$, the batch size and the number of iterations $b,n \in \mathbb{N}^{+}$ such that
    % $$
    % h = \mathcal{O}\Big( \varepsilon^{\frac{4p}{p+2}} \Big), \quad b=\Omega\Big(\varepsilon^{-\frac{4}{p+2}}\Big), \quad  n = \tilde{\Omega}\Big( \varepsilon^{-\frac{6p}{p+2}} \Big),
    % $$
    % then we have that $\wstwo(\nu_n^{\sf SG-CUBU},\nu)= \tilde{\mathcal{O}}(\varepsilon)$.
    % \item[(b)]  Set $\lambda = \Theta(h^{\frac{1}{2}})$, and choose $h>0$, with the batch size and the number of iterations $b, n \in \mathbb{N}^{+}$, such that
    % $$
    % h = \mathcal{O}( \varepsilon^2 ), 
    % \quad b= \Omega\big(\varepsilon^{-1} \big
    % ),
    % \quad  n = \tilde{\Omega}\Big( \varepsilon^{-3} \Big),
    % $$
    % then we have that $\wsone(\nu_n^{\sf SG-CUBU},\nu)= \tilde{\mathcal{O}}(\varepsilon)$.
\end{itemize}
\end{corollary}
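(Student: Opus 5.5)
The statement to prove is Corollary~\ref{cor:CUBU_SG}. It is a parameter-tuning consequence of Theorem~\ref{thm:CUBU_SG}, so the plan is to substitute the prescribed scalings into that bound and make each of its four terms $\tilde{\mathcal{O}}(\varepsilon)$. Throughout, take $\gamma=\sqrt{8M^\lambda}$ and use $M^\lambda=\Theta(1/\lambda^2)$, as adopted for the Bregman and Gauge projections. With $\lambda=\Theta(h)$ this gives $M^\lambda=\Theta(h^{-2})$, $\gamma=\Theta(h^{-1})$ and $\kappa=M^\lambda/m=\Theta(h^{-2})$. Take $\sigma_1^2=\mathcal{O}(1/b)$ and $\sigma_2^2=\mathcal{O}(1/b)$ from the mini-batch discussion. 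Treat $m,L,d=p$ and $C(p,q)$ as constants in $\varepsilon$.

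\textbf{Step 1: the admissibility conditions.} Theorem~\ref{thm:CUBU_SG} requires the following.
\begin{itemize}
\item $\lambda$ below the threshold of Proposition~\ref{prop:w2_tight}. This holds automatically once $h$ is small, since $\lambda=\Theta(h)$ and $h\to0$ with $\varepsilon$.
\item $\gamma\geqslant\sqrt{8M^\lambda}$. This holds by construction.
\item $h<1/(2\gamma)$. This is the delicate condition: $h\gamma=\Theta(h\sqrt{M^\lambda})=\Theta(h/\lambda)$. I would fix the constant in $\lambda=\Theta(h)$ so that $\lambda\geqslant c\,h\sqrt{8C_{\mathrm{proj}}}$, which makes $h\gamma<1/2$.
\item $h<mM^\lambda/(4\gamma\sigma_2^2)=\Theta(h^{-1}b)$. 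This is trivially satisfied for small $h$.
\end{itemize}

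\textbf{Step 2: the four error terms.} Write $\epsilon_0:=mh/(8\gamma)=\Theta(h^2)$.
\begin{itemize}
\item \emph{Contraction term.} It is bounded by $e^{-n\epsilon_0/2}$ times the initial distance. For $\wsone$ the initial distance is $\sqrt{p/m}$. For $\wstwo$ it is $\wstwo(\nu_0,\nu)$, which is bounded because $\K$ is compact. Requiring $e^{-n\epsilon_0/2}\lesssim\varepsilon$ gives $n=\tilde\Omega(h^{-2})$.
\item \emph{Stochastic-gradient term.} Its size is
\[
\frac{\gamma\sqrt h}{m}\sqrt{\frac{\sigma_1^2}{\sqrt{M^\lambda}}\Bigl(h^2(M^\lambda)^2+1\Bigr)}=\Theta\Bigl(h^{-1/2}\sqrt{b^{-1}h\cdot h^{-2}}\Bigr)=\Theta(b^{-1/2}h^{-1}).
\]
\item \emph{Discretisation term.} $\sqrt d(\sqrt{M^\lambda}+\gamma)h^2=\Theta(h)$.
\item \emph{Penalty bias.} It is $C(p,2)\lambda^{1/2+1/p}$ for $\wstwo$ and $C(p,1)\lambda$ for $\wsone$.
\end{itemize}

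\textbf{Step 3: balancing.} For part (b), $h=\varepsilon$ makes the discretisation and bias terms $\mathcal{O}(\varepsilon)$. The stochastic-gradient term needs $b^{-1/2}\varepsilon^{-1}\leqslant\varepsilon$, i.e.\ $b=\Omega(\varepsilon^{-4})$, and the contraction term needs $n=\tilde\Omega(\varepsilon^{-2})$.

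For part (a), the bias forces $h^{(p+2)/(2p)}\lesssim\varepsilon$, i.e.\ $h=\varepsilon^{2p/(p+2)}$. Since $2p/(p+2)\geqslant1$, this also makes the discretisation term $\Theta(h)$ at most $\varepsilon$. The contraction term then needs $n=\tilde\Omega(h^{-2})=\tilde\Omega(\varepsilon^{-4p/(p+2)})$. The stochastic-gradient term needs $b\gtrsim h^{-2}\varepsilon^{-2}=\varepsilon^{-(4p+2p+4)/(p+2)}=\varepsilon^{-(6p+4)/(p+2)}$. These match the stated exponents.

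\textbf{Expected obstacles.} The main difficulty is tracking the powers of $\lambda$ hidden in $\gamma$ and $M^\lambda$, particularly in the stochastic-gradient term. There one must check that the $h^2(M^\lambda)^2$ and $1$ contributions are of the same order, which they are since $h^2(M^\lambda)^2=\Theta(h^{-2})$ while $1/\sqrt{M^\lambda}=\Theta(h)$. One must also confirm that the step-size restriction $h\gamma<1/2$ is compatible with $\lambda=\Theta(h)$, which holds with an appropriate choice of constant.
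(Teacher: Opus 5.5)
Your proposal is correct and follows the same route as the paper, which obtains the corollary simply by substituting $\gamma=\sqrt{8M^\lambda}$, $M^\lambda=\Theta(\lambda^{-2})$, $\lambda=\Theta(h)$ and $\sigma_1^2,\sigma_2^2=\mathcal{O}(1/b)$ into Theorem~\ref{thm:CUBU_SG} and balancing the four terms. Your exponents all check out, and your verification that $h\gamma<1/2$ needs a large enough constant in $\lambda=\Theta(h)$ is a detail the paper leaves implicit. One small slip: in your closing remark, $h^2(M^\lambda)^2=\Theta(h^{-2})$ and $1$ are not of the same order (the first dominates), but your Step~2 computation already uses this correctly, so nothing changes.
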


The number of gradient evaluations required by the SG-CUBU algorithm to achieve $\wstwo(\nu,\nu_n^{\sf SG-CUBU})\leqslant \varepsilon$ is of order $\bigo(\varepsilon^{-\frac{10p+4}{p+2}})$,
When $p > 2$, we note that $\frac{10p+4}{p+2} \geqslant \frac{10p+20}{p+2} - \frac{16}{p+2}$. 
Moreover, the number of gradient evaluations required to achieve $\wsone(\nu,\nu_n^{\sf SG-CUBU})\leqslant \varepsilon$ is of order $\bigo(\varepsilon^{-6})$.

\subsection{Convergence Analysis of CBAOAB}
Let us now extend our results and setup to the methodology of constrained BAOAB, i.e. CBAOAB. The following theorem presents the convergence results of the BAOAB scheme under constraints.

\begin{theorem}
\label{thm:cbaoab}
Under Assumptions~\ref{asm:radius}-\ref{asm:three}, assume that  $\lambda$ satisfies
\begin{align*}
0< \lambda < \left( \frac{\sqrt{c_1}\,r}{p+2} \;\bigwedge\; \frac{\sqrt{c_1}\,r\,e^{\operatorname{osc}_{\K}(f)}}{3\sqrt{\pi}\,\operatorname{Vol}(\K)\,p}\right).
\end{align*}
Then, for any $p>2$, 
$\gamma\geqslant 2\sqrt{M^\lambda}$ and the step size $h$ is chosen as 
\begin{align*}
0 < h < \left( \frac{1-e^{-\gamma h}}{4\sqrt{M^\lambda}}
\bigwedge \frac{4\gamma}{m}
% \bigwedge
% \frac{\sqrt{c_1}\,r}{p+2} 
% \bigwedge \frac{\sqrt{c_1}\,r\,e^{\operatorname{osc}_{\K}(f)}}{3\sqrt{\pi}\,\operatorname{Vol}(\K)\,p}
\right),
\end{align*}
%\quad \gamma\geqslant 2\sqrt{M^\lambda},$$ 
it holds that
\begin{align*}
\wstwo(\nu_n^{\sf CBAOAB},\nu)
&\leqslant 21e^{-\frac{mh(n-1)}{4\gamma}} \wstwo(\nu_0^{\sf CBAOAB},\nu) +  66000 \frac{\sqrt{M^\lambda}}{m}\Big(4\sqrt{M^\lambda p}+ \frac{3M_1^\lambda p}{M^\lambda}\Big)\gamma h^2 + C(p,2) \lambda^{1/2+1/p} \,.
\end{align*}
Moreover, when the initial point $\bvartheta_0^{\sf CBAOAB}=\mathbf{0}$ and $\bv_0^{\sf CBAOAB} \sim \mathcal{N}_p(0, I_p)$, it holds that
\begin{align*}
\wsone(\nu_n^{\sf CBAOAB},\nu)
&\leqslant  21e^{-\frac{mh(n-1)}{4\gamma}} 
\sqrt{\frac{p}{m}}
+ 66000 \frac{\sqrt{M^\lambda}}{m}\Big(4\sqrt{M^\lambda p}+ \frac{3M_1^\lambda p}{M^\lambda}\Big)\gamma h^2
+ C(p,1) \lambda \,.
\end{align*}
where $C(p,1)$ and $C(p,2)$ are given in Proposition~\ref{prop:w2_tight}.
% Moreover, it holds that
% \begin{align*}
% \wsone(\nu_n^{\sf BAOAB},\nu)
% &\leqslant 
% \end{align*}
\end{theorem}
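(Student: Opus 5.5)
The plan is to split the error with the triangle inequality of Subsection~\ref{subsec:wass}:
\[
\wsq(\nu_n^{\sf CBAOAB},\nu)\leqslant \wsq(\nu_n^{\sf CBAOAB},\nu^\lambda)+\wsq(\nu^\lambda,\nu),\qquad q\in\{1,2\}.
\]
The second term is bounded directly by Proposition~\ref{prop:w2_tight}. The stated range of $\lambda$ is exactly the hypothesis there with $q=2$, and it implies the one with $q=1$ because $p+1<p+2$. For $q=2$ we have $1/p+1/2<1$ since $p>2$, which gives $C(p,2)\lambda^{1/2+1/p}$. For $q=1$ we have $1/p+1>1$, which gives $C(p,1)\lambda$.

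For the first term we treat CBAOAB as plain BAOAB applied to the smooth potential $U^\lambda$. By Lemmas 12 and 13 of~\cite{YuYu2} (as recalled after Assumption~\ref{asm:smooth}), $U^\lambda$ is $m$-strongly convex and $M^\lambda$-smooth, and Assumption~\ref{asm:three} gives its Hessian a Lipschitz constant $M_1^\lambda$. We then invoke the BAOAB contraction and strong-error results of~\cite{leimkuhler2023contractiona} with $(m,M,M_1)$ replaced by $(m,M^\lambda,M_1^\lambda)$.

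\textbf{Contraction.} For $\gamma\geqslant 2\sqrt{M^\lambda}$ and $h$ in the stated range, a synchronous coupling of two BAOAB chains contracts in a twisted norm $\|\cdot\|_{a,b}$ at rate $1-\tfrac{mh}{4\gamma}$ per step. We convert back to the Euclidean norm, which costs the equivalence-of-norms constant (bounded by $21$), and use $1-x\leqslant e^{-x}$ to obtain the factor $e^{-mh(n-1)/(4\gamma)}$.

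\textbf{Local error.} We take $\nu^\lambda$, the $\bvartheta$-marginal of the invariant law of the continuous dynamics with $U^\lambda$, as reference. The one-step strong error is $\bigo\big((M^\lambda\sqrt{p}+M_1^\lambda p/\sqrt{M^\lambda})h^{3}\big)$, a local order $3$ made possible by Hessian Lipschitzness. Summing over steps through the contraction gives a global bias of order $\gamma h^2\sqrt{M^\lambda}\,m^{-1}\big(4\sqrt{M^\lambda p}+3M_1^\lambda p/M^\lambda\big)$, with the explicit constant $66000$ coming from~\cite{leimkuhler2023contractiona}.

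Combining this bias with the contraction term bounds $\wstwo(\nu_n^{\sf CBAOAB},\nu^\lambda)$. One step is replacing $\wstwo(\nu_0^{\sf CBAOAB},\nu^\lambda)$ by $\wstwo(\nu_0^{\sf CBAOAB},\nu)$. To be consistent with the displayed constants, we absorb the extra $21\wstwo(\nu^\lambda,\nu)$ into the $C(p,2)$ term by enlarging the universal $C_0$.

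For $\wsone$ we use $\wsone\leqslant\wstwo$ on the sampling part and the $q=1$ bound of Proposition~\ref{prop:w2_tight} on the approximation part. For the initial distance, with $\bvartheta_0=\mathbf 0$ and $\bv_0\sim\mathcal N_p(0,I_p)$, the velocity marginals match. By Assumption~\ref{asm:f} and $m$-strong convexity of $U^\lambda$, which also has its minimizer at $0$ since $0\in\K$, we get $\E_{\nu^\lambda}\|\bvartheta\|^2\leqslant p/m$. This yields the term $\sqrt{p/m}$.

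The main obstacle is verifying that the hypotheses of the BAOAB theorems in~\cite{leimkuhler2023contractiona} hold with the stated $h$-restriction $h<(1-e^{-\gamma h})/(4\sqrt{M^\lambda})\wedge 4\gamma/m$. This matters because $M^\lambda=\bigo(\lambda^{-2})$ is large. It also requires that the constants in the global strong-error bound depend on $U^\lambda$ only through $(m,M^\lambda,M_1^\lambda)$, and not on further regularity that the penalty $\ell_\K^\lambda$ might violate near $\partial\K$. If a direct citation fails, the fallback is to redo the local-error expansion of the $\mathcal{B}\mathcal{A}\mathcal{O}\mathcal{A}\mathcal{B}$ composition against the exact flow. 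In that expansion, the second-order Taylor remainder of $\nabla U^\lambda$ is controlled by $M_1^\lambda$ via Assumption~\ref{asm:three}, and moments under $\nu^\lambda$ are controlled by $p/m$ and $M^\lambda p$.
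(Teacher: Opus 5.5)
Your outer structure agrees with the paper: the triangle inequality through $\nu^\lambda$, Proposition~\ref{prop:w2_tight} with the case split in $1/p+1/q$, and $\wsone\leqslant\wstwo$ together with $\E_{\nu^\lambda}\|\bvartheta\|^2\leqslant p/m$ for the special initialization. You are also right that replacing $\wstwo(\nu_0^{\sf CBAOAB},\nu^\lambda)$ by $\wstwo(\nu_0^{\sf CBAOAB},\nu)$ costs an extra $21\,\wstwo(\nu^\lambda,\nu)$; the paper absorbs this without comment.

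The gap is the mechanism you give for the $h^2$ bias. BAOAB does not have one-step strong error $\bigo(h^3)$ against the kinetic Langevin flow. It injects a single Gaussian per step, so from a fixed state the new position is, up to deterministic terms, $\frac h2$ times the injected velocity noise. The exact transition, by contrast, leaves the position with conditional variance of order $\gamma h^3$ per coordinate given the velocity. This already happens for quadratic potentials, where $M_1^\lambda=0$. Hence the one-step $\wstwo$ error is of order $\sqrt{\gamma p}\,h^{3/2}$ under every coupling, and no Hessian-Lipschitz Taylor control can remove it. Accumulated through the contraction, it certifies at best a first-order bound. This is the strong-order deficiency of the BAO family noted in the introduction. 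So neither your ``sum the local errors'' step nor your fallback expansion would produce $\gamma h^2$. Also, $66000$ is not a constant you can simply read off from~\cite{leimkuhler2023contractiona}.

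The missing ingredient is the invariant measure $\nu_h^\lambda$ of the CBAOAB chain. The paper uses two results from~\cite{leimkuhler2023contractiona}:
\begin{itemize}
\item Theorem~5.1 there, in the form $\wstwo(\nu_n^{\sf CBAOAB},\nu_h^\lambda)\leqslant 21(1-\rho(h))^{n-1}\wstwo(\nu_0^{\sf CBAOAB},\nu_h^\lambda)$ with $\rho(h)=\frac{mh^2}{4(1-e^{-\gamma h})}$.
\item Theorem~8.5 there, which bounds the invariant-measure bias $\wstwo(\nu_h^\lambda,\nu^\lambda)$ by a constant times $\frac{\sqrt{M^\lambda}}{m}\bigl(4\sqrt{M^\lambda p}+\frac{3M_1^\lambda p}{M^\lambda}\bigr)h(1-e^{-\gamma h})$. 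That estimate comes from comparing BAOAB with a modified stochastic dynamics that preserves $\nu^\lambda$, not with the Langevin SDE.
\end{itemize}
Inserting $\wstwo(\nu_0^{\sf CBAOAB},\nu_h^\lambda)\leqslant\wstwo(\nu_0^{\sf CBAOAB},\nu^\lambda)+\wstwo(\nu^\lambda,\nu_h^\lambda)$ gives $21(1-\rho(h))^{n-1}\wstwo(\nu_0^{\sf CBAOAB},\nu^\lambda)+22\,\wstwo(\nu^\lambda,\nu_h^\lambda)$. This factor $22=21+1$ times the bias constant is where $66000$ comes from.

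Finally, $1-e^{-\gamma h}\leqslant\gamma h$ gives both $\rho(h)\geqslant\frac{mh}{4\gamma}$, hence the exponent $-\frac{mh(n-1)}{4\gamma}$, and $h(1-e^{-\gamma h})\leqslant\gamma h^2$. Used this way as a black box, the step-size hypothesis invoked is $h<(1-e^{-\gamma h})/(2\sqrt{M^\lambda})$, which the stated restriction implies. The constants enter only through $(m,M^\lambda,M_1^\lambda,p)$, so your main-obstacle worry is settled by the citation rather than by a new local-error expansion.
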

Setting $M_1^\lambda=\bigo(1/\lambda^2), M^\lambda=\bigo(1/\lambda^2)$ in the previously stated theorem, we obtain the following corollary.
\begin{corollary}
\label{cor:cbaoab}    
Let the error level $\varepsilon\in(0,1)$ be small.
% , and initialize the algorithm with  $\bvartheta_0^{\sf CBAOAB}$ set to the minimizer of $f$, and $\bv_0^{\sf CBAOAB} \sim \mathcal{N}_p(0, I_p)$.
\begin{itemize}
    \item[(a)]
Set $\lambda=\Theta(h^{4p/(7p+2)})$, and choose $h>0$ and the number of iterations $ n\in\mathbb N_+$ so that
\begin{align*}
    h=\mathcal O\left(\varepsilon^{\frac{7p+2}{2p+4}}\right), \quad  n=\widetilde{\Omega}\left(\varepsilon^{\frac{-11p-2}{2p+4}}\right),
\end{align*}
then we have $\wstwo(\nu_n^{\sf CBAOAB},\nu)=\logo(\varepsilon)$.
\item [(b)] Set $\lambda=\Theta(h^{1/2})$, and choose $h>0$ and the number of iterations $ n\in\mathbb N_+$ so that
\begin{align*}
    h=\mathcal O\left(\varepsilon^{2}\right), \quad  n=\widetilde{\Omega}\left(\varepsilon^{-3}\right),
\end{align*}
then we have $\wsone(\nu_n^{\sf CBAOAB},\nu)=\logo(\varepsilon)$.
\end{itemize}
\end{corollary}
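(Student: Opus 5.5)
We prove Corollary~\ref{cor:cbaoab} by substituting the prescribed scalings into Theorem~\ref{thm:cbaoab} and balancing its three error terms: the contraction term, the discretization bias, and the penalty-approximation error from Proposition~\ref{prop:w2_tight}. Throughout we use $M^\lambda=\bigo(\lambda^{-2})$ and $M_1^\lambda=\bigo(\lambda^{-2})$ (Examples~\ref{ex:ellip}--\ref{ex:lq}), so $M_1^\lambda/M^\lambda=\bigo(1)$. We take the smallest admissible friction, $\gamma=2\sqrt{M^\lambda}=\Theta(\lambda^{-1})$, and treat $m$, $p$ and $C(p,q)$ as constants.

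\emph{Discretization bias.} The bias term is
\[
\frac{\sqrt{M^\lambda}}{m}\Big(4\sqrt{M^\lambda p}+\frac{3M_1^\lambda p}{M^\lambda}\Big)\gamma h^2 .
\]
Since $\sqrt{M^\lambda}=\Theta(\lambda^{-1})$ and $\gamma=\Theta(\lambda^{-1})$, this is $\bigo(\lambda^{-3}h^2)$.

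\emph{Contraction rate.} The exponent is $mh/(4\gamma)=\Theta(h\lambda)$. Hence
\[
n=\widetilde\Omega\big((h\lambda)^{-1}\big)
\]
makes the first term $\bigo(\varepsilon)$. The initial distance is $\bigo(1)$ in $\varepsilon$; in case (b) it is explicitly $\sqrt{p/m}$, and the logarithm is absorbed into $\widetilde\Omega$.

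\emph{Case (a), $\wstwo$.} We need $\lambda^{-3}h^2\lesssim\varepsilon$ and $\lambda^{1/2+1/p}=\lambda^{(p+2)/(2p)}\lesssim\varepsilon$. Setting $\lambda=h^{a}$, the two terms are balanced by the prescribed choice $a=4p/(7p+2)$. The approximation condition then reads $h^{2(p+2)/(7p+2)}\lesssim\varepsilon$, which gives
\[
h=\bigo\big(\varepsilon^{(7p+2)/(2p+4)}\big),\qquad \lambda=\Theta\big(\varepsilon^{2p/(p+2)}\big).
\]
Substituting into $n\sim(h\lambda)^{-1}$, the exponent is
\[
\frac{7p+2}{2p+4}+\frac{4p}{2p+4}=\frac{11p+2}{2p+4},
\]
which matches the statement.

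Before substituting, we check the hypotheses of Theorem~\ref{thm:cbaoab}.
\begin{itemize}
\item The smallness conditions on $\lambda$ hold for small $\varepsilon$, since $\lambda\to0$.
\item The step-size condition $h<(1-e^{-\gamma h})/(4\sqrt{M^\lambda})$ behaves like $\gamma h/(4\sqrt{M^\lambda})=h/2$ when $\gamma h\ll1$, so it is not automatically satisfied. However, $\gamma h=\Theta(h/\lambda)\to0$ here, and for fixed $\gamma$ the condition is compatible after adjusting constants: one can take $\gamma$ a constant multiple larger than $2\sqrt{M^\lambda}$, which only changes constants.
\item The condition $h<4\gamma/m$ is trivial.
\end{itemize}

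\emph{Case (b), $\wsone$.} Now the approximation term is $C(p,1)\lambda$, so we need $\lambda\lesssim\varepsilon$. With $\lambda=\Theta(h^{1/2})$ the bias is $\lambda^{-3}h^2=h^{1/2}$, and $h=\bigo(\varepsilon^2)$ makes both the bias and $\lambda=\varepsilon$ of order $\varepsilon$. Then $n\sim(h\lambda)^{-1}=\varepsilon^{-3}$.

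\emph{Main obstacle.} The only delicate point is verifying the implicit step-size constraint $h<(1-e^{-\gamma h})/(4\sqrt{M^\lambda})$ together with $\gamma\ge2\sqrt{M^\lambda}$ under the chosen scalings. We handle it in the regime $\gamma h\to0$, using $1-e^{-\gamma h}\ge\gamma h/2$, by choosing $\gamma=c\sqrt{M^\lambda}$ with $c$ large enough that $c/8>1$. This choice only affects constants in the bias and rate, so the orders above are unchanged.
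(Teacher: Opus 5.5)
Your proposal is correct and follows the paper's route: substitute $M^\lambda, M_1^\lambda=\bigo(\lambda^{-2})$ and $\gamma\asymp\sqrt{M^\lambda}=\Theta(\lambda^{-1})$ into Theorem~\ref{thm:cbaoab}, balance the bias $\lambda^{-3}h^2$ against $\lambda^{1/2+1/p}$ (resp.\ $\lambda$), and take $n\asymp(h\lambda)^{-1}$ up to logarithms. Your extra observation is right and is not addressed in the paper: with $\gamma=2\sqrt{M^\lambda}$ the step-size condition $h<(1-e^{-\gamma h})/(4\sqrt{M^\lambda})$ fails outright, since its right-hand side is below $h/2$, so you should fix $\gamma=c\sqrt{M^\lambda}$ with $c>8$ from the start; because $\gamma h=\Theta(h/\lambda)\to0$ in both regimes, this choice is admissible and leaves all the orders unchanged.
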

A comparison between the iteration complexities of the CBAOAB method in this corollary and the CUBU method in Corollary~\ref{cor:cubu} shows that they are of the same order, differing only by constant factors.
%\todo[inline]{We may need to be careful when we use/interchange, for example, $\bv_n$ and $\bv_k$.}
\subsubsection{Extension to SG-CBAOAB}

%\red{add one theroem , combine the proposition~\ref{prop:sg-baoab-helper} with $C(p,1)\lambda$ or $2C(p,2)\lambda^{1/2+1/p}$}

As done before, we consider the additional extension of CBAOAB to that of stochastic gradients, which results in a new algorithm entitled SG-CBAOAB. Below we present a main convergence result and a corollary that details the computational complexity.  

\begin{theorem}
\label{thm:CBAOAB_SG}  
Under {Assumptions~\ref{asm:radius}-%\ref{asm:smooth}, Assumption~
\ref{asm:stoch_grad}}, 
%and Proposition \ref{prop:sg-baoab-helper}, 
suppose that  $\lambda$ satisfies
\begin{align*}
0< \lambda < \left( \frac{\sqrt{c_1}\,r}{p+2} \;\bigwedge\; \frac{\sqrt{c_1}\,r\,e^{\operatorname{osc}_{\K}(f)}}{3\sqrt{\pi}\,\operatorname{Vol}(\K)\,p}\right).
\end{align*}
Then, for any $p>2$, $\gamma\geqslant \sqrt{8M^\lambda}$ and the step size
\begin{align*}
0 < h < \left( \frac{1-e^{-\gamma h}}{2\sqrt{M^\lambda}}
\bigwedge \frac{4\gamma}{m} 
\bigwedge
1
\bigwedge
\frac{1}{4\gamma}
\right),
\end{align*}
it holds that
\begin{align*}
\wstwo(\nu^{\sf SG-CBAOAB}_n,\nu^\lambda)
&\leqslant\frac{4(1-e^{-\gamma h})}{m}\,K_{\rm noise}(h) +  (1-\rho(h))^n
       \Bigl(\wstwo(\nu_0^{\sf SG-CBAOAB},\nu^\lambda)
          + C_{\rm bias}h(1-e^{-\gamma h})\Bigr) \\ & \qquad
       + C_{\rm bias}h(1-e^{-\gamma h}) + 2C(p,2)\lambda^{1/p+1/2}.
\end{align*}
Moreover, when initialize the algorithm with  $\bvartheta_0^{\sf SG-CBAOAB}=\mathbf{0}$, and $\bv_0^{\sf SG-CBAOAB} \sim \mathcal{N}_p(0, I_p)$, it follows that
\begin{align*}
\wsone(\nu^{\sf SG-CBAOAB}_n,\nu^\lambda)
&\leqslant\frac{4(1-e^{-\gamma h})}{m}\,K_{\rm noise}(h) +  (1-\rho(h))^n
       \Bigl(\sqrt{\frac{p}{m}}
          + C_{\rm bias}h(1-e^{-\gamma h})\Bigr) \\ & \qquad
       + C_{\rm bias}h(1-e^{-\gamma h}) + C(p,1)\lambda\,.
\end{align*}
Here, {$\rho(h)=\frac{mh^2}{4(1-e^{-\gamma h})}$, $C_{\rm bias}$ and $K_{\rm noise}$ are explicit constants that are provided in Proposition \ref{prop:sg-baoab-helper}. } 
\end{theorem}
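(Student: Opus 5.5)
The proof reduces to the triangle-inequality decomposition of Subsection~\ref{subsec:wass}:
\begin{align*}
\wsq(\nu_n^{\sf SG-CBAOAB},\nu)\leqslant \wsq(\nu_n^{\sf SG-CBAOAB},\nu^\lambda)+\wsq(\nu^\lambda,\nu),\qquad q\in\{1,2\}.
\end{align*}
The sampling error is controlled by a non-asymptotic contraction-plus-bias estimate for SG-BAOAB applied to the smooth, strongly convex surrogate $U^\lambda$. The approximation error is controlled by Proposition~\ref{prop:w2_tight}. The assumed range of $\lambda$ is exactly the hypothesis of Proposition~\ref{prop:w2_tight} with $q=2$; since $p+2\geqslant p+1$, the same range also covers $q=1$. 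Hence $\wstwo(\nu^\lambda,\nu)\leqslant C(p,2)\lambda^{1/2+1/p}$, because $1/p+1/2<1$ for $p>2$, and $\wsone(\nu^\lambda,\nu)\leqslant C(p,1)\lambda$, because $1/p+1>1$.

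For the sampling term I would use the helper result, Proposition~\ref{prop:sg-baoab-helper}. It should be applied to the potential $U^\lambda$, which by Lemmas 12--13 of \cite{YuYu2} (via Assumption~\ref{asm:dK}) is $m$-strongly convex and $M^\lambda$-smooth. Its stochastic gradient is $\mathcal{G}+\nabla\ell^\lambda_{\K}$, where the penalty gradient is computed exactly. The gradient noise is therefore that of $f$, and it obeys Assumption~\ref{asm:stoch_grad} with constants $\sigma_1,\sigma_2$. This is where the step-size conditions enter:
\begin{itemize}
\item $h<(1-e^{-\gamma h})/(2\sqrt{M^\lambda})$ together with $\gamma\geqslant\sqrt{8M^\lambda}$ gives the BAOAB contraction in a twisted norm, as in \cite{leimkuhler2023contractiona,leimkuhler2023contractionb}, at rate $\rho(h)=mh^2/(4(1-e^{-\gamma h}))$;
\item $h<4\gamma/m$, $h<1$ and $h<1/(4\gamma)$ keep $\rho(h)<1$ and let the SG variance term be absorbed.
\end{itemize}

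The helper should then deliver a bound for the synchronously coupled chains started at $\nu_0$ and at $\nu^\lambda$. It has three pieces:
\begin{itemize}
\item a contracted initial term $(1-\rho)^n(\wstwo(\nu_0,\nu^\lambda)+C_{\rm bias}h(1-e^{-\gamma h}))$, where the extra bias appears when one converts between the twisted and Euclidean norms and between the invariant measure of BAOAB and $\nu^\lambda$;
\item a discretization bias $C_{\rm bias}h(1-e^{-\gamma h})$;
\item a noise accumulation $\frac{4(1-e^{-\gamma h})}{m}K_{\rm noise}(h)$, obtained by summing the geometric series $\sum_k(1-\rho)^k$ against the per-step variance.
\end{itemize}
This reproduces the $\wstwo$ display, except for the factor $2$ in front of $C(p,2)\lambda^{1/p+1/2}$. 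The extra copy arises because the initial distance is stated relative to $\nu^\lambda$ but one wishes to compare with $\nu$; including it is harmless.

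For the $\wsone$ statement, use $\wsone\leqslant\wstwo$ on the sampling part. For the initial distance with $\bvartheta_0=\mathbf{0}$ and $\bv_0\sim\mathcal{N}_p(0,I_p)$, couple the velocities identically. Then $\wstwo(\nu_0,\nu^\lambda)^2\leqslant \E_{\nu^\lambda}\|\bvartheta\|^2$. Assumption~\ref{asm:f} together with $\nabla\ell^\lambda_{\K}(\mathbf{0})=0$ (since $\mathbf{0}\in\K$ by Assumption~\ref{asm:radius}) gives $\nabla U^\lambda(\mathbf{0})=0$, so $\mathbf{0}$ is the minimizer of $U^\lambda$. The standard moment bound for $m$-strongly log-concave measures then gives $\E_{\nu^\lambda}\|\bvartheta\|^2\leqslant p/m$, so the initial term is $\sqrt{p/m}$. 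Adding $C(p,1)\lambda$ finishes the argument.

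The main obstacle is the helper proposition itself. One must verify that the SG-BAOAB analysis of \cite{leimkuhler2023contractionb} goes through with constants depending only on $(m,M^\lambda,\gamma,\sigma_1,\sigma_2)$, and that the state-dependent variance bound in Assumption~\ref{asm:stoch_grad}(ii) is controlled along the chain. My first approach is a uniform-in-$n$ second-moment bound on the iterates, obtained from a Lyapunov function for BAOAB under the same step-size restrictions; this bound feeds into $K_{\rm noise}(h)$. Since the statement delegates the explicit constants to Proposition~\ref{prop:sg-baoab-helper}, I would invoke it and focus on checking that its hypotheses match those listed in the theorem.
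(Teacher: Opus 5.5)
Your reduction is the same as the paper's proof. The paper invokes Proposition~\ref{prop:sg-baoab-helper} for the distance to $\nu^\lambda$ and adds Proposition~\ref{prop:w2_tight} through the triangle inequality. Your handling of the $\lambda$-range for $q\in\{1,2\}$, and of the initial distance $\wstwo(\nu_0,\nu^\lambda)\leqslant\sqrt{p/m}$ for the $\wsone$ bound, is fine.

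The gap is the step you defer: matching the helper's hypotheses with the theorem's. They do not match. Proposition~\ref{prop:sg-baoab-helper} assumes the noise condition \eqref{eq:noise-small-thm}, i.e.\ $20\sigma_1^2(M^\lambda)^2C_V\leqslant\lambda_{\sf fg}^2$, which does not involve $h$. None of the theorem's conditions on $\lambda,\gamma,h$ involves $\sigma_1$.

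Your claim that $h<4\gamma/m$, $h<1$, $h<1/(4\gamma)$ ``let the SG variance term be absorbed'' is not what happens. In Step~2 of the helper, Young's inequality with $\varepsilon=\lambda_{\sf fg}h/4$ multiplies the $O(h^2)$ variance $\sigma_1^2(M^\lambda)^2h^2C_V$ by $1+1/\varepsilon\approx 5/(\lambda_{\sf fg}h)$. This gives an $O(h)$ term of the same order as the drift $\lambda_{\sf fg}h$, and only the $h$-independent inequality above can absorb it.

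Nor is this a removable technicality. By the computation of $\Delta\bvartheta_{k+1}$ in Lemma~\ref{lem:sg-cbaoab}, the gradient noise of the first B-step enters $\bvartheta_{k+1}$ multiplied by $-\frac{h^2}{4}(1+e^{-\gamma h})$. This noise is conditionally centred and independent of the O-step Gaussian. So for a stochastic gradient attaining the bound in Assumption~\ref{asm:stoch_grad}(ii), e.g.\ $\mathcal G(\btheta,\omega)=\nabla f(\btheta)+\sigma_1L\,\omega\,\btheta$ with $\omega\sim\mathcal N(0,1)$, one gets $\E\|\bvartheta_{k+1}\|^2\geqslant\frac{h^4}{16}\sigma_1^2L^2\,\E\|\bvartheta_k\|^2$. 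For any admissible $h$ and $\sigma_1>4/(h^2L)$, the left-hand side of the theorem therefore diverges as $n\to\infty$, while the right-hand side stays bounded. The noise condition, or some other $\sigma_1$-dependent restriction, has to be added to the hypotheses; the paper's one-line proof omits it too.

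Separately, the helper's displayed conclusion \eqref{eq:sg-baoab-W2-final-thm} is not what its own argument delivers, so using it as a black box imports an error into the first term of the theorem. Carry out the geometric summation you describe, with the one-step perturbation $\wstwo(\mu Q_h,\mu P_h)\leqslant K_{\rm noise}(h)\,h$ from Step~4 and the rate $\rho(h)=\frac{mh^2}{4(1-e^{-\gamma h})}$. You get
\[
\frac{K_{\rm noise}(h)\,h}{\rho(h)}=\frac{4(1-e^{-\gamma h})}{m\,h}\,K_{\rm noise}(h)\approx\frac{4\gamma}{m}\,K_{\rm noise}(h),
\]
which is larger by a factor $1/h$ than the term $\frac{4(1-e^{-\gamma h})}{m}K_{\rm noise}(h)$ in the statement. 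Since $K_{\rm noise}(h)$ tends to a positive constant proportional to $\sigma_1$ as $h\to0$, this route controls the stochastic-gradient contribution only through $\sigma_1$ (the batch size), not through $h$. A bound that decays with $h$ needs a different argument. For instance, couple the two chains along the whole trajectory and use that the gradient errors are martingale increments. This typically gives a $\sqrt{h}$-type term as in Theorem~\ref{thm:CUBU_SG}, rather than summing one-step $\wstwo$ perturbations linearly.
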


\begin{corollary}
\label{cor:sg-cbaoab}
Let the error level $\varepsilon\in(0,1)$ be sufficiently small.
\begin{itemize}
\item[(a)]
Set
$\lambda=\Theta\left(h^{\frac{4p}{7p+2}}\right),$
and choose the step size $h>0$, batch size $b\in\mathbb N_+$, and number of iterations $n\in\mathbb N_+$ so that
\begin{align*}
h=\mathcal O\left(\varepsilon^{\frac{7p+2}{2p+4}}\right),\quad
b={\Omega}\left(\varepsilon^{-\frac{7p+2}{p+2}}\right),\quad
n=\widetilde{\Omega}\left(\varepsilon^{-\frac{11p+2}{2p+4}}\right).
\end{align*}
Then we have
$
\wstwo(\nu_n^{\sf SG-CBAOAB},\nu)=\mathcal O(\varepsilon).
$
\item[(b)]
Set
$
\lambda=\Theta\left(h^{1/2}\right),
$
and choose the step size $h>0$, batch size $b\in\mathbb N_+$, and number of iterations $n\in\mathbb N_+$ so that
\begin{align*}
h=\mathcal O\left(\varepsilon^2\right),\quad
b={\Omega}\left(\varepsilon^{-4}\right),\quad
n=\widetilde{\Omega}\left(\varepsilon^{-3}\right).
\end{align*}
Then we have
$
\wsone(\nu_n^{\sf SG-CBAOAB},\nu)=\mathcal O(\varepsilon).
$
\end{itemize}
\end{corollary}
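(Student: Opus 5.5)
This corollary is a parameter-balancing exercise on the bound of Theorem~\ref{thm:CBAOAB_SG}, combined through the triangle inequality with Proposition~\ref{prop:w2_tight}. I would take the constants of Proposition~\ref{prop:sg-baoab-helper} as given. I then assume the scalings $M^\lambda=\bigo(1/\lambda^2)$ and $M_1^\lambda=\bigo(1/\lambda^2)$, together with mini-batch variances $\sigma_1^2,\sigma_2^2=\bigo(1/b)$. I fix $\gamma=\Theta(\sqrt{M^\lambda})=\Theta(1/\lambda)$, the smallest friction allowed.

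\textbf{Step 1: expand the bound in $h$ and $\lambda$.} Since $\gamma h\to0$ in both regimes (because $h\ll\lambda$), we have $1-e^{-\gamma h}=\Theta(\gamma h)=\Theta(h/\lambda)$. This gives a contraction rate $\rho(h)=\Theta(mh^2/(\gamma h))=\Theta(\lambda h)$. Hence $(1-\rho)^n\cdot\sqrt{p/m}\le\varepsilon$ requires $n=\tilde\Omega(1/(\lambda h))$; the logarithm comes from the initial distance. The step-size constraints $h<(1-e^{-\gamma h})/(2\sqrt{M^\lambda})$ and $h<1/(4\gamma)$ reduce, up to constants, to $h\lesssim\lambda$. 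This holds since $h$ is a larger power of $\varepsilon$ than $\lambda$.

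\textbf{Step 2: bias and noise terms.} From Proposition~\ref{prop:sg-baoab-helper}, I would track the $\lambda$-dependence of $C_{\rm bias}$, which mirrors the full-gradient bias in Theorem~\ref{thm:cbaoab}. That bias is $\frac{\sqrt{M^\lambda}}{m}(\sqrt{M^\lambda p}+M_1^\lambda p/M^\lambda)\gamma h^2=\bigo(h^2/\lambda^3)$. I expect the same order from $C_{\rm bias}h(1-e^{-\gamma h})$. The noise term $\frac{4(1-e^{-\gamma h})}{m}K_{\rm noise}(h)$ carries factors $\sigma_1^2,\sigma_2^2$ and is therefore made $\le\varepsilon$ by choosing $b$ large.

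In case (a), I balance $h^2/\lambda^3$ against $\lambda^{1/2+1/p}$. This gives $h^2=\lambda^{(7p+2)/(2p)}$, that is $\lambda=\Theta(h^{4p/(7p+2)})$. Setting $\lambda^{(p+2)/(2p)}=\varepsilon$ yields $\lambda=\varepsilon^{2p/(p+2)}$ and $h=\varepsilon^{(7p+2)/(2p+4)}$. Then $n\sim1/(\lambda h)=\varepsilon^{-(11p+2)/(2p+4)}$, matching the claim.

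In case (b), the $\wsone$ approximation term $C(p,1)\lambda$ forces $\lambda=\varepsilon$. With $\lambda=\Theta(h^{1/2})$ we get $h=\varepsilon^2$. The bias is then $h^2/\lambda^3=\varepsilon$, which is consistent, and $n\sim1/(\lambda h)=\varepsilon^{-3}$.

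\textbf{Step 3: batch size, the main obstacle.} The quantity $K_{\rm noise}(h)$ is only referenced here, so I must extract its dependence on $(h,\gamma,M^\lambda,\sigma_i^2,p)$ from Proposition~\ref{prop:sg-baoab-helper}. Writing it schematically as $\sigma^2\cdot\mathrm{poly}(h,1/\lambda)\cdot p$, I would solve $\frac{h}{\lambda}K_{\rm noise}\lesssim\varepsilon$ for $b$. The claimed values, $b=\Omega(\varepsilon^{-(7p+2)/(p+2)})=\Omega(h^{-2})$ in case (a) and $b=\Omega(\varepsilon^{-4})=\Omega(h^{-2})$ in case (b), suggest the noise term behaves like $\sigma^2/h$ times $\lambda$-powers that cancel. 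Verifying that cancellation, and that the moment term $\sigma_1^2L^2\|\btheta\|^2$ is uniformly bounded by $\bigo(p/m)$, is the step I expect to be delicate. Finally, the triangle inequality $\mathsf W_q(\nu_n,\nu)\le\mathsf W_q(\nu_n,\nu^\lambda)+\mathsf W_q(\nu^\lambda,\nu)$ combines all terms to $\bigo(\varepsilon)$. The constraint on $\lambda$ in the theorem holds for small $\varepsilon$.
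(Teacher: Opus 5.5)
Your Steps 1--2 follow the route the paper implicitly takes. The corollary is stated without proof, as a consequence of Theorem~\ref{thm:CBAOAB_SG} with $M^\lambda,M_1^\lambda=\bigo(\lambda^{-2})$ and $\gamma\asymp\sqrt{M^\lambda}$. Those steps are correct: $\rho(h)\asymp\lambda h$, $C_{\rm bias}h(1-e^{-\gamma h})\asymp h^2/\lambda^3$, and the balancing that gives $\lambda,h,n$ in (a) and $\lambda=\varepsilon$, $h=\varepsilon^2$, $n=\varepsilon^{-3}$ in (b) all check out. The genuine gap is Step~3. You never derive the batch size: you read it off the target and postulate that the $\lambda$-powers in the noise term cancel. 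They do not, and nothing in the paper supplies such a cancellation.

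\textbf{The noise term.} For small $\lambda$ one has $\lambda_{\sf fg},\lambda_{\sf sg}\asymp m$ and $C_V\asymp\lambda^{-4}$. The term $4(M_1^\lambda)^2/m^2$ gives $C_{\sf fg}\asymp\lambda^{-4}p$. Hence $C_{\rm mom}\asymp\lambda^{-4}p$, not $\bigo(p/m)$ as you hoped, and $K_{\rm noise}^2\asymp\sigma_1^2(M^\lambda)^2C_VC_{\rm mom}\asymp\sigma_1^2\lambda^{-12}p$. The noise term of Theorem~\ref{thm:CBAOAB_SG} is therefore $\frac{4(1-e^{-\gamma h})}{m}K_{\rm noise}\asymp h\,\sigma_1\lambda^{-7}\sqrt p$. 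Only $\sigma_1$ enters; $\sigma_2$ plays no role.

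\textbf{The batch-size requirement.} Making the noise term $\bigo(\varepsilon)$ with $\sigma_1^2\asymp1/b$ requires $b\gtrsim h^2\lambda^{-14}\varepsilon^{-2}$. That means $b\gtrsim\varepsilon^{-12}$ in (b) and $b\gtrsim\varepsilon^{-(23p+2)/(p+2)}$ in (a). The claimed values are exactly $b\asymp h^{-2}$ in both cases. Independently, the smallness condition \eqref{eq:noise-small-thm}, which the theorem inherits from Proposition~\ref{prop:sg-baoab-helper}, reads $\sigma_1^2\lesssim\lambda^8$, i.e.\ $b\gtrsim\lambda^{-8}$. This already exceeds $h^{-2}$ in both regimes. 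Even formally, with $b\asymp\varepsilon^{-4}$ in (b) the noise term is of order $\varepsilon^{-3}$. Re-deriving from Step~5 of the proposition's proof makes things worse: $K_{\rm noise}h/\rho(h)=\frac{4(1-e^{-\gamma h})}{mh}K_{\rm noise}$, so the factor $h$ is not actually there.

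So the batch-size part of the statement does not follow from Theorem~\ref{thm:CBAOAB_SG} by your plan. Obtaining it would need a genuinely sharper stochastic-gradient analysis, which is new work rather than parameter balancing. One would use that only $\nabla f$ is subsampled, so the variance carries $L^2$ rather than $(M^\lambda)^2$. One would also need a second-moment bound uniform in $\lambda$.
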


The conclusion from the following corollary, is that the number of gradient evaluations required to achieve $\wstwo(\nu,\nu_n^{\sf SG-CBAOAB})\leqslant \varepsilon$ is of order $\bigo(\varepsilon^{-\frac{25p+6}{2p+4}})$. 
Similarly, the number of gradient evaluations required to achieve $\wsone(\nu,\nu_n^{\sf SG-CBAOAB})\leqslant \varepsilon$ is of order $\bigo(\varepsilon^{-7})$.

%\begin{remark}
%\textcolor{blue}{Provide a remark here on not needing the exact same assumptions for SG-CBAOAB and SG-CUBU. For SG-CBAOAB we do not requite $\sigma_2$ due to the additional noise entering through the two $h/2$-$B$ steps. Write more clearly later.}
%\end{remark}

\begin{remark}
{We remark that unlike the full-gradient case, the complexity results of SG-CBAOAB are worse than SG-CUBU.   The slower convergence rate is mainly due to the additional B-step, which brings in extra stochastic-gradient error. This additional error term leads to a less favorable overall error bound, meaning that CBAOAB with stochastic gradients needs more gradient evaluations than stochastic-gradient CUBU to attain the same level of accuracy.}
\end{remark}

\subsection{Results for SG-CKLMC}

Thanks to our refined analysis of the distance between the smooth surrogate and the target distribution, we are able to derive a tighter upper bound on the convergence rate of CKLMC with stochastic gradients (referred to as SG-CKLMC), as stated in the following theorem.
\begin{theorem}\label{thm:CKLMC_SG}
Under Assumptions~\ref{asm:radius}–\ref{asm:smooth} and Assumption~\ref{asm:stoch_grad} (i)–(ii), assume that $\lambda$ satisfies
\begin{align*}
0< \lambda < \Bigl( \frac{\sqrt{c_1}\,r}{p+2} \;\bigwedge\; \frac{\sqrt{c_1}\,r\,e^{\operatorname{osc}_{\K}(f)}}{3\sqrt{\pi}\,\operatorname{Vol}(\K)\,p}\Bigr).
\end{align*}
Then, for any $\gamma\geqslant \sqrt{m+M^\lambda},$ and any step size $h$ such that
\begin{align*}
0< h < \min\left\{\frac{\gamma 
\tau }{2K_1},\frac{2}{\gamma \tau},\frac{1}{10\gamma},\frac{m}{4\gamma M^\lambda}\right\}\,.
\end{align*}
It holds that
\begin{align*}
\wstwo(\nu_n^{\sf SG-CKLMC},\nu)
& \leqslant \bigg(1-\frac{0.75mh}{\gamma}\bigg)^n  \wstwo(\nu_0^{\sf SG-CKLMC},\nu) 
+ \frac{ \sqrt{2}M^\lambda h\sqrt{p}}{m} 
 + 2C(p,2)\lambda^{1/2+1/p} \\
 &\qquad + \frac{8\sqrt{2}\sigma_1 L}{m \gamma} \sqrt{\frac{C_{\mathcal{V}}}{1-2\tau}}\,. 
\end{align*}
Here, the constants $\tau$ and $K_1$ are defined by
\begin{align*}
\tau
= \frac{1}{2}\min\left\{ \frac{1}{4}, \frac{m}{M^\lambda + \gamma^2/2} \right\}, \quad
K_1
= \max\left\{
\frac{16\big((M^\lambda)^2 + 2\gamma (M^\lambda)^2 + \sigma_1^2 L^2\big)}{(1-2\tau)\gamma^2},
\frac{4M^\lambda + 2\gamma^2(1-\tau) + 8\gamma}{1-2\tau}
\right\}.
\end{align*}
The constant $C_{\mathcal{V}}$ is given by
\[
C_{\mathcal{V}}=\int_{\mathbb{R}^{2p}}\mathcal{V}(\bvartheta,\bv)\, \mu_0(d\bvartheta,d\bv)+ \frac{4}{\tau}\left(p+\frac{mf(0)}{2M^\lambda +\gamma^2}\right)\,,
\]
where the Lyapunov function $\mathcal{V}$ is defined as
\[
\mathcal{V}(\bvartheta,\bv)=U^\lambda(\bvartheta)+\frac{\gamma^2}{4}(\|\bvartheta+\gamma^{-1}\bv\|^2+\|\gamma^{-1}\bv\|^2-\tau\|\bvartheta\|^2)\,.
\]
% \begin{align*}
% \wstwo(\nu_n^{\sf SG-CKLMC},\nu)
% & \leqslant \bigg(1-\frac{0.75mh}{\gamma}\bigg)^n  \wstwo(\nu_0^{\sf SG-CKLMC},\nu) + \frac{ M^\lambda h\sqrt{p}}{m} + \frac{3 \sigma_1}{m} + 2C_1 \lambda ^{1/2+1/p}p \,,
% \end{align*}
% where $C_1>0$ is a universal constant.
Moreover, when the initial point $\btheta_0^{\sf SG-CKLMC}=\mathbf{0}$ and $\bv_0^{\sf SG-CKLMC} \sim \mathcal{N}_p(0, I_p)$, it holds that
\begin{align*}
\wsone(\nu_n^{\sf SG-CKLMC},\nu)
\leqslant \bigg(1-\frac{0.75mh}{\gamma}\bigg)^n \sqrt{\frac{p}{m}}+\frac{ \sqrt{2}M^\lambda h\sqrt{p}}{m} 
+ \frac{8\sqrt{2}\sigma_1 L}{m \gamma} \sqrt{\frac{C_{\mathcal{V}}}{1-2\tau}}
+ C(p,1)\lambda \,. 
\end{align*}

\end{theorem}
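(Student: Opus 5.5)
We follow the same template as for the previous theorems: split the error by the triangle inequality,
\begin{align*}
\wstwo(\nu_n^{\sf SG-CKLMC},\nu)\leqslant \wstwo(\nu_n^{\sf SG-CKLMC},\nu^\lambda)+\wstwo(\nu^\lambda,\nu).
\end{align*}
Proposition~\ref{prop:w2_tight} with $q=2$ (valid under the stated range of $\lambda$, and using $p>2$ so that $1/p+1/2<1$) bounds the approximation term by $C(p,2)\lambda^{1/2+1/p}$. The initial-condition term $\wstwo(\nu_0,\nu^\lambda)$ is converted back to $\wstwo(\nu_0,\nu)$ at the cost of another $C(p,2)\lambda^{1/2+1/p}$, which is where the factor $2$ comes from. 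The substantive work is therefore a nonasymptotic bound for the Euler-type kinetic Langevin discretization with stochastic gradients applied to the surrogate potential $U^\lambda$. By Lemmas 12 and 13 of~\cite{YuYu2}, $U^\lambda$ is $m$-strongly convex and $M^\lambda$-smooth.

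For the sampling error we use a synchronous coupling, as in Dalalyan--Riou-Durand, between the SG-CKLMC chain and the exact kinetic Langevin diffusion started at stationarity $\nu^\lambda$. The same Brownian increments are used for both, and we work in the rotated coordinates $(\bvartheta+\gamma^{-1}\bv,\bvartheta)$. One step of the deterministic part contracts at rate $1-0.75mh/\gamma$ in this norm, provided $\gamma\geqslant\sqrt{m+M^\lambda}$ and $h\leqslant m/(4\gamma M^\lambda)$. The local discretization error for the full gradient contributes $\mathcal{O}(M^\lambda h^{3/2}\sqrt{p}/\gamma)$ per step, using the stationary moment bound $\E\|\bv\|^2=p$. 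Summing the geometric series gives the term $\sqrt{2}M^\lambda h\sqrt{p}/m$. The stochastic-gradient error $\zeta_k=\mathcal{G}(\bvartheta_k,\omega_k)-\nabla f(\bvartheta_k)$ enters only through the velocity update. By Assumption~\ref{asm:stoch_grad}(i) it is a martingale difference conditional on the past. Its contributions therefore add in squares rather than linearly, giving a term of order $\frac{h}{\gamma}\big(\sum_k (1-0.75mh/\gamma)^{2(n-k)}\E\|\zeta_k\|^2\big)^{1/2}\lesssim \frac{\sqrt{h}}{\sqrt{m\gamma}}\sup_k(\E\|\zeta_k\|^2)^{1/2}$. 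By Assumption~\ref{asm:stoch_grad}(ii) this reduces to $\sigma_1L\sup_k(\E\|\bvartheta_k\|^2)^{1/2}$. The final constant $8\sqrt2\sigma_1L/(m\gamma)$ in the statement suggests bounding the accumulated noise more crudely, by the sum of $h\cdot$contraction factors, which yields $\frac{1}{m}\cdot$ the noise level up to $\gamma$ factors.

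The main obstacle is the uniform-in-time second moment bound $\sup_k\E\|\bvartheta_k\|^2\leqslant C_{\mathcal V}/((1-2\tau)\gamma^2/4)$. We obtain it from the Lyapunov function $\mathcal V$ in the statement, following the discrete drift argument of Eberle--Guillin--Zimmer and Durmus--Moulines for Euler-type schemes. The first step is to check $\mathcal V\geqslant \frac{\gamma^2}{4}(1-2\tau)\|\bvartheta\|^2$; this uses $\tau\leqslant 1/8$ and the elementary inequality $\|a+b\|^2+\|b\|^2\geqslant \frac12\|a\|^2$, together with $U^\lambda\geqslant0$ after shifting by $f(0)$. The second step is a one-step drift inequality $\E[\mathcal V(z_{k+1})\mid z_k]\leqslant(1-\gamma\tau h)\mathcal V(z_k)+h(\ldots)(p+mf(0)/(2M^\lambda+\gamma^2))$. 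Proving it requires Taylor expanding $U^\lambda$ with $M^\lambda$-smoothness, using the standard coercivity $\langle\nabla U^\lambda(\bvartheta),\bvartheta\rangle\geqslant \tau'(U^\lambda(\bvartheta)+\gamma^2\|\bvartheta\|^2/4)-\text{const}$ implied by strong convexity, and absorbing the second-order terms. These include the variance term $h^2\sigma_1^2L^2\|\bvartheta\|^2$, which is why $\sigma_1^2L^2$ appears in $K_1$. The absorption works only under $h<\gamma\tau/(2K_1)$ and $h<2/(\gamma\tau)$. Iterating the drift inequality gives $\sup_k\E\mathcal V(z_k)\leqslant C_{\mathcal V}$, and plugging this in closes the $\wstwo$ bound.

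For the $\wsone$ statement we use $\wsone\leqslant\wstwo$ for the sampling part and Proposition~\ref{prop:w2_tight} with $q=1$ for the approximation part, which gives $C(p,1)\lambda$. Since $\btheta_0=\mathbf 0$ is the minimizer of $f$ (Assumption~\ref{asm:f}) and $\bv_0\sim\mathcal N(0,I_p)$ matches the velocity marginal, it remains to bound $\wstwo(\nu_0,\nu^\lambda)$ by $\sqrt{\E_{\nu^\lambda}\|\bvartheta\|^2}\leqslant\sqrt{p/m}$. This follows from strong log-concavity of $\nu^\lambda$ with mode at the origin, because $\ell^\lambda_\K$ vanishes on $\K\ni 0$ and is minimized there.
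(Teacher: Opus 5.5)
Your proposal follows the paper's proof essentially step for step. It uses the triangle inequality with Proposition~\ref{prop:w2_tight}, with the second $C(p,2)\lambda^{1/2+1/p}$ coming from converting $\wstwo(\nu_0,\nu^\lambda)$ back to $\wstwo(\nu_0,\nu)$. It uses the synchronous coupling of \cite{dalalyan2020sampling} with an auxiliary full-gradient step. It accumulates the stochastic-gradient error linearly, controlled by the uniform moment bound built from $\mathcal V$; the paper imports that bound from \cite{Mert24} and \cite{gao2022global} rather than re-deriving it, and your square-summation variant is also valid and only sharpens the $\sigma_1$ term. Finally, it uses monotonicity plus $\E_{\nu^\lambda}\|\bvartheta\|^2\leqslant p/m$ for $\wsone$.

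Two slips should be fixed. First, the one-step full-gradient error is $0.75M^\lambda h^2\sqrt p$ in the paper's $x_n$-scale, so it is of order $h^2$ rather than $h^{3/2}$, because $\bL_s-\bL_0=\int_0^s\bV_r\,\rmd r$; a linear geometric sum of an $h^{3/2}$ error would only give $h^{1/2}$. Second, $\|a+b\|^2+\|b\|^2\geqslant\frac12\|a\|^2$ gives $\mathcal V\geqslant U^\lambda+\frac{\gamma^2}{8}(1-2\tau)\|\bvartheta\|^2$, not $\frac{\gamma^2}{4}(1-2\tau)\|\bvartheta\|^2$. This yields the moment bound $8C_{\mathcal V}/((1-2\tau)\gamma^2)$, which is what produces the constant $8\sqrt2$.
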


Adopting $M^\lambda=\bigo(1/\lambda^2)$ in the previously stated theorem, we obtain the following corollary.
\begin{corollary}
\label{cor:CKLMC_SG}
% Under Assumptions~\ref{asm:radius}-~\ref{asm:strong_c} and Assumption~\ref{asm:stoch_grad}, let the error level $\varepsilon\in(0,1)$ be small. Let us set $$\gamma=\sqrt{m+M+M_{\K}/\lambda^2},$$ and
Let the error level $\varepsilon\in(0,1)$ be small.
%, and initialize the algorithm with  $\bvartheta_0^{\sf SG-CKLMC}$ set to the minimizer of $f$, and $\bv_0^{\sf SG-CKLMC} \sim \mathcal{N}_p(0, I_p)$.\\
\begin{itemize}
    \item[(a)] Set $\lambda=\Theta(h^{1/4})$, and choose $h>0$, the batch size and the number of iterations $b, n\in\mathbb N_+$ so that
\begin{align*}
    h=\mathcal O\left(\varepsilon^{\frac{8p}{p+2}}\right), \quad b=\Omega(\varepsilon^{-\frac{8p}{p+2}}), \quad n=\widetilde{\Omega}\left(\varepsilon^{\frac{-10p}{p+2}}\right),
\end{align*}
then we have $\wstwo(\nu_n^{\sf SG-CKLMC},\nu)=\logo(\varepsilon)$ after $nb$  stochastic gradient evaluations.
    \item[(b)] Set $\lambda=\Theta(h^{1/4})$, and choose $h>0$, the batch size and the number of iterations $b, n\in\mathbb N_+$ so that
\begin{align*}
    h=\mathcal O\left(\varepsilon^4\right),\quad b=\Omega(\varepsilon^{-4}), \quad n=\widetilde{\Omega}\left(\varepsilon^{-5}\right),
\end{align*}
then we have $\wsone(\nu_n^{\sf SG-CKLMC},\nu)=\logo(\varepsilon)$ after $nb$  stochastic gradient evaluations.
\end{itemize}

\end{corollary}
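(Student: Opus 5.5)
The statement to prove is Corollary~\ref{cor:CKLMC_SG}. The plan is to plug the prescribed scalings into the bound of Theorem~\ref{thm:CKLMC_SG}, keep only the dependence on $h$ (equivalently on $\varepsilon$), and make each of the four error terms $\tilde{\mathcal{O}}(\varepsilon)$.

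\textbf{Scalings.} With $\lambda=\Theta(h^{1/4})$ we have $M^\lambda=\mathcal{O}(\lambda^{-2})=\mathcal{O}(h^{-1/2})$. We take $\gamma=\Theta(\sqrt{M^\lambda})=\Theta(h^{-1/4})$, the smallest value allowed by Theorem~\ref{thm:CKLMC_SG}. Then $\tau=\Theta(1)$, since $m/(M^\lambda+\gamma^2/2)$ is of order $\lambda^2$ but is capped by $1/4$ only from above. I flag that $\tau$ may in fact scale like $\lambda^2$; I will track this, and it is the main obstacle below.

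The step-size restrictions $h<\frac{1}{10\gamma}=\Theta(h^{1/4})$ and $h<\frac{m}{4\gamma M^\lambda}=\Theta(h^{3/4})$ hold for small $h$. The constraint $h<\gamma\tau/(2K_1)$ is checked similarly, using $K_1=\mathcal{O}(\gamma^2/\tau)$. The admissibility condition on $\lambda$ in the theorem holds once $h$ is small.

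\textbf{Error terms.}
\begin{itemize}
\item The discretization term is $\sqrt2 M^\lambda h\sqrt p/m=\mathcal{O}(h^{1/2})$.
\item The approximation term is $2C(p,2)\lambda^{1/2+1/p}=\mathcal{O}(h^{(p+2)/(8p)})$ in case (a), and $C(p,1)\lambda=\mathcal{O}(h^{1/4})$ in case (b).
\item The contraction term is $(1-0.75mh/\gamma)^n\le \exp(-c\,n h^{5/4})$ times $\wstwo(\nu_0,\nu)$, or times $\sqrt{p/m}$ in case (b).
\item The stochastic-gradient term is $\frac{8\sqrt2\sigma_1L}{m\gamma}\sqrt{C_{\mathcal V}/(1-2\tau)}$.
\end{itemize}

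\textbf{Choice of $h$ and $n$.} The approximation term is the bottleneck. In case (a), setting $h^{(p+2)/(8p)}=\varepsilon$ gives $h=\varepsilon^{8p/(p+2)}$, and then $h^{1/2}\le\varepsilon$. In case (b), setting $h^{1/4}=\varepsilon$ gives $h=\varepsilon^4$. The contraction term is at most $\varepsilon$ once $n\gtrsim h^{-5/4}\log(1/\varepsilon)$. This gives $n=\tilde\Omega(\varepsilon^{-10p/(p+2)})$ in case (a) and $n=\tilde\Omega(\varepsilon^{-5})$ in case (b), matching the statement.

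\textbf{Main obstacle: the stochastic-gradient term.} Using $\sigma_1^2=\mathcal{O}(1/b)$, this term is $\mathcal{O}\bigl(b^{-1/2}\gamma^{-1}\sqrt{C_{\mathcal V}}\bigr)$, so I need to bound $C_{\mathcal V}$ at the initialization $\btheta_0=\mathbf 0$, $\bv_0\sim\mathcal N(0,I_p)$.
\begin{itemize}
\item At this initialization $\mathbb E\,\mathcal V(\btheta_0,\bv_0)=U^\lambda(0)+\tfrac12 p=f(0)+\mathcal{O}(p)$, because $0\in\K$ by Assumption~\ref{asm:radius}, so $\ell^\lambda_\K(0)=0$.
\item The remaining piece of $C_{\mathcal V}$ is $\frac4\tau(p+\dots)$. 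This is $\mathcal{O}(1)$ if $\tau=\Theta(1)$, and $\mathcal{O}(\lambda^{-2})=\mathcal{O}(\gamma^2)$ if $\tau\asymp\lambda^2$.
\end{itemize}
In the worse case the factor $\gamma^{-1}$ still cancels $\sqrt{C_{\mathcal V}}=\mathcal{O}(\gamma)$, so the term is $\mathcal{O}(b^{-1/2})$ either way. Requiring $b^{-1/2}\lesssim\varepsilon$ gives $b=\Omega(\varepsilon^{-2})$, which the stated choices $b=\Omega(\varepsilon^{-8p/(p+2)})$ and $b=\Omega(\varepsilon^{-4})$ satisfy since $8p/(p+2)\ge2$.

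The remaining care is to confirm that the constraint $h<\gamma\tau/(2K_1)$ still holds under the worst-case $\tau$ with $h=\varepsilon^{8p/(p+2)}$ or $h=\varepsilon^4$. Summing the four terms then gives $\tilde{\mathcal{O}}(\varepsilon)$ in both cases, after $nb$ stochastic gradient evaluations.
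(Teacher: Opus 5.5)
Your route is the paper's: the corollary comes from inserting $M^\lambda=\mathcal O(\lambda^{-2})$ and $\lambda=\Theta(h^{1/4})$ into Theorem~\ref{thm:CKLMC_SG}. Your bookkeeping of the four error terms is right, including $\sqrt{C_{\mathcal V}}/\gamma=\mathcal O(1)$. The gap is the hypothesis you leave open, $h<\gamma\tau/(2K_1)$, and the justification you sketch for it fails.

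\begin{itemize}
\item $\tau$ is not $\Theta(1)$. The minimum of $\frac14$ and a quantity of order $\lambda^2$ is of order $\lambda^2$. So $\tau\asymp\lambda^2$ is the actual value, not a worst case.
\item With $\gamma\asymp\lambda^{-1}$, the dominant part of $K_1$ is $32\gamma(M^\lambda)^2/\gamma^2\approx 32(M^\lambda)^2/\gamma\asymp\gamma^3\asymp\lambda^{-3}$.
\item Hence $K_1=\mathcal O(\gamma^2/\tau)$ is false if $\tau=\Theta(1)$. With the true $\tau\asymp\lambda^2$ it is too weak: it only gives $\gamma\tau/K_1=\Omega(\tau^2/\gamma)=\Omega(h^{5/4})$, which does not dominate $h$.
\end{itemize}

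Computing directly with $\gamma^2=m+M^\lambda$ and $\tau=m/(3M^\lambda+m)$ gives
\[
\frac{\gamma\tau}{2K_1}=\frac{m}{192\,(M^\lambda)^2}\,(1+o(1))\approx\frac{m\,\lambda^4}{192\,C_{\mathrm{proj}}^2}.
\]
So the condition reads $h\lesssim\lambda^4$. Under $\lambda=\Theta(h^{1/4})$ both sides are of order $h$, so this does not follow from taking $h$ small. It holds if and only if the implied constant in $\lambda=c_0h^{1/4}$ is large enough, roughly $c_0^4>192\,C_{\mathrm{proj}}^2/m$. This constraint is what forces the exponent $1/4$: with $\lambda=\Theta(h^{1/2})$, say, the theorem would not apply at all.

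Once $c_0$ is fixed this way, the argument closes:
\begin{itemize}
\item The other step-size conditions need only $h\lesssim\lambda^3$ or weaker, so they hold for small $h$.
\item Your four error estimates change only by constant factors, and the corollary follows.
\end{itemize}

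As a side remark, your estimate of the stochastic-gradient term shows $b=\Omega(\varepsilon^{-2})$ already suffices. The stated batch sizes are $b=\Omega(1/h)$ in both parts, so they are more than enough.
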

% Note that when $p>2,$ it holds that
% $\frac{5p+2}{p+2} \geq 5 - \frac{4}{p}$. This allows us to simplify the requirement on the number of
% iterations $n$ needed to achieve $\tilde{\mathcal{O}}(\varepsilon)$ for $n = \tilde{\Omega}(\varepsilon ^{-5+4/p})$.
By direction comparison, with the results from Corollary \ref{cor:CUBU_SG}, we notice the SG-CKLMC has worse complexity rate, especially related to $b$, where for SG-CUBU it is of order $\mathcal{O}(1)$. Therefore in both cases of full and stochastic gradients, the constrained splitting schemes offer favorable results.
%\subsection{Constrained sampling with splitting methods}

% \begin{remark}
% \label{rem:sg}
% We briefly remark, that when comparing CKLMC, from \cite{YuYu2} and the above result for SG-CKLMC, we see an identical complexity rate  in terms of $\wsq(\nu_n,\nu)$ for $q=1$. This is relevant as it provides a greater motivation on the use of stochastic gradients.
% \end{remark}

\begin{remark}
\label{rem:im}
We would like to remark that despite the favorable complexity of our constrained splitting schemes, the constrained algorithm of the randomized midpoint method (RMM) performs better \cite{YuYu2}. This is because the RMM, in general, can be viewed to be optimal in the kinetic Langevin regime \cite{CLW20,lu2022explicit}, which also extends to the constrained setting. This is expected due to its favorable complexity.
However, our motivation from using the splitting methods, in particular UBU, over RMM is that (i) it attains a strong order of 2, which has proven useful in the context of unbiased estimation, (ii) it requires only one gradient evaluation per step, and finally (iii) has better dimension dependence, i.e. $\mathcal{O}(p^{1/4})$, compared to for RMM which has $\mathcal{O}(p^{1/3})$, for dimension $p>0$.  Therefore these reasons act as our motivation for our intended work and analysis. 
\end{remark}

\section{Numerical Experiments}
\label{sec:num}
In this section we provide numerical experiments comparing the algorithms discussed in Table \ref{table:summary}, for constrained sampling problems. We consider three problems, where the first two will be on a simple sampling problem subject to simplex constraints, where our final experiment will include a Bayesian linear regression problem.
Due to the difficulty of computing high-dimensional Wasserstein distances, we focus on a
two-dimensional setting in this section to facilitate clearer presentation and easier visualization.
As a result, we will consider a 2-dimensional normal distribution, and consider the particular choices of $\mathcal{K}$ which we will impose on the target distribution $\nu$. After-which, we will a final example on Bayesian linear regression.

\subsection{Circular Constraint}

For our first numerical experiment, we consider  uses a Euclidean ball centered at the origin, which we write as $\mathcal{K} = \mathcal{B}_2(0,R)$, with radius denoted by $R$. Here we specify our radius as $R=0.5$, and we employ a Gauge projection, because of the structure of $\mathcal{K}$, as it is symmetric. We will run each algorithm for $n=1000$ iterations and present the how well each algorithm samples under $\mathcal{K}$. We will choose the stepsize sufficiently small, but not overly small which will be a choice of $h=0.1$. We will also use 64 minibatches for each stochastic gradient algorithm introduced. This general setup will be considered for all future experiments. Our first set of simulations are provided in  Figure \ref{fig:exp1}, which present a comparison between the the algorithms with full gradients. Furthermore, Figure \ref{fig:exp1_sg} is a comparison of the stochastic gradient methods.

\begin{figure}[h!]
  \centering
  \hspace*{-1.5cm}                                                 
  \includegraphics[width=1.2\linewidth]{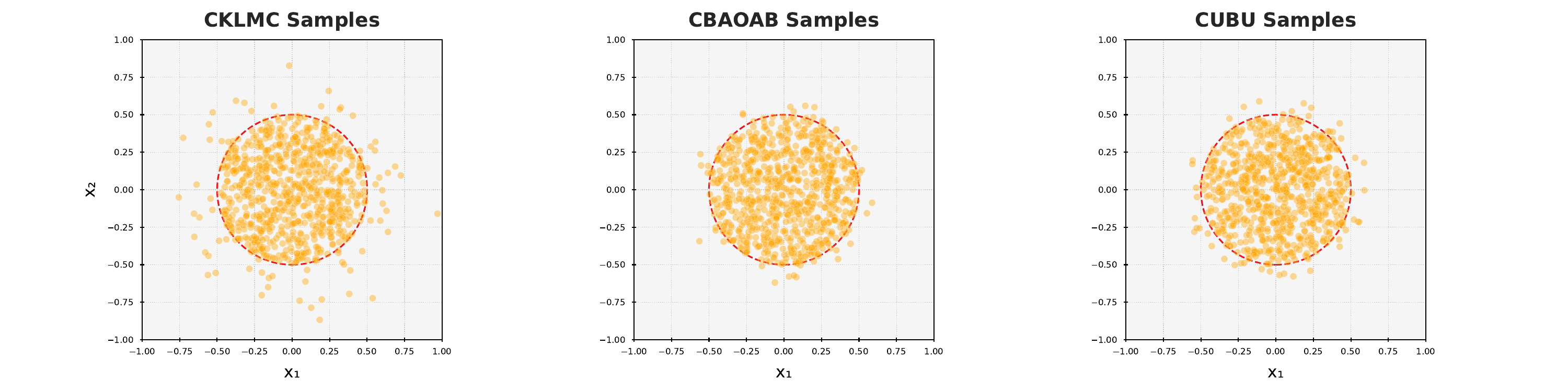}
  \caption{Comparison between different constraint algorithms, when aiming to sample from the circular convex set $\mathcal{K}$ defined by $\mathcal{K} = \mathcal{B}_2(0,R)$. We consider $n=1000$ iterations and $2500$ time steps.}
\label{fig:exp1}
\end{figure}

\begin{figure}[h!]
  \centering
  \hspace*{-1.5cm}    
  \includegraphics[width=1.2\linewidth]{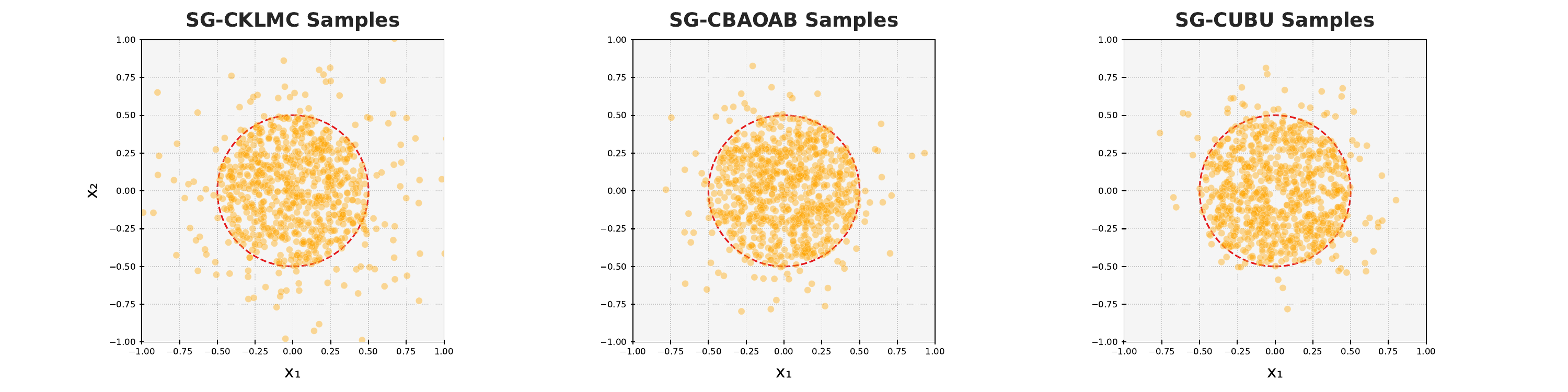}
  \caption{Comparison between different constraint algorithms with stochastic gradients, when aiming to sample from the circular convex set $\mathcal{K}$ defined by $\mathcal{K} = \mathcal{B}_2(0,R)$. We consider $n=1000$ iterations and $2500$ time steps.}
\label{fig:exp1_sg}
\end{figure}

Figure \ref{fig:exp1} and \ref{fig:exp1_sg} demonstrates the sampling capabilities under the circular constraints. As we can see firstly from Figure \ref{fig:exp1}, the sampling capability of CKLMC, which uses the Euler-Maruyama scheme, performs the worst, with many samples outside the convex set $\mathcal{K}$. However as we move towards the splitting schemes of CUBU, and CBAOAB we get a similar performance, in terms of the number of samples within the set. This is verifies the theoretical rates from Table \ref{table:summary}. The extension to stochastic gradients is also considered in Figure \ref{fig:exp1_sg}, where we notice similar results as the full-gradient case.

\subsection{Triangular Constraint}

For our second numerical experiment, we now consider a modified constrained set which is  a 3-simplex shifted away from the origin, of the form
\begin{equation}
\label{eqn:convex_set_t}
\mathcal{K} = \Big\{ (x_1,x_2) \in \mathbb{R}^2:  x_1,x_2 \geq -0,3, \quad x_1+x_2 \leq 0.6  \Big\}=: \triangle ,
\end{equation}
where again we apply the Gauge projection to enforce the constraint.  As before we run each algorithm for $n=1000$ iterations and compare them under $\mathcal{K}$. Figure \ref{fig:exp2} below present such a comparison between the the algorithms with full gradients, and Figure \ref{fig:exp2_sg} is a comparison of the stochastic gradient methods. By comparing the results, with the circular constraint, we see almost identical results which highlight the improved performance through the use of splitting schemes with full gradients, for the kinetic Langevin dynamics. Again, the same effect with stochastic gradients, where the splitting schemes perform better.

\begin{figure}[h!]
  \centering
  \hspace*{-1.5cm}                                                           
  \includegraphics[width=1.2\linewidth]{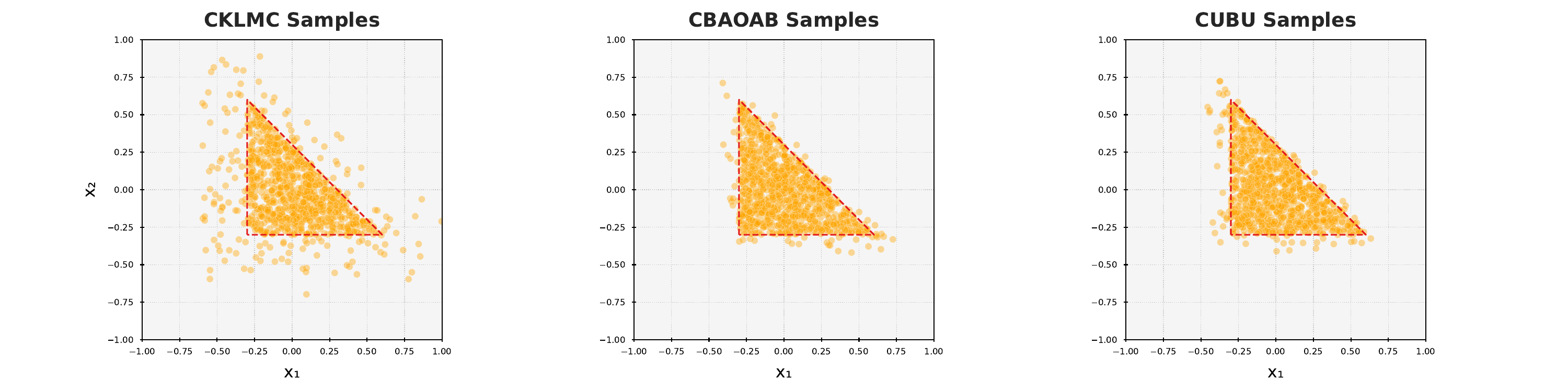}
  \caption{Comparison between different constraint algorithms, when aiming to sample from the triangular convex set $\mathcal{K}$ defined in Eqn. \eqref{eqn:convex_set_t}. We consider $n=1000$ iterations and $2500$ time steps.}
\label{fig:exp2}
\end{figure}

\begin{figure}[h!]
  \centering
  \hspace*{-1.5cm}                                                           
  \includegraphics[width=1.2\linewidth]{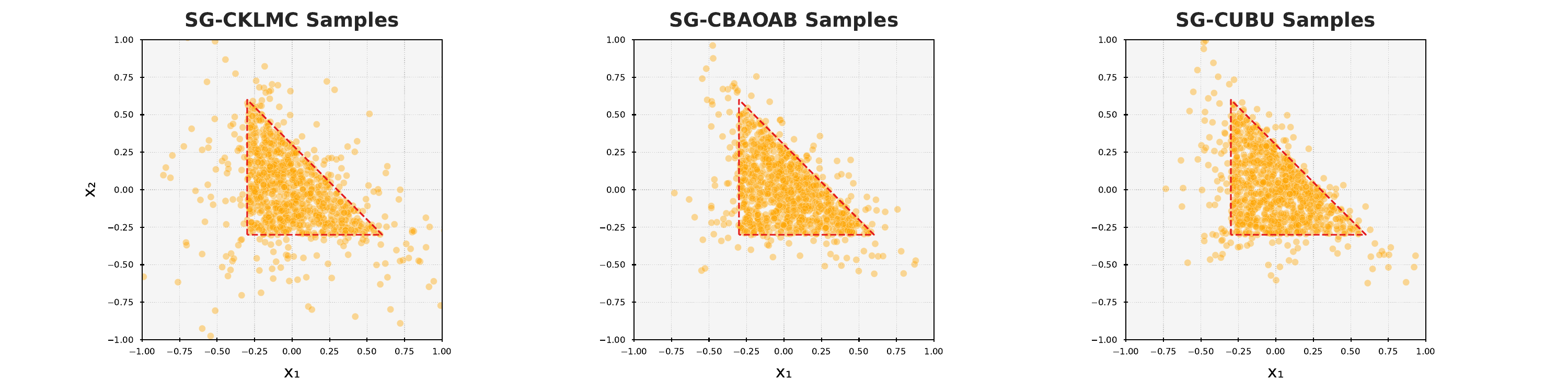}
  \caption{Comparison between different constraint algorithms with stochastic gradients, when aiming to sample from the triangular convex set $\mathcal{K}$ defined in Eqn. \eqref{eqn:convex_set_t} . We consider $n=1000$ iterations and $2500$ time steps.}
\label{fig:exp2_sg}
\end{figure}

\subsection{Square Constraint}

We now consider a final constrained on this problem, which is a square constraint. It will follow similarly to the triangular constraint, where now we define our set as
\begin{equation}
\label{eqn:convex_set_s}
\mathcal{K} = \Big\{ (x_1,x_2) \in \mathbb{R}^2: -0.3 \leq x_1,x_2 \leq 0.6  \Big\} =: \square,
\end{equation}
where our results are presented in Figures \ref{fig:exp3} and \ref{fig:exp3_sg}. As expected, the constrained splitting schemes performing better compared to CKLMC and SG-CKLMC.
%\textcolor{red}{Still need to decide whether to include it?}

\begin{figure}[h!]
  \centering
  \hspace*{-1.5cm}                                                           
  \includegraphics[width=1.2\linewidth]{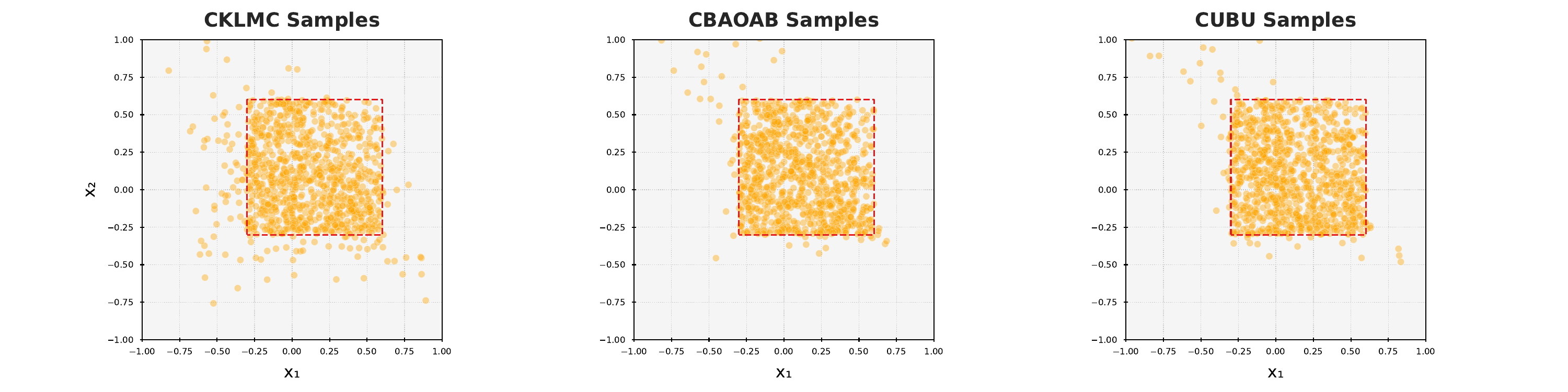}
  \caption{Comparison between different constraint algorithms, when aiming to sample from the square convex set $\mathcal{K}$ defined in Eqn. \eqref{eqn:convex_set_s} . We consider $n=1000$ iterations and $2500$ time steps.}
\label{fig:exp3}
\end{figure}

\begin{figure}[h!]
  \centering
  \hspace*{-1.5cm}                                                           
  \includegraphics[width=1.2\linewidth]{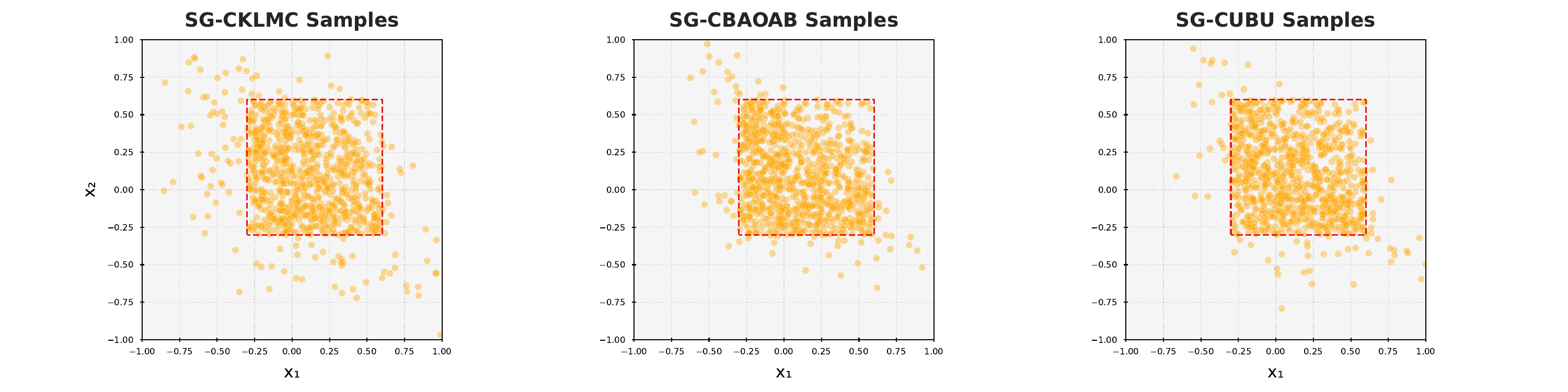}
  \caption{Comparison between different constraint algorithms with stochastic gradients, when aiming to sample from the square convex set $\mathcal{K}$ defined in Eqn. \eqref{eqn:convex_set_s} . We consider $n=1000$ iterations and $2500$ time steps.}
\label{fig:exp3_sg}
\end{figure}

To summarize our first experiment, we provide a comparison of runtime, in seconds, between all the algorithms for the different choices of $\mathcal{K}$.
This is presented through Table \ref{table:comp}. As we expect the CKLMC is cheaper, however performs worse as demonstrated previously.

\begin{table}[h!]
\begin{center}
%\begin{tabular}{ |c|c|c|c|c|c| } 
\begin{tabular}{ |c|c|c|c|c| } 

\hline
\textbf{Algorithm} &  Runtime ($\mathcal{K} = \mathcal{B}_2(0,R)$) &  Runtime ($\mathcal{K} = \triangle$)  &  Runtime ($\mathcal{K} = \square$)\\
\hline
CKLMC & 162.45 sec. &  161.96 sec. & 162.39 sec.\\ 
CUBU &  180.34 sec. &  180.28 sec.  & 181.63 sec.\\ 
CBAOAB & 181.22 sec. & 180.59 sec. & 180.97 sec.  \\ 
SG-CKLMC  & 146.82 sec. & 147.15 sec. & 147.03 sec.\\ 
SG-CUBU  & 163.72 sec. & 164.05 sec. & 163.97 sec. \\ 
SG-CBAOAB  & 163.85 sec. & 164.19 sec. & 164.13 sec. \\ 
\hline
\end{tabular}
\end{center}
\caption{Comparison of the computational costs for constrained sampling problems, based on the different constrained set $\mathcal{K}$.}
\label{table:comp}
\end{table}

\subsection{Bayesian Constrained Linear Regression}

Our final numerical example we consider is a constrained version of Bayesian linear regression.
Such examples exhibit applications in machine learning. An example of this is if the constraint set is an $\ell_p$-ball around the origin, for $p=1$, we obtain the Bayesian Lasso regression, and for $p=2$, we get the Bayesian ridge regression. We will consider this for synthetic data.
We will specifically consider the case when $p=1$, which corresponds to the Bayesian Lasso regression, where we have a synthetic 2-dimensional problem. For our data we will generate $n=$10,000 data points $(a_j, y_j)$ based on the following regression model
\begin{equation}
\label{eq:regression}
 y_j={\btheta_{\star}}^{\top}a_j+\eta_j,\quad \quad \eta_j \sim \mathcal{N}\left(0,0.25\right), 
\end{equation}
where $\btheta_{\star}=[1,1]^{\top}$ and $ a_j \sim \mathcal{N}(0,I)$.
We take the constraint set to be 
\begin{equation*}
\mathcal{C}=\left\{\btheta:\|\btheta\|_1\leq 1\right\}.
\end{equation*}
Our prior distribution is chosen as the uniform distribution, where the constraints are satisfied. Finally, through Bayes' Theorem, the posterior distribution we have is
\begin{equation*}
\pi(\btheta) = \frac{e^{\sum_{j=1}^{10,000}-\frac{1}{2}(y_j-x^{\top}a_j)^2}\cdot \mathbbm{1}_{\mathcal{C}}}{\int_{\mathbb{R}^2} e^{\sum_{j=1}^{10,000}-\frac{1}{2}(y_j-x^{\top}a_j)^2}\cdot \mathbbm{1}_{\mathcal{C}} d\btheta} \propto e^{\sum_{j=1}^{10,000}-\frac{1}{2}(y_j-x^{\top}a_j)^2}\cdot \mathbbm{1}_{\mathcal{C}}, 
\end{equation*}
such that $\mathbbm{1}_{\mathcal{C}}$ is the indicator function for the constraint set $\mathcal{C}$. For this set of experiments, we take the batch size $b=50$ and run our algorithms with
$\eta=0.001$, the learning rate $\delta= 10^{-5}$ where we reduce $\delta$ by $15\%$ every 2000 iterations. The total number of iterations is set to 8,000. 

Our results of the simulations are presented in Figure \ref{fig:reg1}  and Figure \ref{fig:reg2}. In Figure \ref{fig:reg1} we present our results for constrained algorithms without stochastic gradients. As we can observe the worst performing algorithm is CKLMC, where most of the ``mass" of the distribution is close to the yellow box, which is denoted by $\btheta_{\star} = [1,1]^{\top}$. However the distribution covers the full constrained set. This differs to the splitting schemes which perform better as much of the mass is based on the right boundary of the square, while the distribution not covering the full square. Figure \ref{fig:reg2} extends this to stochastic gradients where a similar phenomenon is observed as the previously conducted experiments. %A table comparing the run-times of the algorithms is provided in Table \ref{table:comp2}.
%\textcolor{red}{Look to modify/unify notation, related to the $\btheta$ and $x$ etc...}

%We also consider an ellipsoidal constraint set %$\pi(x)\propto e^{\sum_{j=1}^n-\frac{1}{2}(y_j-x^{\top}a_j)^2}\cdot \mathbbm{1}_{\mathcal{C}_1}$ where
%\begin{equation*}
%\mathcal{C} := \left\{ x~:~ (x-\bar{a}_1)^{\top} Q_1 (x-\bar{a}_1) \leq \bar{b}_1\right\},
%\end{equation*}
%for the same posterior distribution 
%\begin{equation*}
%\pi(x)\propto e^{\sum_{j=1}^n-\frac{1}{2}(y_j-%x^{\top}a_j)^2}\cdot \mathbbm{1}_{\mathcal{C}}, 
%\end{equation*}
%where $Q_1 \in \mathbb{R}^{2\times 2}$ is positive definite, $\bar{a}_1 \in \mathbb{R}^2$ is a real vector and $\bar{b}_1 >0$ is a real scalar.
%experiments with the ellipsoid constraints $(x-a_i)^{\top} Q_i (x-a_i) \leq b_i$. 
%We take $x_{\star}=[2,2]^{\top}$ and 
%$$

\begin{figure}[h!]
\centering
% \hspace*{-1.5cm}                                                           
\includegraphics[width=0.32\linewidth]{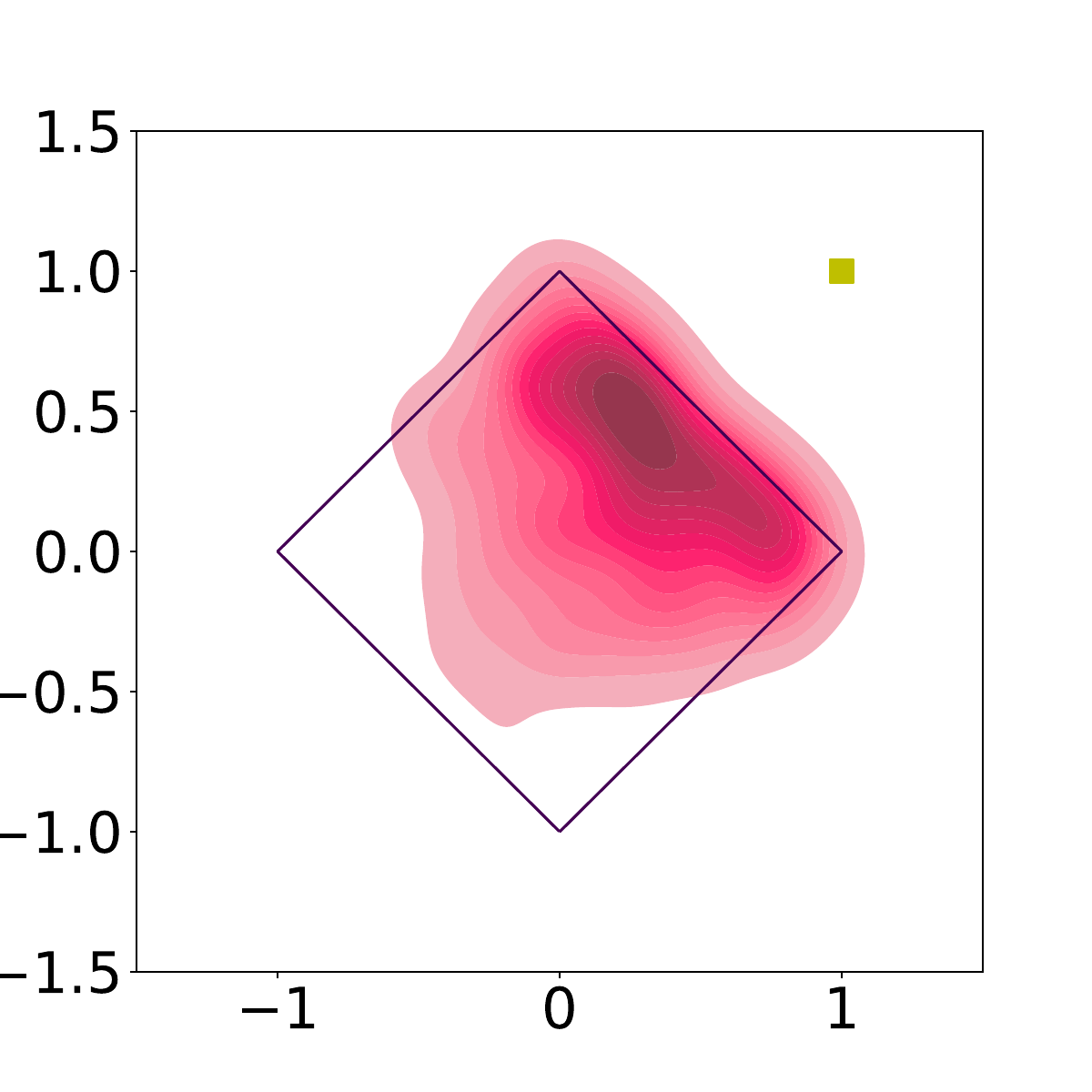}
\includegraphics[width=0.32\linewidth]{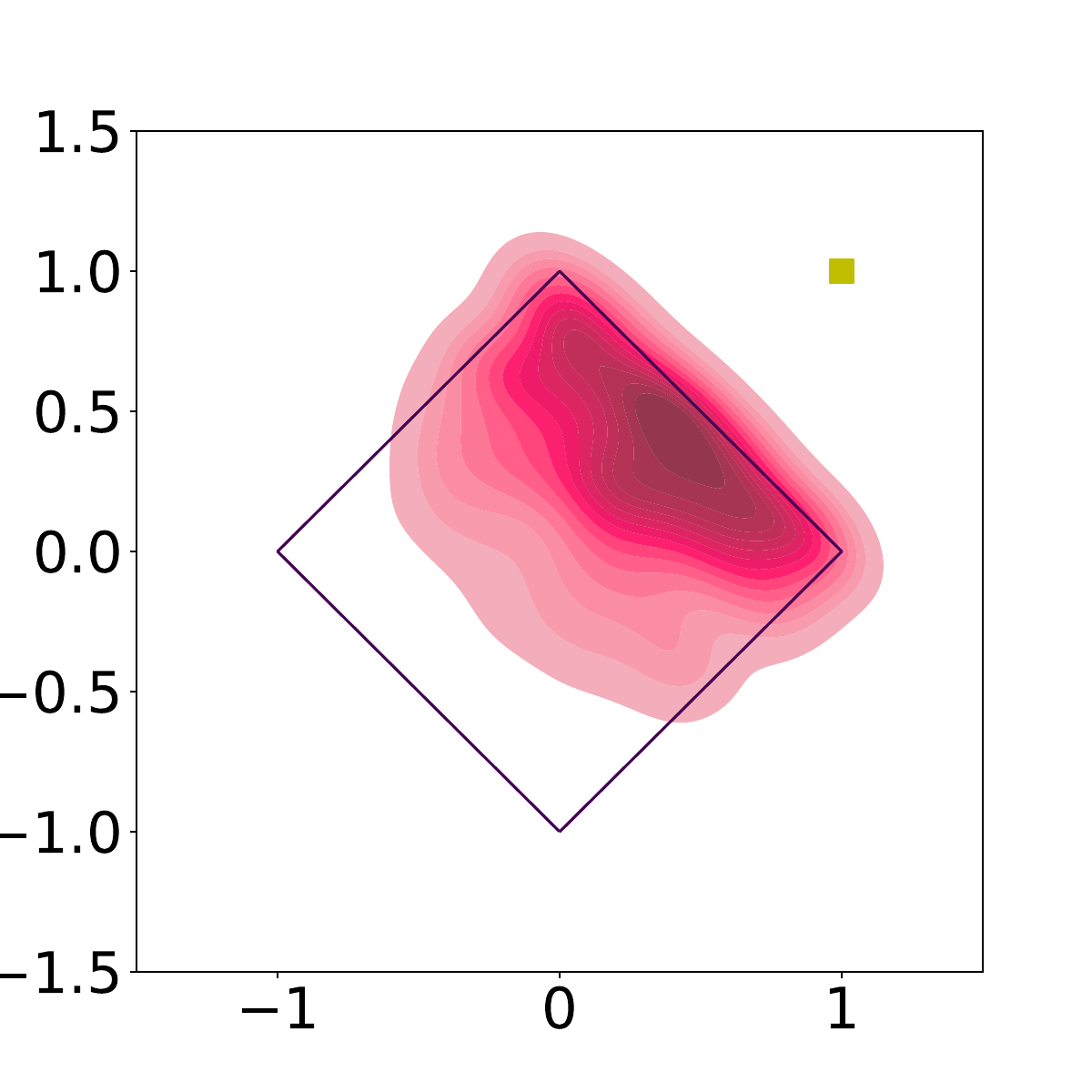}
\includegraphics[width=0.32 \linewidth]{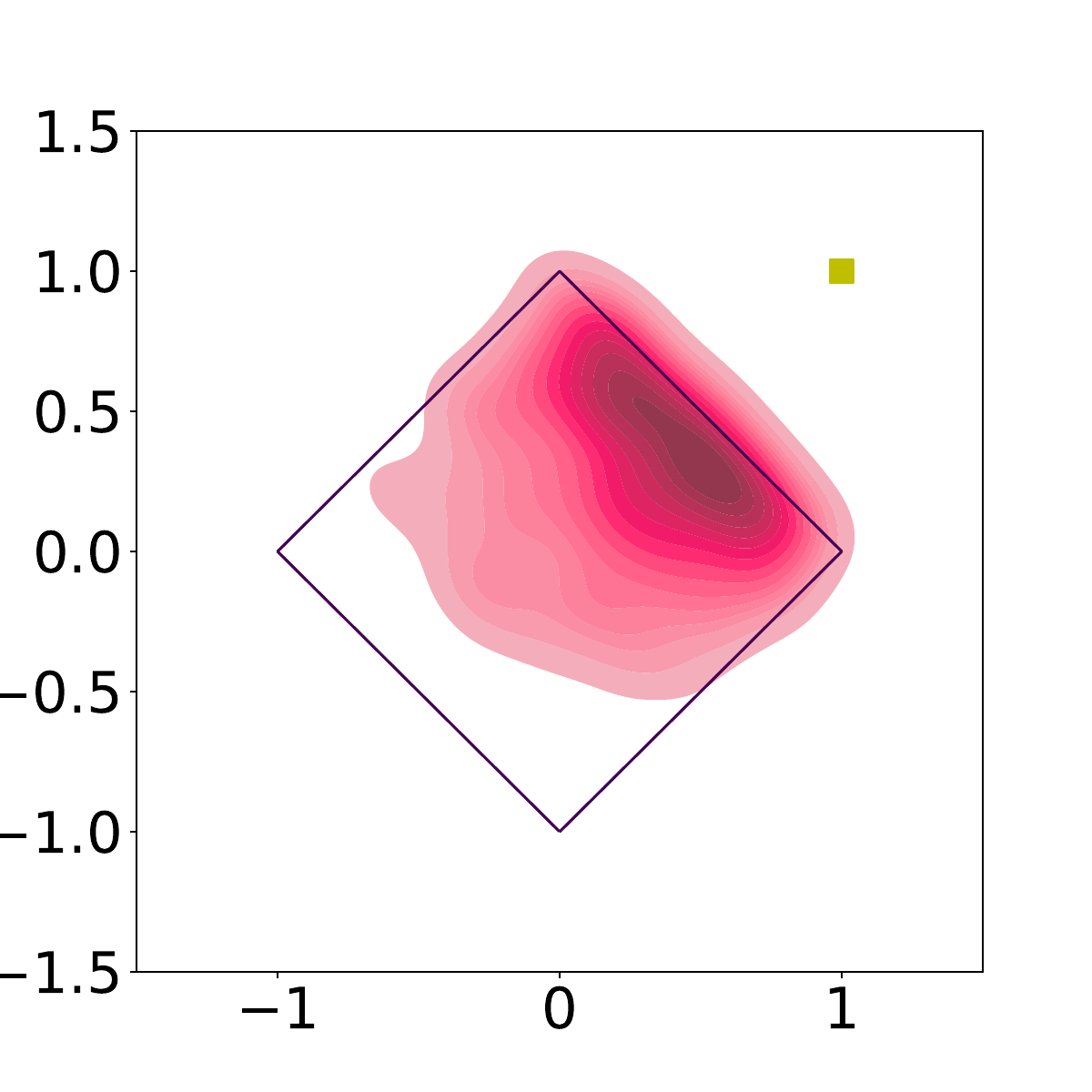}
\caption{Experiments of the constrained Bayesian linear regression problem. Left figure: simulation using the CKLMC. Middle figure: CBAOAB. Right figure: CUBU. The constraint is yellow box.}
\label{fig:reg1}
\end{figure}

\begin{figure}[h!]
\centering
% \hspace*{-1.5cm}                                                           
\includegraphics[width=0.32\linewidth]{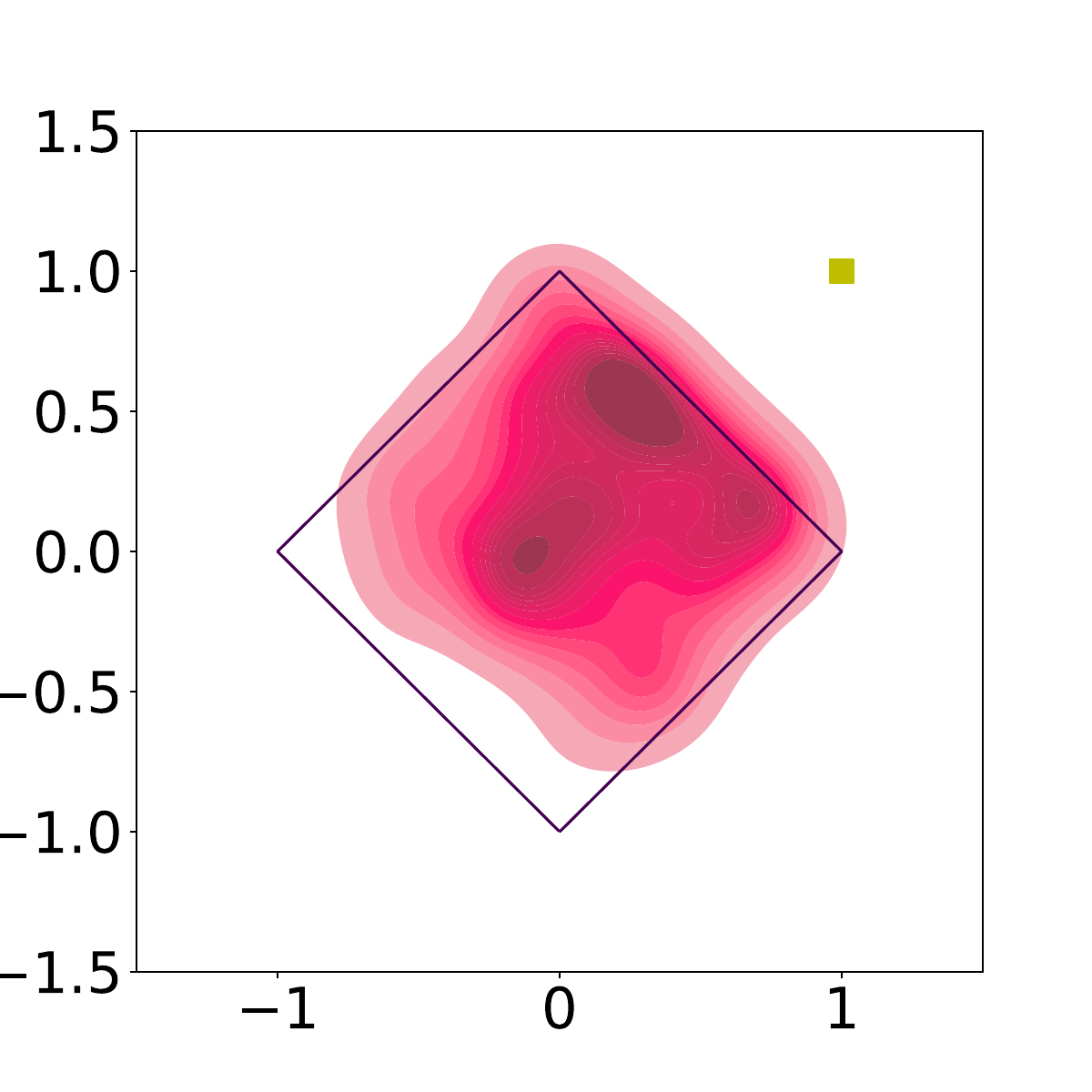}
\includegraphics[width=0.32\linewidth]{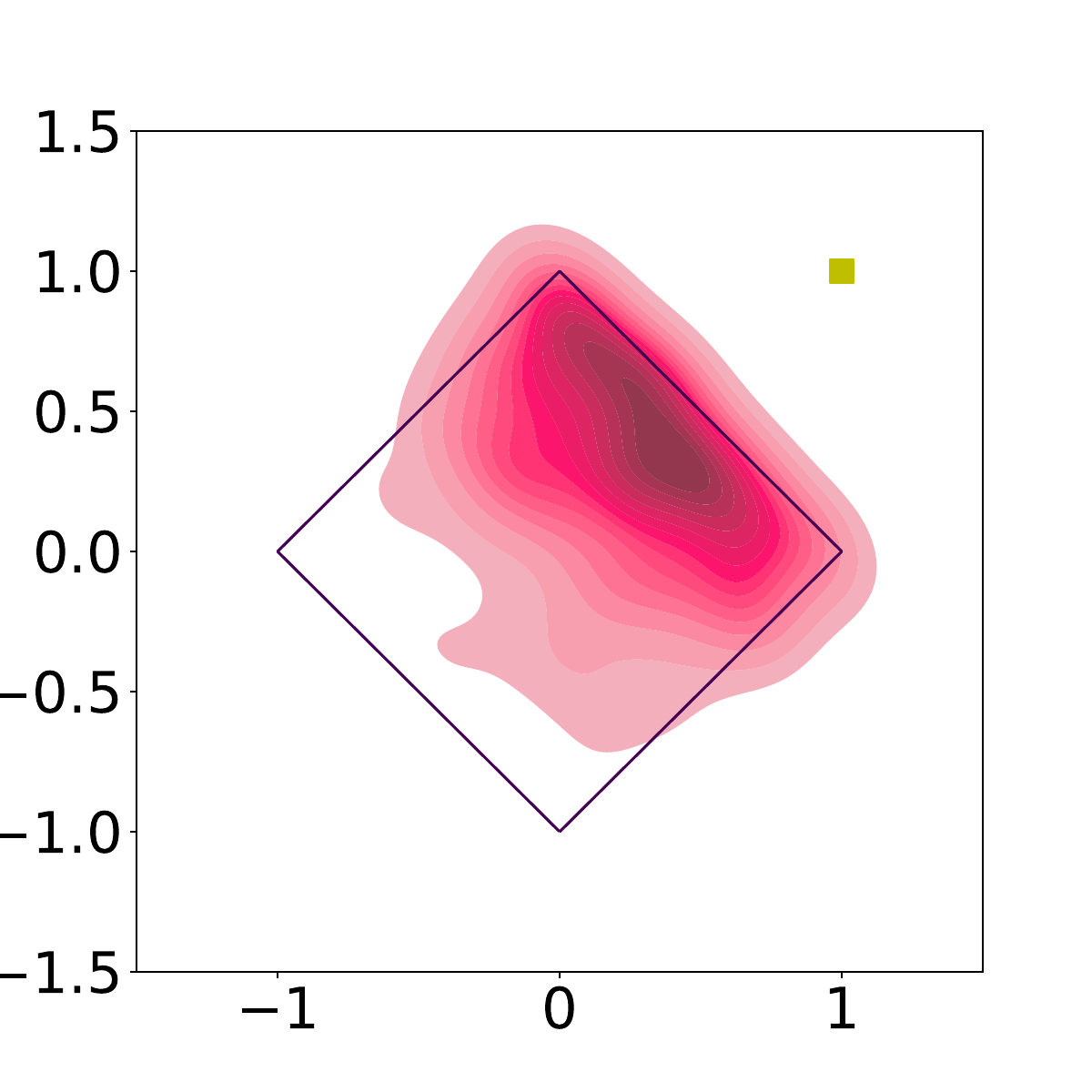}
\includegraphics[width=0.32\linewidth]{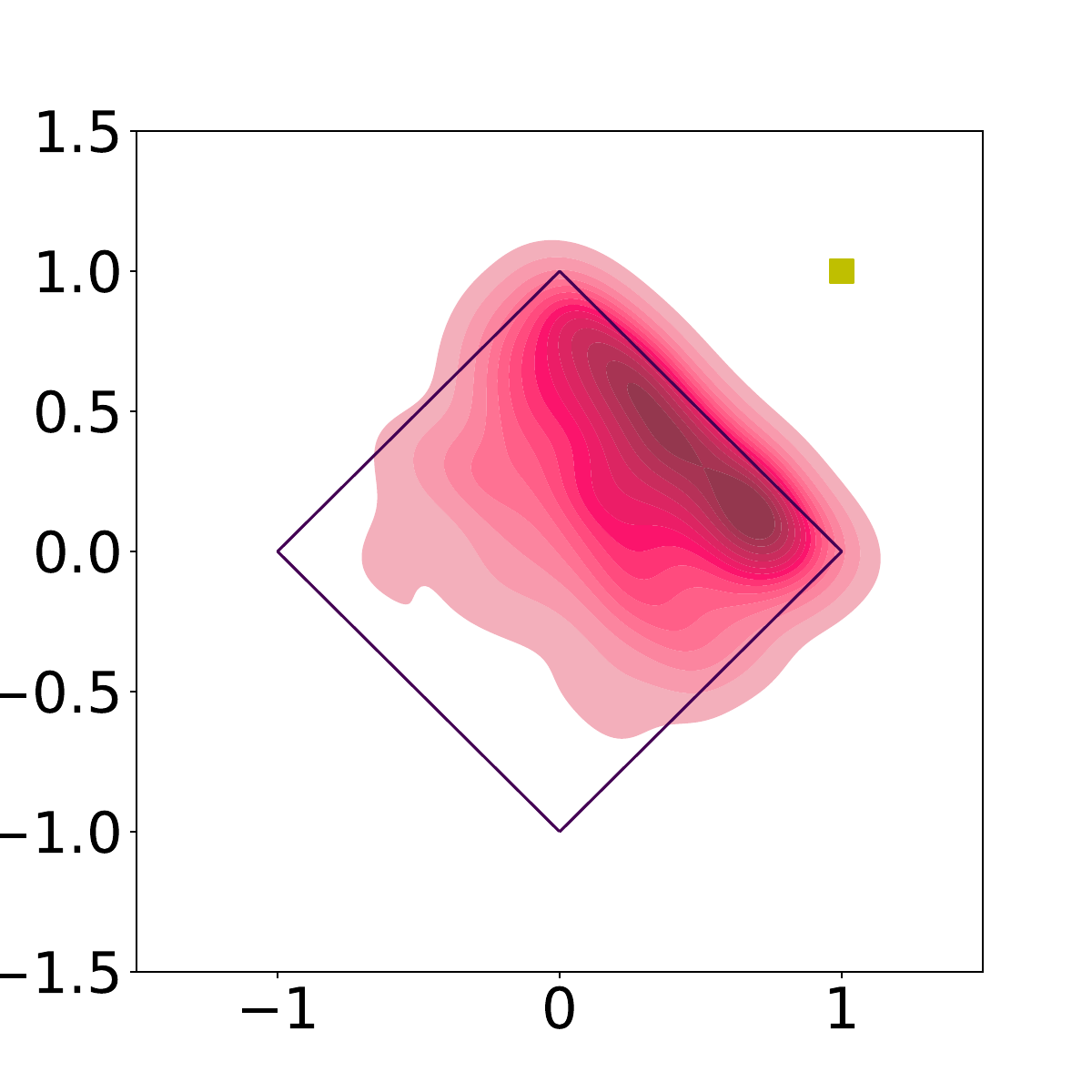}
\caption{Experiments of the constrained Bayesian linear regression problem with stochastic gradients. Left figure: simulation using the SG-CKLMC. Middle figure: SG-CBAOAB. Right figure: SG-CUBU. The constraint is yellow box. }
\label{fig:reg2}
\end{figure}

%\begin{table}[h!]
%\begin{center}
%\begin{tabular}{ |c|c|c|} 

%\hline
%\textbf{Algorithm} &  Runtime  & Runtime\\
%\hline
%CKLMC &  sec. & sec. \\ 
%CUBU &   sec. & sec. \\ 
%CBAOAB &  sec. & sec. \\ 
%SG-CKLMC  &  sec. & sec. \\ 
%SG-CUBU  &  sec. & sec. \\ 
%SG-CBAOAB  &  sec. & sec. \\ 
%\hline
%\end{tabular}
%\end{center}
%\caption{TBD, modify this!}
%\label{table:comp2}
%\end{table}

\section{Conclusion}
\label{sec:conc}

The purpose of this work was to provide new statistical algorithms for constrained sampling, based on the kinetic Langevin dynamics (KLD). In particular, we considered the use of splitting order schemes, that split the KLD into different components. This naturally promotes different splitting families, which include ABO and BU. We considered two such methods based on them, which are BAOAB and UBU, and its stochastic gradient (SG) versions. To consider a constrained setting, we provided a modified version of the potential function, and provided a convergence analysis in the Wasserstein metric. Furthermore, we were able to provide a complexity analysis in terms of the number of steps to achieve a certain level of accuracy. All our schemes demonstrate complexity gains over the KLD with the EM schemes, as shown in Table \ref{table:summary}. Numerical evidence was provided that verified our theoretical findings. This includes a Bayesian linear regression problem.

There are a number of directions one can take with this work, related to constrained sampling. 

\begin{itemize}
    \item The first is the use of additional SG-splitting schemes, which have shown promise, such as the SMS-UBU method \cite{SMSUBU}. It is the first stochastic method able to attain a strong order of $\mathcal{O}(h^2)$, where it is based on applying minibatching without replacement, given through 
\begin{align*}
\underbrace{(\U\B_{\omega_{1}}\U)}_{\textnormal{minibatch 1}}\, &... \underbrace{(\U\B_{\omega_{N_{m}}}\U)}_{\textnormal{minibatch }N_{m}}\,\underbrace{(\U\B_{\omega_{N_{m}}}\U)}_{\textnormal{minibatch }N_{m}}\,
...\underbrace{(\U\B_{\omega_{1}}\U)}_{\textnormal{minibatch 1}},
\mathcal{B}_{\omega_{l}}(x,v,h) = \left(x,v-hN_{m}\sum_{i \in \omega_{l}}\nabla \uu_{i}(x)\right),
\end{align*}
and we have $N_{m} := N_{D}/N_{b}$ minibatches.
\item A second direction would be the consider alternative strategies for constrained sampling, for example confined sampling. In this setup one could consider reflected diffusion processes, such as the work of Leimkuhler et al. \cite{Ben2016,Leimkuhler2024confined}. Their setup is very different to ours, but allows a natural way to apply different splitting schemes, enabling new convergence and bias rates to be derived as well. The advantage of this approach, is that it does not have restrictions on the step-size. However the challenge is designing the constraints within the integrator's exponents. 
\item Finally, one could also apply such techniques in this paper to the application of unbiased estimation. Recent work has analyzed this for the KLD with splitting families which include a methodology entitled UBUBU \cite{supp:UBUBU,ruzayqat2022unbiased}. Such a methodology is able to mitigate biases that arise, while being prevalent in high dimensions. To the best of our knowledge, the only work that has done unbiased constrained sampling is the work of Noble et al. \cite{noble2024unbiased}, however this requires the use of new sophisticated couplings. Ideas could also be utilized from \cite{CLL24}.
%\textcolor{blue}{Discuss here the modified setup which would lead to potentially $\mathcal{O}(\varepsilon^{-3/2})$ for the stochastic gradient methods, i.e. the altered assumption/setting.}

\end{itemize}

\section*{Acknowledgments}
NKC is supported by an EPSRC-UKRI AI for Net Zero Grant: “Enabling CO2 Capture And
Storage Projects Using AI”, (grant EP/Y006143/1). NKC is also supported by a City University
of Hong Kong Start-up Grant, project number 7200809. LY is supported by the City University of Hong Kong Startup Grant and Hong Kong RGC Grant 21306325. NKC is also thankful to Peter Whalley for helpful comments.

\bibliography{references}

\newpage

\appendix 

\section{Proofs of Theorems}
\label{sec:app_theory}
The proofs of all our results, in the main text, are provided in this appendix. Specifically we will have our appendix divided into five sections, related to the each of the algorithms mentioned which are; (i) CUBU, (ii) SG-CUBU, (iii) CBAOAB, (iv) SG-CBAOAB and (v) SG-CKLMC. We discuss each in turn below. 
\subsection{Proofs of convergence for CUBU}
\begin{proof}[Proof of Theorem~\ref{thm:CUBU}]
By the triangle inequality, we have
\begin{align*}
\wstwo(\nu_n^{\sf CUBU},\nu)
\leqslant \wstwo(\nu_n^{\sf CUBU},\nu^\lambda)+\wstwo(\nu^\lambda,\lambda)\,.
\end{align*}
The desired results follows readily from Proposition~\ref{prop:w2_tight} and Theorem~\ref{thm:CUBU_comp}.
Invoking the triangle inequality and the monotonicity of the Wasserstein distance, it holds that
\begin{align*}
\wsone(\nu_n^{\sf CUBU},\nu)
\leqslant \wsone(\nu_n^{\sf CUBU},\nu^\lambda)+\wsone(\nu^\lambda,\nu)
\leqslant \wstwo(\nu_n^{\sf CUBU},\nu^\lambda)
+ \wsone(\nu^\lambda,\nu)\,.
\end{align*}
Employing the results from Proposition~\ref{prop:w2_tight} again gives the desired result.
\end{proof}

\subsection{Proofs of convergence for SG-CUBU}

\begin{proof}[Proof of Theorem~\ref{thm:CUBU_SG}]
 Based on Theorem 15 from~\cite{SMSUBU}, when $\gamma h<1/2, \gamma\geqslant \sqrt{8M^\lambda},$ we have
\begin{align*}
\wstwo(\nu^{\sf SG-CUBU}_n,\nu^\lambda)
&\leqslant \Big(1-\frac{mh}{4\gamma}+\frac{5h^2C_G}{M^\lambda}\Big)^{n/2}\wstwo(\nu_0^{\sf SG-CUBU},\nu^\lambda)\\
&\qquad + \frac{\gamma\sqrt{M^\lambda}}{m-h\gamma C_G/M^\lambda}\frac{\sqrt{h}}{\sqrt{M^\lambda}}\sqrt{\frac{C_{SG}^2L^2}{\sqrt{M^\lambda}}\Big(\frac{h^2(M^\lambda)^2d}{m^2}+\frac{d}{m}\Big)}\\
&\qquad +\sqrt{d}(\sqrt{M^\lambda}+\gamma)h^2\,.
\end{align*}
If $h<\frac{mM^\lambda}{40\gamma C_G}$, it holds that
\begin{align*}
\frac{h^2C_G}{M^\lambda}<\frac{mh}{40\gamma}\,,
\end{align*}
which implies that
\[
m-hC_G\gamma/M^\lambda>m/2\,,
\]
that is
\[
\frac{1}{m-\gamma hC_G/M^\lambda}<\frac{2}{m}\,.
\]
Combining this with the previous display gives
\begin{align*}
\wstwo(\nu^{\sf SG-CUBU}_n,\nu^\lambda)
&\leqslant \Big(1-\frac{mh}{8\gamma}\Big)^{n/2}\wstwo(\nu_0^{\sf SG-CUBU},\nu^\lambda)\\
&\qquad + \frac{2\gamma\sqrt{h}}{m}\sqrt{\frac{C_{SG}^2L^2}{\sqrt{M^\lambda}}\Big(\frac{h^2(M^\lambda)^2d}{m^2}+\frac{d}{m}\Big)}\\
&\qquad +\sqrt{d}(\sqrt{M^\lambda}+\gamma)h^2\,.
\end{align*}
Further simplification gives
\begin{align*}
\wstwo(\nu^{\sf SG-CUBU}_n,\nu^\lambda)
&\leqslant \Big(1-\frac{mh}{8\gamma}\Big)^{n/2}\wstwo(\nu_0^{\sf SG-CUBU},\nu^\lambda)\\
&\qquad + \frac{2C_{SG}L\gamma \sqrt{h}}{m}\sqrt{\frac{1}{\sqrt{M^\lambda}}\Big(\frac{h^2 (M^\lambda)^2 d}{m^2}+\frac{d}{m}\Big)}\\
&\qquad + \sqrt{d}(\sqrt{M^\lambda}+\gamma)h^2\,.
\end{align*}
Combining this with the result from Proposition~\ref{prop:w2_tight} and using the triangle inequality gives the desired result.

The bound for the Wasserstein-1 distance follows a strategy similar to the proof of Theorem~\ref{thm:CUBU}, relying on the monotonicity of the Wasserstein distance, the triangle inequality, and the results established in Proposition~\ref{prop:w2_tight}.

\end{proof}

\subsection{Proofs of convergence for CBAOAB}
\begin{proof}[Proof of Theorem~\ref{thm:cbaoab}]
By Theorem 5.1 and Theorem 8.5 in~\cite{leimkuhler2023contractiona}, when $h < \frac{1-e^{-\gamma h}}{2\sqrt{M^\lambda}}$, it follows from the triangle inequality that
\begin{align*}
\wstwo(\nu^\lambda,\nu_n^{\sf CBAOAB})
&\leqslant \wstwo(\nu^\lambda,\nu^\lambda_h)+\wstwo(\nu^\lambda_h,\nu_n^{\sf CBAOAB})\\
&\leqslant 21\Big(1-\frac{h^2m}{4(1-e^{-\gamma h})}\Big)^{n-1}\wstwo(\nu_0^{\sf CBAOAB},\nu^{\lambda}) +  22\,\wstwo(\nu^{\lambda},\nu_h^{\lambda})\\
&\leqslant  21\Big(1-\frac{h^2m}{4(1-e^{-\gamma h})}\Big)^{n-1}\wstwo(\nu_0^{\sf CBAOAB},\nu^{\lambda}) +  66000 \frac{\sqrt{M^\lambda}}{m}\Big(4\sqrt{M^\lambda p}+ \frac{3M_1^\lambda p}{M^\lambda}\Big)h(1-e^{-\gamma h})\,.
\end{align*}
Note that $1-e^{-\gamma h}\leqslant \gamma h.$ When $h < \frac{4\gamma }{m}$, it holds that
\begin{align*}
\frac{h^2 m}{4(1-e^{-\gamma h})}\geqslant \frac{h^2 m}{4\gamma h}=\frac{hm}{4\gamma}\,.  
\end{align*}
It then follows that
\begin{align*}
1-\frac{h^2m}{4(1-e^{-\gamma h})}\leqslant 1-\frac{hm}{4\gamma}\,.    
\end{align*}
Combining this with the  previous display and Proposition~\ref{prop:w2_tight} gives
\begin{align*}
\wstwo(\nu_n^{\sf CBAOAB},\nu)
&\leqslant \wstwo(\nu_n^{\sf CBAOAB},\nu^\lambda)+ \wstwo(\nu^\lambda,\nu)\\
&\leqslant 21\Big(1-\frac{hm}{4\gamma}\Big)^{n-1}\wstwo(\nu_0^{\sf CBAOAB},\nu) +  66000 \frac{\sqrt{M^\lambda}}{m}\Big(4\sqrt{L^\lambda p}+ \frac{3M_1^\lambda p}{M^\lambda}\Big)h(1-e^{-\gamma h})\\
&\qquad +C_1 \lambda^{1/2+1/p}p\\
&\leqslant 21e^{-\frac{mh(n-1)}{4\gamma}} \wstwo(\nu_0^{\sf CBAOAB},\nu) +  66000 \frac{\sqrt{M^\lambda}}{m}\Big(4\sqrt{M^\lambda p}+ \frac{3M_1^\lambda p}{M^\lambda}\Big)\gamma h^2 + C_1 \lambda^{1/2+1/p}p \,,
\end{align*}
where $C_1$ is a universal constant. 
By the triangle inequality and the monotonicity of the Wasserstein distance, it holds that
\begin{align*}
\wsone(\nu_n^{\sf CBAOAB},\nu)
&\leqslant \wstwo(\nu_n^{\sf CBAOAB},\nu^\lambda) + \wsone(\nu^\lambda,\nu)\\
&\leqslant \wstwo(\nu_n^{\sf CBAOAB},\nu^\lambda_h) + \wstwo(\nu^\lambda_h,\nu^\lambda) + \wsone(\nu^\lambda,\nu) \,.
\end{align*}
When the initial point $\btheta_0^{\sf CBAOAB}$ is set to be the minimizer of the function $f$ and $\bv_0^{\sf CBAOAB} \sim \mathcal{N}_p(0, I_p)$, it holds that
\begin{align*}
\wstwo(\nu_n^{\sf CBAOAB},\nu^\lambda_h)
&\leqslant 21e^{-\frac{mh(n-1)}{4\gamma}} \bigg(\sqrt{\frac{p}{m}}+\wstwo(\nu^\lambda,\nu_h^\lambda)\bigg) \,.
\end{align*}
Combining this with the previous display gives
\begin{align*}
\wsone(\nu_n^{\sf CBAOAB},\nu)
&\leqslant  21e^{-\frac{mh(n-1)}{4\gamma}} 
\sqrt{\frac{p}{m}}
+ 66000 \frac{\sqrt{M^\lambda}}{m}\Big(4\sqrt{M^\lambda p}+ \frac{3M_1^\lambda p}{M^\lambda}\Big)\gamma h^2
+ C_2 \lambda p^{1+1/p}\,.
\end{align*}
\end{proof}

\subsection{Proofs of convergence for SG-CBAOAB}

%\red{todo: change notations: $M\to M^\lambda, \sigma\to\sigma_1$. DONE}

% We first establish the one-step difference between the stochastic-gradient and full-gradient updates.
% Fix $k\geqslant 0$ and condition on $\mathcal F_k := \sigma(Z_k^{\sf sg})$ with
% $Z_k^{\sf sg}=(\bvartheta_k,\bv_k)$. 
% Couple one BAOAB step with full gradient
% and one with stochastic gradient starting from $Z_k^{\sf sg}$, using the
% same Gaussian noise in the O–step:
% \[
%   Y_{k+1} \sim P_h(Z_k^{\sf sg},\cdot),
%   \qquad
%   Z_{k+1}^{\sf sg} \sim Q_h(Z_k^{\sf sg},\cdot).
% \]
% Define the difference
% \[
%   \Delta Z_{k+1}
%   := Z_{k+1}^{\sf sg}-Y_{k+1}
%   = (\Delta\bvartheta_{k+1},\Delta \bv_{k+1}).
% \]
We first introduce the coupling used to compare one stochastic-gradient CBAOAB step with its full-gradient counterpart. For notational simplicity, throughout this section we omit the superscript $\mathsf{SG\text{-}BAOAB}$ whenever no ambiguity arises. We also use $\mathsf{sg}$ to denote the stochastic gradient scheme, and $\mathsf{fg}$ to denote the full gradient scheme.

Fix $k \geqslant 0$. Let
$
Z_k^{\sf sg}=(\bvartheta_k^{\sf sg},\bv_k^{\sf sg})
$
denote the current state of the stochastic-gradient CBAOAB chain, and define the field
$
\mathcal F_k:=\sigma(\bvartheta_k^{\sf sg},\bv_k^{\sf sg}).
$
Starting from the same initial state \(Z_k^{\sf sg}\), we couple one CBAOAB step using the full gradient with one CBAOAB step using the stochastic gradient, with the same Gaussian noise used in the O-step. 
Denote by $P_h$ and $Q_h$ the CBAOAB kernel with step size $h$ using the full gradient and the stochastic gradient, respectively.
Let
\[
Y_{k+1}\sim P_h(Z_k^{\sf sg},\cdot),
\]
be the next state produced by one full-gradient CBAOAB step, and let
\[
Z_{k+1}^{\sf sg}\sim Q_h(Z_k^{\sf sg},\cdot),
\]
be the next state produced by one stochastic-gradient BAOAB step. We then define the one-step discrepancy by
\[
\Delta Z_{k+1}:=Z_{k+1}^{\sf sg}-Y_{k+1}
=(\Delta \bvartheta_{k+1},\Delta \bv_{k+1}).
\]

\begin{lemma}
\label{lem:sg-cbaoab}
Let $\Delta Z_k=Z_k^{\sf sg}-Z_k^{\sf fg},\mathcal{F}_k:=\sigma(\bvartheta_k^{\sf sg},\bv_k^{\sf sg})$.
When $h<1,\sqrt{M^\lambda}h<1/2$, it holds that
\begin{align*}
\E[\|\Delta Z_{k+1}\|^2|\mathcal{F}_k]
\leqslant \sigma_1^2 (M^\lambda)^2 h^2 \Big(C_1(\|\bvartheta_k^{\sf sg}\|^2+\|\bv_k^{\sf sg}\|^2)+C_2 p\Big)\,,
\end{align*}
where $C_1=2+\frac{1}{4}(M^\lambda)^2(1+\sigma_1^2)$ and $C_2=\frac{1}{16}.$
\end{lemma}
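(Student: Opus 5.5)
We compare one CBAOAB step driven by the full gradient $\nabla U^\lambda$ with one driven by the stochastic gradient $\mathcal G$. Both start from the same state $Z_k^{\sf sg}=(\bvartheta_k,\bv_k)$ and share the same O-step Gaussian $\xi$. We track the discrepancy through the five substeps B, A, O, A, B with step sizes $h/2,h/2,h,h/2,h/2$. Write $\omega,\omega'$ for the independent randomness of the two stochastic-gradient evaluations, and set $e(\btheta,\omega):=\mathcal G(\btheta,\omega)-\nabla f(\btheta)$. The penalty part $\nabla\ell_\K^\lambda$ is computed exactly in both chains, so only $f$ contributes noise.

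\textbf{Steps 1--4 (first B, then A, O, A).} The first B step uses the common point $\bvartheta_k$. It produces a velocity difference $\delta_1=-\tfrac h2 e(\bvartheta_k,\omega)$ and no position difference. The A, O, A steps are linear, and the O step uses common noise, so the discrepancy propagates deterministically. After A, O, A we get $\Delta\bvartheta'=\tfrac h2(1+\eta^2)\delta_1$ and $\Delta\bv'=\eta^2\delta_1$. Both are bounded by $|\delta_1|$ times a constant, using $h<1$.

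\textbf{Step 5 (last B).} The final B step evaluates the gradient at two different positions: $\bvartheta^{\sf fg}$ (full gradient) and $\bvartheta^{\sf sg}=\bvartheta^{\sf fg}+\Delta\bvartheta'$ (stochastic gradient). Hence
\[
\Delta\bv_{k+1}=\Delta\bv'-\tfrac h2\big(\nabla U^\lambda(\bvartheta^{\sf sg})-\nabla U^\lambda(\bvartheta^{\sf fg})\big)-\tfrac h2 e(\bvartheta^{\sf sg},\omega').
\]
The middle term is at most $\tfrac h2 M^\lambda\|\Delta\bvartheta'\|$. Since $\sqrt{M^\lambda}h<1/2$, this gives $h^2M^\lambda\le 1/4$, so the term is a constant multiple of $|\delta_1|$. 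The position discrepancy is unchanged by this step: $\Delta\bvartheta_{k+1}=\Delta\bvartheta'$.

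\textbf{Taking conditional expectations.} Apply Assumption~\ref{asm:stoch_grad}(ii) to the first noise term:
\[
\E[\|\delta_1\|^2\mid\mathcal F_k]\le \tfrac{h^2}{4}\sigma_1^2L^2\|\bvartheta_k\|^2.
\]
Since $L\le M^\lambda$, this fits the $\sigma_1^2(M^\lambda)^2h^2$ prefactor. For the second noise term we first condition on $\omega$ and $\xi$, then apply (ii) at $\bvartheta^{\sf sg}$. This requires a bound on $\E\|\bvartheta^{\sf sg}\|^2$. The intermediate position is
\[
\bvartheta^{\sf sg}=\bvartheta_k+\tfrac h2(1+\eta^2)\big(\bv_k-\tfrac h2\nabla U^\lambda(\bvartheta_k)+\delta_1\big)+\tfrac h2\sqrt{1-\eta^4}\,\xi.
\]
We bound $\|\nabla U^\lambda(\bvartheta_k)\|\le M^\lambda\|\bvartheta_k\|+\|\nabla U^\lambda(0)\|$. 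Assumption~\ref{asm:f} makes $\nabla f(0)=0$, and Assumption~\ref{asm:radius} gives $0\in\K$, so the penalty gradient also vanishes at $0$; hence $\nabla U^\lambda(0)=0$. The remaining contributions are $\E\|\delta_1\|^2$ (a $\sigma_1^2$ term) and $\E\|\xi\|^2=p$ multiplied by $\tfrac{h^2}{4}(1-\eta^4)$. Collecting terms with $h<1$ and $h^2M^\lambda\le1/4$ gives
\[
\E\|\bvartheta^{\sf sg}\|^2\lesssim \|\bvartheta_k\|^2+\|\bv_k\|^2+(M^\lambda)^2(1+\sigma_1^2)(\ldots)+\tfrac{p}{16}.
\]

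\textbf{Combining.} Use $(a+b+c)^2\le$ a weighted sum of squares for both $\Delta\bvartheta_{k+1}$ and $\Delta\bv_{k+1}$, then collect coefficients. The target form is $\sigma_1^2(M^\lambda)^2h^2\big(C_1(\|\bvartheta_k\|^2+\|\bv_k\|^2)+C_2p\big)$. The $\tfrac14(M^\lambda)^2(1+\sigma_1^2)$ part of $C_1$ comes from the gradient and $\delta_1$ terms inside $\bvartheta^{\sf sg}$. The constant $C_2=1/16$ comes from the $\xi$ term.

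\textbf{Main obstacle.} The hardest part is the nested conditioning for the second gradient noise, where the evaluation point $\bvartheta^{\sf sg}$ is itself random through $\omega$ and $\xi$. The rest is bookkeeping to land exactly on the constants $C_1,C_2$. Matching those constants may require generous Young's-inequality choices, and possibly an extra factor of $L/M^\lambda\le1$ absorbed into the prefactor.
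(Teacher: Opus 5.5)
Your plan follows the paper's proof. You propagate the first-B discrepancy $\delta_1$ linearly through A, O, A, and split the last-B discrepancy into a Lipschitz part (small since $h^2M^\lambda<1/4$) and a fresh stochastic-gradient error at $\bvartheta^{\sf sg}$. You then handle that error by conditioning on $(\omega,\xi)$ and expanding $\E\|\bvartheta^{\sf sg}\|^2$. Your derivation of $\nabla U^\lambda(0)=0$ from Assumptions~\ref{asm:radius} and~\ref{asm:f} is cleaner than the paper's ``WLOG''.

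The step that, as you describe it, would not give the stated constants is the final combination via ``a weighted sum of squares''/Young's inequality. The $p$-term comes only from $\tfrac{h}{2}e(\bvartheta^{\sf sg},\omega')$ and, inside $\bvartheta^{\sf sg}$, from $\tfrac{h}{2}\sqrt{1-\eta^4}\,\xi$. It equals $\sigma_1^2L^2h^2\cdot\tfrac{h^2(1-\eta^4)}{16}\,p$ only if both terms enter with weight exactly one. A Young weight $w>1$ multiplies it by $w$, and the hypotheses leave no uniform slack to absorb this: if $M^\lambda\le 1/4$ only $h<1$ binds, so $h$ may be near $1$; $\gamma$ is free, so $1-\eta^4$ may be near $1$; and $L/M^\lambda$ may be near $1$. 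Letting $w=1+\epsilon$ depend on $h$ instead makes the $\|\bvartheta_k\|^2$ coefficient blow up as $h\to 1$. Relatedly, inside $\E\|\bvartheta^{\sf sg}\|^2$ the $\xi$-contribution is $\tfrac{h^2}{4}(1-\eta^4)p$, not $p/16$; the $1/16$ appears only after multiplying by the $\tfrac{1}{4}$ left over from the prefactor $\tfrac{h^2}{4}$.

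The missing ingredient is orthogonality. Conditionally on $(\mathcal F_k,\omega,\xi)$, the error $e(\bvartheta^{\sf sg},\omega')$ has mean zero by Assumption~\ref{asm:stoch_grad}(i), so
\begin{align*}
\E\bigl[\|\Delta\bv_{k+1}\|^2\mid\mathcal F_k\bigr]
&=\E\Bigl[\bigl\|\eta^2\delta_1-\tfrac{h}{2}\bigl(\nabla U^\lambda(\bvartheta^{\sf sg})-\nabla U^\lambda(\bvartheta^{\sf fg})\bigr)\bigr\|^2\Bigm|\mathcal F_k\Bigr]\\
&\qquad+\tfrac{h^2}{4}\,\E\bigl[\|e(\bvartheta^{\sf sg},\omega')\|^2\mid\mathcal F_k\bigr].
\end{align*}
Likewise, $\delta_1$ has conditional mean zero and $\xi$ is centred and independent of $(\mathcal F_k,\omega)$. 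Hence $\E[\|\bvartheta^{\sf sg}\|^2\mid\mathcal F_k]$ is exactly the square of its $\mathcal F_k$-measurable part plus $\tfrac{h^2}{4}(1+\eta^2)^2\E[\|\delta_1\|^2\mid\mathcal F_k]+\tfrac{h^2}{4}(1-\eta^4)p$. The paper uses this implicitly, putting no factor in front of these terms.

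With orthogonality, $C_2=1/16$ follows from $h<1$. After pulling out $\sigma_1^2L^2h^2\le\sigma_1^2(M^\lambda)^2h^2$, the remaining coefficients are at most $\tfrac{5}{4}+\tfrac{1}{16}\sigma_1^2L^2$ on $\|\bvartheta_k\|^2$ and $\tfrac{1}{2}$ on $\|\bv_k\|^2$, well within $C_1$.
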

\begin{proof}
Let $\pi_h$ denote the invariant measure of $P_h$.
Define the hypothetical full gradient chain $\nu_n^{\sf fg}=\nu Q_h^0P_h^n$ and the true stochastic gradient chain $\nu_n^{\sf sg}=\nu Q_h^n.$
By the triangle inequality, we obtain the following decomposition.
\begin{align*}
\wstwo(\nu_n^{\sf sg},\nu^\lambda)
\leqslant \wstwo(\nu_n^{\sf sg},\nu_n^{\sf fg})
+ \wstwo(\nu_n^{\sf fg},\pi_h)
+ \wstwo(\pi_h,\nu^\lambda)\,.
\end{align*}

By Theorem 5.1 and Theorem 8.5 from~\cite{leimkuhler2023contractiona}, we have
\begin{align*}
\wstwo(\mu_n^{\sf fg},\pi_h)\lesssim \Big(1-\frac{mh^2}{4(1-e^{-\gamma h})}\Big)^n \wstwo(\mu_0^{\sf fg},\pi_h)
\end{align*}
and
\begin{align*}
\wstwo(\pi_h,\nu^\lambda)
\lesssim \frac{\sqrt{M^\lambda}}{m}(\sqrt{M^\lambda p}+\frac{ M_1^\lambda}{M^\lambda}p)h(1-e^{-\gamma h})
\end{align*}
Our next goal is to establish the upper bound for $ \wstwo(\nu_n^{\sf sg},\nu_n^{\sf fg}).$ 
Note that
\begin{align}
\wstwo(\nu_n^{\sf sg},\nu_n^{\sf fg})
&= \wstwo(\nu Q_h^n,\nu P_h^n) \nonumber \\
&\leqslant  \wstwo(\nu Q_h^{n-1}Q_h,\nu Q_h^{n-1}P_h)+\wstwo(\nu Q_h^{n-1}P_h,\nu P_h^{n-1}P_h) \nonumber\\
&\leqslant  \wstwo(\nu Q_h^{n-1}Q_h,\nu Q_h^{n-1}P_h) + \Big(1-\frac{mh^2}{4(1-e^{-\gamma h})}\Big) \wstwo(\nu Q_h^{n-1},\nu P_h^{n-1}) \nonumber\\
&=  \Big(1-\frac{mh^2}{4(1-e^{-\gamma h})}\Big) \wstwo(\nu_{n-1}^{\sf sg},\nu_{n-1}^{\sf fg})
+ \wstwo(\nu Q_h^{n-1}Q_h,\nu Q_h^{n-1}P_h)\,.
\label{eq:sg-fg}
\end{align}
To this end, we aim to derive the one-step kernel perturbation bound for $\wstwo(\nu Q_h^{n-1}Q_h,\nu Q_h^{n-1}P_h)$.
Given the $k$-th iterate of SG-CBAOAB algorithm $(\bv_{k},\bvartheta_{k})$, we consider performing one CBAOAB update from this state using full gradients $\nabla U^\lambda$ and also the stochastic gradients $\tilde \nabla U^\lambda$.
		\begin{itemize}
			\item[(B)] $\bv_k^{\sf (1), fg} = \bv_{k} - \frac{h}{2}\nabla U^\lambda(\bvartheta_k)$ and $\bv_k^{\sf (1), sg} = \bv_{k} - \frac{h}{2}\tilde \nabla U^\lambda(\bvartheta_k)$
                \item[(A)] $\bvartheta_k^{\sf (1), fg} = \bvartheta_{k} + \frac{h}{2}\bv_k^{\sf (1), fg}$ and $\bvartheta_k^{\sf (1), sg} = \bvartheta_{k} + \frac{h}{2}\bv_k^{\sf (1), sg}$
                \item[] Sample $\xi_{k} \sim \mathcal{N}(0_{p},I_{p})$
                \item[(O)] $\bv_k^{\sf (2),fg} = \eta \bv_k^{\sf (1),fg} + \sqrt{1-\eta^{2}}\xi_{k}$ and $\bv_k^{\sf (2),sg} = \eta \bv_k^{\sf (1),sg} + \sqrt{1-\eta^{2}}\xi_{k}$
                \item[(A)] $\bvartheta_{k}^{\sf (2), fg} = \bvartheta_k^{\sf (1),fg} + \frac{h}{2}\bv_k^{\sf (2),fg}$ and $\bvartheta_{k}^{\sf (2), sg} = \bvartheta_k^{\sf (1),sg} + \frac{h}{2}\bv_k^{\sf (2),sg}$
                % \item[] Sample $\omega_{k+1} \sim \rho$
                % \item[] $G_{k} \to \mathcal{G}(\bvartheta_{k},\omega_{k+1})$
                \item[(B)] $\bv_{k+1}^{\sf fg} = \bv_k^{\sf (2),fg}- \frac{h}{2}\nabla U^\lambda(\bvartheta_{k}^{\sf (2), fg})$ and $\bv_{k+1}^{\sf sg} = \bv_k^{\sf (2),sg}- \frac{h}{2}\tilde \nabla U^\lambda(\bvartheta_{k}^{\sf (2), sg})$
                \item []$\bvartheta_{k+1}^{\sf fg}=\bvartheta_k^{\sf (2),fg}$ and $\bvartheta_{k+1}^{\sf sg}=\bvartheta_k^{\sf (2),sg}$
		\end{itemize}
Assumption of the noisy gradient is given as \begin{align*}
\E[\tilde \nabla U^\lambda (\bvartheta_k)|\bvartheta_k]=\nabla U^\lambda(\bvartheta_k)  ,
\end{align*}
and
\begin{align*}
\E[\|\tilde{\nabla} U^\lambda(\bvartheta_k)- \nabla U^\lambda(\bvartheta_k)\|^2|\bvartheta_k]\leqslant \sigma_1^2(M^\lambda)^2\|\bvartheta_k\|^2.
\end{align*}
Define $\Delta \bvartheta_{k+1}=\bvartheta_{k+1}^{\sf sg}-\bvartheta_{k+1}^{\sf fg}, \Delta \bv_{k+1}= \bv^{\sf sg}_{k+1}-\bv_{k+1}^{\sf fg}.$
Please note that the cumulative difference contributed by each stage is given by the following expression.
		\begin{itemize}
			\item[(B)] $\Delta \bv_B=\bv_k^{\sf (1),fg}-\bv_k^{\sf (1),sg}=\frac{h}{2}(\nabla U(\bvartheta_k)-\tilde\nabla U(\bvartheta_k))$
                \item[(A)] $\Delta \bvartheta_A=\bvartheta_{k}^{\sf (1), fg}-\bvartheta_{k}^{\sf (1), sg}=\frac{h}{2}\Delta \bv_B=\frac{h^2}{4}(\nabla U(\bvartheta_k)-\tilde\nabla U(\bvartheta_k))$
                \item[(O)] $\Delta \bv_O=\bv_k^{\sf (2),fg}-\bv_k^{\sf (2),sg}=e^{-\gamma h}\Delta\bv_B=\frac{he^{-\gamma h}}{2}(\nabla U(\bvartheta_k)-\tilde\nabla U(\bvartheta_k))$
                \item[(A)] $\Delta \bvartheta_{A'}=\bvartheta_{k}^{\sf (2),fg}-\bvartheta_{k}^{\sf (2),sg}=\Delta\bvartheta_A+\frac{h}{2}\Delta\bv_O=\frac{h^2}{4}(1+e^{-\gamma h})(\nabla U(\bvartheta_k)-\tilde\nabla U(\bvartheta_k))$
                % \item []$\bvartheta_{k+1}^{\sf fg}=\bvartheta_k^{\sf (2),fg}$ and $\bvartheta_{k+1}^{\sf sg}=\bvartheta_k^{\sf (2),sg}$
		\end{itemize}
Thus, 
\begin{align*}
\Delta\bvartheta_{k+1}=\frac{h^2}{4}(1+e^{-\gamma h})(\nabla U(\bvartheta_k)-\tilde\nabla U(\bvartheta_k))
\end{align*}
and
\begin{align*}
\Delta\bv_{k+1}
&=\Delta\bv_O-\frac{h}{2}(\nabla U^\lambda(\bvartheta_k^{\sf (2),fg})-\tilde\nabla U^\lambda(\bvartheta_k^{\sf (2),sg}))\\
&=\Delta\bv_O-\frac{h}{2}(\nabla U^\lambda(\bvartheta_k^{\sf (2),fg})-\nabla U^\lambda(\bvartheta_k^{\sf (2),sg}))\\
&\qquad -\frac{h}{2}(\nabla U^\lambda(\bvartheta_k^{\sf (2),sg})-\tilde\nabla U^\lambda(\bvartheta_k^{\sf (2),sg}))
\end{align*}
WLOG, assume the minimizer of $U^\lambda$ is at the origin.
Let $\mathcal{F}_k:=\sigma_1(\bvartheta_k,\bv_k).$
We then have
\begin{align*}
\E[\|\Delta\bvartheta_{k+1}\|^2|\mathcal{F}_k]
\leqslant \frac{h^4}{16}(1+\eta)^2\sigma_1^2(M^\lambda)^2\|\bvartheta_k\|^2
\end{align*}
and
\begin{align}
\nonumber
\E[\|\Delta\bv_{k+1}\|^2|\mathcal{F}_k]
\leqslant &\frac{\eta^2h^2}{4}\sigma_1^2(M^\lambda)^2\|\bvartheta_k\|^2
+\frac{(M^\lambda)^2h^6}{64}(1+\eta)^2\sigma_1^2(M^\lambda)^2\|\bvartheta_k\|^2 \\&+ \frac{h^2}{4}(M^\lambda)^2\sigma_1^2\E[\|\bvartheta^{\sf (2),sg}_k\|^2|\mathcal{F}_k]\,. \label{eq:deltav}
\end{align}
\newpage
Note that
\begin{align*}
\bvartheta^{\sf (2),sg}_k
= \bvartheta_k
+ \frac{h}{2}(1+\eta)\bv_k
-\frac{h^2}{4}(1+\eta)\nabla U(\bvartheta_k)
-\frac{h^2}{4}(1+\eta)(\tilde\nabla U^\lambda(\bvartheta_k)-\nabla U(\bvartheta_k))
+\frac{h}{2}\sqrt{1-\eta^2}\xi_k\,,
\end{align*}
which implies
\begin{align*}
\E[\|\bvartheta^{\sf (2),sg}_k\|^2|\mathcal{F}_k]
&\leqslant 3\|\bvartheta_k\|^2
+\frac{3h^2}{4}(1+\eta)^2\|\bv_k\|^2
+\frac{3h^4}{16}(1+\eta)^2(M^\lambda)^2\|\bvartheta_k\|^2\\
&\qquad +\frac{h^4}{16}(1+\eta)^2(M^\lambda)^2\sigma_1^2\|\bvartheta_k\|^2
+\frac{h^2}{4}(1-\eta^2)p\\
&= \Big(3+\frac{h^4}{16}(1+\eta)^2(M^\lambda)^2(3+\sigma_1^2)\Big)\|\bvartheta_k\|^2
+ \frac{3h^4}{4}(1+\eta)^2\|\bv_k\|^2
 +\frac{h^2}{4}(1-\eta^2)p
\end{align*}
Combining this with display~\eqref{eq:deltav} gives
\begin{align*}
\E[\|\Delta\bv_{k+1}\|^2|\mathcal{F}_k]
&\leqslant  \frac{\eta^2h^2}{4}\sigma_1^2(M^\lambda)^2\|\bvartheta_k\|^2
+\frac{(M^\lambda)^2h^6}{64}(1+\eta)^2\sigma_1^2(M^\lambda)^2\|\bvartheta_k\|^2 + \frac{h^2}{4}(M^\lambda)^2\sigma_1^2 \times \\
&\qquad \Bigg[ \Big(3+\frac{h^4}{16}(1+\eta)^2(M^\lambda)^2(3+\sigma_1^2)\Big)\|\bvartheta_k\|^2
+ \frac{3h^4}{4}(1+\eta)^2\|\bv_k\|^2
 +\frac{h^2}{4}(1-\eta^2)p\Bigg]\\
&= \Big(\frac{\eta^2h^2}{4}
+\frac{(M^\lambda)^2h^6}{64}(1+\eta)^2
+\frac{h^2}{4}\Big(3+\frac{h^4}{16}(1+\eta)^2(M^\lambda)^2(3+\sigma_1^2)\Big)\Big)\sigma_1^2(M^\lambda)^2\|\bvartheta_k\|^2\\
&\qquad + \frac{3h^6}{16}(1+\eta)^2(M^\lambda)^2\sigma_1^2\|\bvartheta_k\|^2
+ \frac{h^4}{16}(1-\eta^2)\sigma_1^2(M^\lambda)^2 p\,.
\end{align*}
Let $\Delta Z_k=Z_k^{\sf sg}-Z_k^{\sf fg},\mathcal{F}_k:=\sigma_1(\bvartheta_k^{\sf sg},\bv_k^{\sf sg})$.
When $h<1,\sqrt{M^\lambda}h<1/2$, it holds that
\begin{align*}
\E[\|\Delta Z_{k+1}\|^2|\mathcal{F}_k]
\leqslant \sigma_1^2 (M^\lambda)^2 h^2 \Big(C_1(\|\bvartheta_k^{\sf sg}\|^2+\|\bv_k^{\sf sg}\|^2)+C_2 p\Big)\,,
\end{align*}
where $C_1=2+\frac{1}{4}(M^\lambda)^2(1+\sigma_1^2)$ and $C_2=\frac{1}{16}.$

\end{proof}

We proceed to establish the bound for $ \wstwo(\nu_n^{\sf sg},\nu^\lambda)$.

Set $Z=(\bvartheta,\bv)$ and Lyapunov function
\[
  V(Z) := \|Z\|^2 = \|\bvartheta\|^2 + \|\bv\|^2.
\]

Assume the step size $h>0$ satisfies
\[
  h\leqslant 1,\qquad \sqrt{M^\lambda}\,h \leqslant \frac14,\qquad \gamma h\le 1.
\]
Let $P_h$ denote the CBAOAB Markov kernel with full gradient $\nabla U^\lambda$
and friction $\gamma$, and let $\pi_h$ be its invariant law.

\begin{enumerate}
  \item \textbf{Wasserstein contraction.} For all probability measures
  $\mu,\nu$ on $\mathbb R^{2p}$,
  \begin{equation}
    \label{eq:fg-contract}
    \wstwo(\mu P_h,\nu P_h)
    \;\leqslant\; (1-\rho(h))\,\wstwo(\mu,\nu),
    \qquad
    \rho(h) := \frac{m h^2}{4(1-e^{-\gamma h})}.
  \end{equation}
  \item \textbf{Bias to the target.} Let $\nu^\lambda$ denote the invariant law
  of the underdamped Langevin SDE with potential $U^\lambda$. Then
  \begin{equation}
    \label{eq:fg-bias}
    \wstwo(\pi_h,\nu^\lambda)
    \;\leqslant\;
    C_{\rm bias}\,h(1-e^{-\gamma h}),
  \end{equation}
  where
  \begin{equation}
    \label{eq:C-bias-def}
    C_{\rm bias}
    := \frac{\sqrt{M^\lambda}}{m}\Bigl(\sqrt{M^\lambda p}+\frac{ M_1^\lambda}{M^\lambda}p\Bigr).
  \end{equation}
  \item \textbf{Kernel-level Lyapunov drift in $V$.} There exist explicit
  constants
  \begin{equation}
    \label{eq:lambda-fg-C-fg-def}
    \lambda_{\sf fg}
    := \frac{1}{16}\min\{m,\gamma\},
    \qquad
    C_{\sf fg}
    := \Bigl(10 + 8\frac{M^\lambda}{m} + \frac{4( M_1^\lambda)^2}{m^2} + \frac{8}{\gamma}\Bigr)p,
  \end{equation}
  such that for every deterministic $z\in\mathbb R^{2p}$, if
  $Y^+\sim P_h(z,\cdot)$ then
  \begin{equation}
    \label{eq:fg-kernel-drift}
    \mathbb E\bigl[V(Y^+)\mid Y^0=z\bigr]
    \;\leqslant\;
    \bigl(1-\lambda_{\sf fg} h\bigr)V(z) + C_{\sf fg} h.
  \end{equation}
\end{enumerate}

\noindent
\eqref{eq:fg-contract} and \eqref{eq:fg-bias} follow from
Wasserstein contraction and weak error bounds for CBAOAB, and
\eqref{eq:fg-kernel-drift} follows from a Lyapunov analysis of the
underdamped Langevin SDE and the local error of the splitting scheme.

\medskip

%\textcolor{blue}{We now state and prove the convergence result for SG-CBAOAB with explicit constants, which is analgous to Theoerem \ref{thm:CBAOAB_SG}.}
%\red{todo: insert assumption, write corresponding theorem in main text, use $M^{\lambda}$,$M_1^{\lambda}$}
\begin{proposition}
\label{prop:sg-baoab-helper}
Let Assumptions~\ref{asm:smooth}-\ref{asm:three}, and
\ref{asm:stoch_grad} hold. 
%Let $Q_h$ denote the BAOAB Markov kernel with $\gamma$, stepsize $h$, and stochastic gradient $\tilde\nabla U^\lambda$, coupled with $P_h$ by sharing the same Gaussian noise in the O–step.

%Let $\nu^\lambda$ be the invariant law of the underdamped Langevin SDE.
% For an initial distribution $\nu$ on $\mathbb R^{2p}$, define the
% full-gradient and stochastic-gradient BAOAB chains
% \[
%   \nu_n^{\sf fg} := \nu P_h^n,
%   \qquad
%   \nu_n^{\sf sg} := \nu Q_h^n,
%   \qquad n\ge 0.
% \]

Define the constants
\begin{equation}
  \label{eq:rho-def-thm}
  \rho(h) := \frac{m h^2}{4(1-e^{-\gamma h})},
\end{equation}
\begin{equation}
  \label{eq:C-bias-thm}
  C_{\rm bias}
  := \frac{\sqrt{M^\lambda}}{m}\Bigl(\sqrt{M^\lambda p}+\frac{ M_1^\lambda}{M^\lambda}p\Bigr),
\end{equation}
\begin{equation}
  \label{eq:CV-DV-thm}
  C_V := 3\Bigl(2+\frac{1}{4}(M^\lambda)^2(1+\sigma_1^2)\Bigr),
  \qquad
  D_V := \frac{1}{16}(1-e^{-\gamma h}),
\end{equation}
and
\begin{equation}
  \label{eq:lambda-fg-C-fg-thm}
  \lambda_{\sf fg}
  := \frac{1}{16}\min\{m,\gamma\},
  \qquad
  C_{\sf fg}
  := \Bigl(10 + 8\frac{M^\lambda}{m} + \frac{4( M_1^\lambda)^2}{m^2} + \frac{8}{\gamma}\Bigr)p.
\end{equation}
Assume that the noise level $\sigma_1^2$ satisfies the smallness condition
\begin{equation}
  \label{eq:noise-small-thm}
  \sigma_1^2 \leqslant \frac{\lambda_{\sf fg}^2}{20(M^\lambda)^2C_V}.
\end{equation}
Define
\begin{equation}
  \label{eq:lambda-sg-def}
  \lambda_{\sf sg} := \frac{1}{2}\lambda_{\sf fg}
  = \frac{1}{32}\min\{m,\gamma\},
\end{equation}
and the SG drift constant
\begin{equation}
  \label{eq:C-sg-def}
  C_{\sf sg}(h)
  := 2C_{\sf fg} + \frac{5\sigma_1^2(M^\lambda)^2 D_V p}{\lambda_{\sf fg}h}.
\end{equation}
Then the SG--CBAOAB chain satisfies the Lyapunov bound
\begin{equation}
  \label{eq:sg-moment-bound-thm}
  \sup_{k\ge 0}\mathbb E\bigl[\|Z_k^{\sf sg}\|^2\bigr]
  \leqslant
  C_{\rm mom},
  \qquad
  C_{\rm mom} :=
  \mathbb E\bigl[\|Z_0^{\sf sg}\|^2\bigr]
  + \frac{C_{\sf sg}(h)}{\lambda_{\sf sg}}.
\end{equation}
Define
\begin{equation}
  \label{eq:K-noise-def-thm}
  K_{\rm noise}^2(h)
  := \sigma_1^2(M^\lambda)^2\Bigl(C_V C_{\rm mom} + D_V p\Bigr).
\end{equation}
Then, for every $n\geqslant 0$,
\begin{equation}
  \label{eq:sg-baoab-W2-final-thm}
  \begin{aligned}
    \wstwo(\nu_n^{\sf sg},\nu^\lambda)
    &\leqslant \frac{4(1-e^{-\gamma h})}{m}\,K_{\rm noise}(h) \\
    &\quad + (1-\rho(h))^n
       \Bigl(\wstwo(\nu_0^{\sf sg},\nu^\lambda)
          + C_{\rm bias}h(1-e^{-\gamma h})\Bigr)
       + C_{\rm bias}h(1-e^{-\gamma h}).
  \end{aligned}
\end{equation}
%All constants are explicit and depend only on $(m,M,M_1,\gamma,p,h,\sigma)$ and the initialization through $W_2(\nu,\nu^\lambda)$ and $\mathbb E[\|Z_0^{\sf sg}\|^2]$.
\end{proposition}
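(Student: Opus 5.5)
We prove Proposition~\ref{prop:sg-baoab-helper} in two stages: first the uniform second-moment bound \eqref{eq:sg-moment-bound-thm}, then the $\wstwo$ estimate \eqref{eq:sg-baoab-W2-final-thm}. Throughout we take as given the three full-gradient ingredients stated just before the proposition: the contraction \eqref{eq:fg-contract}, the bias bound \eqref{eq:fg-bias} and the kernel-level drift \eqref{eq:fg-kernel-drift}. We also use the one-step perturbation estimate of Lemma~\ref{lem:sg-cbaoab}.

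\emph{Moment bound.} Condition on $\mathcal F_k$ and couple $Z_{k+1}^{\sf sg}$ with $Y_{k+1}\sim P_h(Z_k^{\sf sg},\cdot)$ through the shared O-step noise, so that $Z_{k+1}^{\sf sg}=Y_{k+1}+\Delta Z_{k+1}$. Young's inequality with parameter $\epsilon=\lambda_{\sf fg}h/4$ gives
\[
\|Z_{k+1}^{\sf sg}\|^2\leqslant (1+\epsilon)\|Y_{k+1}\|^2+(1+\epsilon^{-1})\|\Delta Z_{k+1}\|^2 .
\]
We bound the first term by \eqref{eq:fg-kernel-drift}. For the second, Lemma~\ref{lem:sg-cbaoab} shows that $\E[\|\Delta Z_{k+1}\|^2\mid\mathcal F_k]$ is at most $\sigma_1^2(M^\lambda)^2\bigl(C_V\|Z_k^{\sf sg}\|^2+D_Vp\bigr)$ up to the $h^2$ factor. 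Here the constant $3$ in $C_V$ absorbs the mixing of $\|\bvartheta\|^2$ and $\|\bv\|^2$, and $D_V\sim 1-e^{-\gamma h}\lesssim h$ comes from the Brownian increment inside the second B-step. Since $(1+\epsilon^{-1})\lesssim 1/(\lambda_{\sf fg}h)$, the resulting contribution to the coefficient of $\|Z_k^{\sf sg}\|^2$ is of order $\sigma_1^2(M^\lambda)^2C_V h/\lambda_{\sf fg}$. The smallness condition \eqref{eq:noise-small-thm} is chosen exactly so that this is at most $\lambda_{\sf fg}h/4$. We therefore obtain
\[
\E\|Z_{k+1}^{\sf sg}\|^2\leqslant (1-\lambda_{\sf sg}h)\,\E\|Z_k^{\sf sg}\|^2+C_{\sf sg}(h)h ,
\]
where the additive $p$-term from $\Delta Z$ produces the second summand of $C_{\sf sg}$. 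Iterating this recursion yields $\sup_k\E\|Z_k^{\sf sg}\|^2\leqslant \E\|Z_0^{\sf sg}\|^2+C_{\sf sg}/\lambda_{\sf sg}=C_{\rm mom}$. Taking expectations in the Lemma then gives $\sup_k\E\|\Delta Z_{k+1}\|^2\leqslant K_{\rm noise}^2(h)$, absorbing $h^2\leqslant1$.

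\emph{Wasserstein estimate.} We split
\[
\wstwo(\nu_n^{\sf sg},\nu^\lambda)\leqslant \wstwo(\nu_n^{\sf sg},\nu_0^{\sf sg}P_h^n)+\wstwo(\nu_0^{\sf sg}P_h^n,\pi_h)+\wstwo(\pi_h,\nu^\lambda).
\]
By \eqref{eq:fg-contract}, the middle term is at most $(1-\rho)^n\wstwo(\nu_0^{\sf sg},\pi_h)$, and we then apply the triangle inequality together with \eqref{eq:fg-bias}. By \eqref{eq:fg-bias}, the last term is at most $C_{\rm bias}h(1-e^{-\gamma h})$. For the first term we use the recursion \eqref{eq:sg-fg}: one step of $Q_h$ versus $P_h$ from the same law costs at most $(\sup_k\E\|\Delta Z_{k+1}\|^2)^{1/2}\leqslant K_{\rm noise}$. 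Unrolling gives a geometric sum bounded by $K_{\rm noise}/\rho(h)$. Finally, $1/\rho=4(1-e^{-\gamma h})/(mh^2)$; after absorbing the $h^2$ prefactor carried by $\Delta Z$ this becomes the stated $\frac{4(1-e^{-\gamma h})}{m}K_{\rm noise}$.

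\emph{Main obstacle.} The delicate point is the first term. A naive $L^2$ bound on $\Delta Z$ is of order $h$ per step, while $\rho\sim mh/\gamma$, and summing these in the triangle inequality loses a factor of $h$. We must therefore carry the explicit $h^2$ from Lemma~\ref{lem:sg-cbaoab} into $K_{\rm noise}$ carefully. If this is too lossy, we will instead use the martingale structure: the gradient noise is conditionally mean zero, so the cross terms in $\E\|\sum_k P\text{-propagated }\Delta Z_k\|^2$ vanish, and the errors then add in squares rather than linearly. This orthogonality argument is what we expect to need in order to obtain the clean prefactor.
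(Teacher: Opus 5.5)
Your argument follows the paper's proof essentially step for step: Young's inequality with $\epsilon=\lambda_{\sf fg}h/4$, the smallness condition giving the factor $1-\tfrac12\lambda_{\sf fg}h$, the moment recursion, then the one-step perturbation fed into the contraction recursion \eqref{eq:sg-fg} and the three-term triangle inequality. The moment-bound half is fine. The gap is the final conversion into the prefactor $\frac{4(1-e^{-\gamma h})}{m}$.

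Lemma~\ref{lem:sg-cbaoab} bounds the \emph{second moment} $\E\|\Delta Z_{k+1}\|^2$ by $h^2K_{\rm noise}^2$. So the one-step cost is $\wstwo(\mu Q_h,\mu P_h)\leqslant hK_{\rm noise}(h)$: one power of $h$, not two. You had also already discarded that power when you wrote the per-step cost as $K_{\rm noise}$ ``absorbing $h^2\leqslant 1$''. ``Absorbing the $h^2$ prefactor'' afterwards therefore counts it twice. The geometric sum honestly gives
\[
\wstwo(\nu_n^{\sf sg},\nu_0^{\sf sg}P_h^n)\leqslant\frac{hK_{\rm noise}(h)}{\rho(h)}=\frac{4(1-e^{-\gamma h})}{mh}\,K_{\rm noise}(h)\approx\frac{4\gamma}{m}\,K_{\rm noise}(h).
\]
This is a factor $1/h$ larger than the term in \eqref{eq:sg-baoab-W2-final-thm}, and it does not vanish as $h\to 0$. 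More careful bookkeeping in the Lemma cannot remove this. Under the synchronous coupling the B-steps inject the velocity error $\tfrac h2(\nabla U^\lambda-\tilde\nabla U^\lambda)$, whose $L^2$ norm is genuinely of order $h$.

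The paper's Step~5 makes exactly the same slip. It derives $E_n\leqslant K_{\rm noise}(h)h/\rho(h)$ and then equates this with $\frac{4(1-e^{-\gamma h})}{m}K_{\rm noise}(h)$. Following it does not close the gap. What this route actually proves is \eqref{eq:sg-baoab-W2-final-thm} with $\frac{4(1-e^{-\gamma h})}{mh}$ in place of $\frac{4(1-e^{-\gamma h})}{m}$.

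Your ``main obstacle'' paragraph diagnoses the loss correctly, but the orthogonality fallback does not recover the stated prefactor either. Suppose you grant a pathwise contraction and exactly centered increments. Errors adding in squares give $\E\|Z_n^{\sf sg}-Z_n^{\sf fg}\|^2\leqslant h^2K_{\rm noise}^2/(1-(1-\rho)^2)\leqslant h^2K_{\rm noise}^2/\rho$. That is a bound $2\sqrt{(1-e^{-\gamma h})/m}\,K_{\rm noise}(h)$, which is $O(\sqrt h)$ and still exceeds the claimed term by a factor of order $\sqrt{m/(\gamma h)}$.

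Moreover, neither hypothesis is supplied:
\begin{itemize}
\item \eqref{eq:fg-contract} is a contraction of laws in $\wstwo$. The cross-term argument needs a pathwise contraction of the synchronously coupled maps in the Euclidean norm, which is a different statement.
\item $\Delta Z_{k+1}$ is centered only to leading order, because the second B-step evaluates $\nabla U^\lambda$ at the perturbed point $\bvartheta_k^{(2),{\sf sg}}$.
\item Algorithm~\ref{alg:SG-CBAOAB} reuses $G_k$ in consecutive steps, so successive perturbations are correlated rather than martingale increments.
\end{itemize}
Obtaining the stated $O(h)$ first term would need a genuinely different argument than this coupling; otherwise the statement should carry the weaker prefactor above.
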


\newpage
\begin{proof}[Proof of Proposition~\ref{prop:sg-baoab-helper}]
We proceed in several steps.

% \paragraph{Step 1: One-step SG--FG difference.}
% Fix $k\ge 0$ and condition on $\mathcal F_k := \sigma(Z_k^{\sf sg})$ with
% $Z_k^{\sf sg}=(\bvartheta_k,\bv_k)$. Couple one BAOAB step with full gradient
% and one with stochastic gradient starting from $Z_k^{\sf sg}$, using the
% same Gaussian noise in the O–step:
% \[
%   Y_{k+1} \sim P_h(Z_k^{\sf sg},\cdot),
%   \qquad
%   Z_{k+1}^{\sf sg} \sim Q_h(Z_k^{\sf sg},\cdot).
% \]
% Define the difference
% \[
%   \Delta Z_{k+1}
%   := Z_{k+1}^{\sf sg}-Y_{k+1}
%   = (\Delta\bvartheta_{k+1},\Delta \bv_{k+1}).
% \]

%\red{insert the previous proof of $\Delta Z$ as a lemma}
\paragraph{Step 1: Use previous result.}Invoking Lemma~\ref{lem:sg-cbaoab}, we have
\[
  \mathbb E\bigl[\|\Delta\bvartheta_{k+1}\|^2\mid\mathcal F_k\bigr]
  \leqslant \frac{h^4}{16}(1+e^{-\gamma h})^2\sigma_1^2 (M^\lambda)^2\|\bvartheta_k\|^2,
\]
and
\[
  \mathbb E\bigl[\|\Delta \bv_{k+1}\|^2\mid\mathcal F_k\bigr]
  \leqslant \sigma_1^2(M^\lambda)^2 h^2\Bigl(C_1(\|\bvartheta_k\|^2+\|\bv_k\|^2)+C_2 p\Bigr),
\]
where
\[
  C_1 := 2+\frac{1}{4}(M^\lambda)^2(1+\sigma_1^2),
  \qquad
  C_2 := \frac{1}{16}(1-e^{-\gamma h}).
\]

Since $(1+e^{-\gamma h})^2\leqslant 4$ and $h\leqslant 1$, we obtain
\[
  \mathbb E\bigl[\|\Delta\bvartheta_{k+1}\|^2\mid\mathcal F_k\bigr]
  \leqslant \sigma_1^2(M^\lambda)^2 h^2\|\bvartheta_k\|^2.
\]
Thus for $V(z)=\|z\|^2$,
\[
\begin{aligned}
  \mathbb E\bigl[V(\Delta Z_{k+1})\mid\mathcal F_k\bigr]
  &= \mathbb E\bigl[\|\Delta\bvartheta_{k+1}\|^2
     + \|\Delta \bv_{k+1}\|^2\mid\mathcal F_k\bigr]\\
  &\leqslant \sigma_1^2(M^\lambda)^2h^2\Bigl(
    (C_1+1)\|\bvartheta_k\|^2 + C_1\|\bv_k\|^2 + C_2 p\Bigr)\\
  &\leqslant \sigma_1^2(M^\lambda)^2h^2\Bigl(
    C_V(\|\bvartheta_k\|^2+\|\bv_k\|^2) + D_V p\Bigr),
\end{aligned}
\]
with
\[
  C_V := 3C_1 = 3\Bigl(2+\frac{1}{4}(M^\lambda)^2(1+\sigma_1^2)\Bigr),
  \qquad
  D_V := C_2 = \frac{1}{16}(1-e^{-\gamma h}).
\]
In other words,
\begin{equation}
  \label{eq:DeltaZ-V}
  \mathbb E\bigl[V(\Delta Z_{k+1})\mid\mathcal F_k\bigr]
  \leqslant \sigma_1^2(M^\lambda)^2h^2\Bigl(
    C_V V(Z_k^{\sf sg}) + D_V p\Bigr).
\end{equation}

\paragraph{Step 2: SG Lyapunov drift via kernel drift of $P_h$.}
We have the decomposition
\[
  Z_{k+1}^{\sf sg} = Y_{k+1} + \Delta Z_{k+1}.
\]
By Young's inequality in $\mathbb R^{2p}$, for any $\varepsilon>0$,
\[
  \|Y_{k+1}+\Delta Z_{k+1}\|^2
  \le (1+\varepsilon)\|Y_{k+1}\|^2
     + \Bigl(1+\frac{1}{\varepsilon}\Bigr)\|\Delta Z_{k+1}\|^2.
\]
Taking conditional expectation given $\mathcal F_k$ and using
\eqref{eq:fg-kernel-drift} with $z=Z_k^{\sf sg}$ and
\eqref{eq:DeltaZ-V}, we get
\[
\begin{aligned}
  \mathbb E\bigl[V(Z_{k+1}^{\sf sg})\mid\mathcal F_k\bigr]
  &\leqslant (1+\varepsilon)\Bigl[
        (1-\lambda_{\sf fg}h)V(Z_k^{\sf sg}) + C_{\sf fg}h
      \Bigr]\\
  &\quad + \Bigl(1+\frac{1}{\varepsilon}\Bigr)
    \sigma_1^2(M^\lambda)^2h^2\Bigl(
      C_V V(Z_k^{\sf sg}) + D_V p\Bigr).
\end{aligned}
\]
Hence
\begin{equation}
  \label{eq:SG-drift-raw}
  \mathbb E\bigl[V(Z_{k+1}^{\sf sg})\mid\mathcal F_k\bigr]
  \leqslant A(\varepsilon,h)V(Z_k^{\sf sg}) + B(\varepsilon,h),
\end{equation}
where
\[
  A(\varepsilon,h)
  := (1+\varepsilon)(1-\lambda_{\sf fg}h)
     + \Bigl(1+\frac{1}{\varepsilon}\Bigr)\sigma_1^2(M^\lambda)^2h^2C_V,
\]
\[
  B(\varepsilon,h)
  := (1+\varepsilon)C_{\sf fg}h
     + \Bigl(1+\frac{1}{\varepsilon}\Bigr)\sigma_1^2(M^\lambda)^2h^2D_V p.
\]

We now choose
\[
  \varepsilon := \frac{\lambda_{\sf fg}h}{4}.
\]
For $h$ small enough, $\varepsilon\leqslant 1$ and thus $(1+\varepsilon)\le 2$.
Moreover,
\[
  (1+\varepsilon)(1-\lambda_{\sf fg}h)
  \leqslant 1-\frac{3}{4}\lambda_{\sf fg}h.
\]
\newpage
We also have
\[
  1+\frac{1}{\varepsilon}
  = 1+\frac{4}{\lambda_{\sf fg}h}
  \leqslant \frac{5}{\lambda_{\sf fg}h},
\]
provided $\lambda_{\sf fg}h\le 1$.

Hence,
\[
  A(\varepsilon,h)
  \leqslant 1-\frac{3}{4}\lambda_{\sf fg}h
       + \frac{5\sigma_1^2(M^\lambda)^2h^2C_V}{\lambda_{\sf fg}h}.
\]
Impose the small noise condition \eqref{eq:noise-small-thm}:
\[
  \frac{5\sigma_1^2(M^\lambda)^2h^2C_V}{\lambda_{\sf fg}h}
  \leqslant \frac{1}{4}\lambda_{\sf fg}h,
\]
which yields
\[
  A(\varepsilon,h)\leqslant 1-\frac{1}{2}\lambda_{\sf fg}h.
\]
Define
\[
  \lambda_{\sf sg} := \frac{1}{2}\lambda_{\sf fg}
  = \frac{1}{32}\min\{m,\gamma\}.
\]
Then,
\begin{equation}
  \label{eq:SG-drift-final}
  \mathbb E\bigl[V(Z_{k+1}^{\sf sg})\mid\mathcal F_k\bigr]
  \leqslant (1-\lambda_{\sf sg}h)V(Z_k^{\sf sg})
       + C_{\sf sg}(h)h,
\end{equation}
where
\[
  C_{\sf sg}(h) := \frac{B(\varepsilon,h)}{h}
  \leqslant 2C_{\sf fg}
     + \frac{5\sigma_1^2(M^\lambda)^2 D_V p}{\lambda_{\sf fg}h}.
\]
This gives displays~\eqref{eq:lambda-sg-def}--\eqref{eq:C-sg-def}.

\paragraph{Step 3: Uniform SG moment bound.}
Taking expectations in \eqref{eq:SG-drift-final},
\[
  \mathbb E\bigl[V(Z_{k+1}^{\sf sg})\bigr]
  \leqslant (1-\lambda_{\sf sg}h)\mathbb E\bigl[V(Z_k^{\sf sg})\bigr]
       + C_{\sf sg}(h)h.
\]
Solving this linear recursion gives, for all $k\ge 0$,
\[
  \mathbb E\bigl[V(Z_k^{\sf sg})\bigr]
  \leqslant (1-\lambda_{\sf sg}h)^k\mathbb E\bigl[V(Z_0^{\sf sg})\bigr]
      + \frac{C_{\sf sg}(h)}{\lambda_{\sf sg}}.
\]
Therefore,
\[
  \sup_{k\geqslant 0}\mathbb E\bigl[V(Z_k^{\sf sg})\bigr]
  \leqslant C_{\rm mom}
  := \mathbb E\bigl[V(Z_0^{\sf sg})\bigr]
     + \frac{C_{\sf sg}(h)}{\lambda_{\sf sg}}.
\]

\paragraph{Step 4: One-step kernel perturbation in $\wstwo$.}
Let $\mu$ be any probability measure on $\mathbb R^{2p}$ and
$Z_0\sim\mu$. From Step~1 we can couple
\[
  Y_1\sim P_h(Z_0,\cdot),\qquad
  Z_1^{\sf sg}\sim Q_h(Z_0,\cdot)
\]
using the same Gaussian noise so that
$\Delta Z_1:=Z_1^{\sf sg}-Y_1$ satisfies \eqref{eq:DeltaZ-V} with
$Z_k^{\sf sg}$ replaced by $Z_0$. Then
\[
  \wstwo^2(\mu Q_h,\mu P_h)
  \leqslant \mathbb E\bigl[\|Z_1^{\sf sg}-Y_1\|^2\bigr]
  = \mathbb E\bigl[V(\Delta Z_1)\bigr].
\]
If we take $\mu$ to be the law of $Z_k^{\sf sg}$, then by \eqref{eq:DeltaZ-V}
and the uniform moment bound,
\[
\begin{aligned}
  \mathbb E\bigl[V(\Delta Z_1)\bigr]
  &\leqslant \sigma_1^2(M^\lambda)^2h^2\Bigl(
    C_V \sup_{j\geqslant 0}\mathbb E\bigl[V(Z_j^{\sf sg})\bigr] + D_V p\Bigr)\\
  &\leqslant \sigma_1^2(M^\lambda)^2h^2\Bigl(C_V C_{\rm mom} + D_V p\Bigr).
\end{aligned}
\]
Define
\[
  K_{\rm noise}^2(h)
  := \sigma_1^2(M^\lambda)^2\Bigl(C_V C_{\rm mom} + D_V p\Bigr).
\]
Then
\begin{equation}
  \label{eq:kernel-perturb-W2}
  \wstwo(\mu Q_h,\mu P_h)
  \leqslant K_{\rm noise}(h)\,h.
\end{equation}

\paragraph{Step 5: SG--FG chain distance.}
Let
\[
  \nu_n^{\sf fg} := \nu P_h^n,
  \qquad
  \nu_n^{\sf sg} := \nu Q_h^n,
\]
and define $E_n:=\wstwo(\nu_n^{\sf sg},\nu_n^{\sf fg})$. Then
\[
\begin{aligned}
  E_n
  &= \wstwo(\nu Q_h^n,\nu P_h^n)\\
  &\le \wstwo(\nu Q_h^{n-1}Q_h,\nu Q_h^{n-1}P_h)
     + \wstwo(\nu Q_h^{n-1}P_h,\nu P_h^{n-1}P_h)\\
  &\leqslant \wstwo(\nu Q_h^{n-1}Q_h,\nu Q_h^{n-1}P_h)
     + (1-\rho(h))E_{n-1},
\end{aligned}
\]
where we used the contraction \eqref{eq:fg-contract} for the second term.
Applying \eqref{eq:kernel-perturb-W2} with $\mu=\nu Q_h^{n-1}$, we obtain
\[
  E_n \leqslant K_{\rm noise}(h)h + (1-\rho(h))E_{n-1},\qquad E_0=0.
\]
Solving this recursion yields
\[
  E_n \leqslant K_{\rm noise}(h)h \sum_{j=0}^{n-1}(1-\rho(h))^j
  \le \frac{K_{\rm noise}(h)h}{\rho(h)}.
\]
With $\rho(h)$ as in \eqref{eq:rho-def-thm}, this gives
\[
  \wstwo(\nu_n^{\sf sg},\nu_n^{\sf fg})
  \leqslant \frac{4(1-e^{-\gamma h})}{m}\,K_{\rm noise}(h).
\]

\paragraph{Step 6: Final bound to $\nu^\lambda$.}
Finally, decompose
\[
  \wstwo(\nu_n^{\sf sg},\nu^\lambda)
  \leqslant \wstwo(\nu_n^{\sf sg},\nu_n^{\sf fg})
     + \wstwo(\nu_n^{\sf fg},\pi_h)
     + \wstwo(\pi_h,\nu^\lambda).
\]
We have just shown
\[
  \wstwo(\nu_n^{\sf sg},\nu_n^{\sf fg})
  \leqslant \frac{4(1-e^{-\gamma h})}{m}\,K_{\rm noise}(h).
\]
From \eqref{eq:fg-contract} and the fact that $\pi_h$ is invariant for $P_h$,
\[
  \wstwo(\nu_n^{\sf fg},\pi_h)
  \leqslant (1-\rho(h))^n \wstwo(\nu,\pi_h).
\]
By the triangle inequality and \eqref{eq:fg-bias},
\[
  \wstwo(\nu,\pi_h)
  \leqslant \wstwo(\nu,\nu^\lambda) + \wstwo(\nu^\lambda,\pi_h)
  \leqslant \wstwo(\nu,\nu^\lambda)
     + C_{\rm bias}h(1-e^{-\gamma h}).
\]
Finally, $\wstwo(\pi_h,\nu^\lambda)\leqslant C_{\rm bias}h(1-e^{-\gamma h})$
again by \eqref{eq:fg-bias}. Combining these bounds yields
\eqref{eq:sg-baoab-W2-final-thm}, which completes the proof.

\end{proof}

\begin{proof}[Proof of Theorem \ref{thm:CBAOAB_SG}]
The Proof of Theorem \ref{thm:CBAOAB_SG} follows directly from Proposition \ref{prop:sg-baoab-helper}, and using the triangle inequality which obtains the same constants from Proposition \ref{prop:w2_tight}.
\end{proof}

\newpage

\subsection{Proofs of convergence for SG-CKLMC}
This appendix section is devoted to the proof of Theorem~\ref{thm:CKLMC_SG}. The proof technique is similar to that in \cite{Mert24}, but here we provide an explicit expression for the Wasserstein distance and combine it with our refined analysis of the distance between the surrogate distribution and the target distribution.

% \red{Assume $\E[\|\nabla U^\lambda(x)-\nabla \tilde U^\lambda(x)\|^2|x ]\leqslant \sigma_1^2L^2\|x\|^2$, where $L$ is the smooth parameter of $f$ and $\argmin_{\btheta}f(\btheta)=\mathbf{0}$.}
To prove Theorem~\ref{thm:CKLMC_SG}, we first establish the following lemma, which provides an upper bound on the second moment of the $n$-th iterate $\vartheta_n$ generated by the SG-CKLMC algorithm. 
For notational simplicity, throughout this section we omit the superscript $\mathsf{SG\text{-}CKLMC}$ whenever no ambiguity arises.

\begin{lemma}
\label{lem:sg-cklmc_mom}
Assume that the step size $h$ satisfies  
\[
h\leqslant \min\left\{\frac{\gamma 
\tau }{2K_1},\frac{2}{\gamma \tau},\frac{1}{10\gamma}\right\}\,,
\]
where
\begin{align*}
\tau
&= \frac{1}{2}\min\left\{ \frac{1}{4}, \frac{m}{M^\lambda + \gamma^2/2} \right\}, \
K_1
&= \max\left\{
\frac{16\big((M^\lambda)^2 + 2\gamma (M^\lambda)^2 + \sigma_1^2 L^2\big)}{(1-2\tau)\gamma^2},
\frac{4M^\lambda + 2\gamma^2(1-\tau) + 8\gamma}{1-2\tau}
\right\}.
\end{align*}
Then the SG–CKLMC iterates $\{\bvartheta_n\}_{n\geqslant 0}$ satisfy the uniform second moment bound
% with 
% \begin{align*}
% \lambda &= \frac{1}{2}\min\left\{\frac{1}{4},\frac{m}{M^\lambda+\gamma^2/2}\right\} \\
% K_1&=\max\left\{\frac{16((M^\lambda)^2+2\gamma (M^\lambda)^2+\sigma_1^2L^2)}{(1-2\lambda)\gamma^2},
% \frac{4M^\lambda+2\gamma^2(1-\lambda)+8\gamma}{1-2\lambda}\right\}
% \end{align*}
% it holds that
\begin{align*}
\|\bvartheta_n\|^2_{\Ltwo}\leqslant \frac{8}{(1-2\tau)\gamma^2}C_{\mathcal{V}},\qquad n\geqslant 0\,,
\end{align*}
where 
\[
C_{\mathcal{V}}=\int_{\mathbb{R}^{2p}}\mathcal{V}(\bvartheta,\bv)\, \mu_0(d\bvartheta,d\bv)+ \frac{4}{\tau}\left(p+\frac{mf(0)}{2M^\lambda +\gamma^2}\right)\,,
\]
and the Lyapunov function $\mathcal{V}$ is defined by
\[
\mathcal{V}(\bvartheta,\bv)=U^\lambda(\bvartheta)+\frac{\gamma^2}{4}(\|\bvartheta+\gamma^{-1}\bv\|^2+\|\gamma^{-1}\bv\|^2-\tau\|\bvartheta\|^2)\,.
\]
\end{lemma}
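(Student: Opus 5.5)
We will prove Lemma~\ref{lem:sg-cklmc_mom} by a discrete Lyapunov drift argument for the modified energy $\mathcal{V}$, in the spirit of the continuous-time analysis of \cite{Mert24} and the discrete-time analysis of kinetic Langevin Monte Carlo. The SG-CKLMC step is the Euler-type kinetic update in which $\nabla U^\lambda(\bvartheta_n)$ is replaced by $\tilde\nabla U^\lambda(\bvartheta_n)=\nabla U^\lambda(\bvartheta_n)+\zeta_n$. Here $\zeta_n$ is a conditionally mean-zero error with $\E[\|\zeta_n\|^2\mid\bvartheta_n]\leqslant \sigma_1^2L^2\|\bvartheta_n\|^2$ by Assumption~\ref{asm:stoch_grad}(i)--(ii); the noise enters only through $\nabla f$, since $\ell^\lambda_{\K}$ is evaluated exactly.

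The first step is to show that $\mathcal{V}$ is equivalent to a quadratic. Since $\tau\leqslant 1/8$, the quadratic part of $\mathcal{V}$ is positive definite, and
\begin{align*}
\mathcal{V}(\bvartheta,\bv)\geqslant U^\lambda(\bvartheta)+\frac{(1-2\tau)\gamma^2}{8}\|\bvartheta\|^2 .
\end{align*}
Since $U^\lambda\geqslant 0$ once $f$ is shifted so that $f(0)$ is its minimum value (Assumption~\ref{asm:f}), this gives $\|\bvartheta\|^2\leqslant \frac{8}{(1-2\tau)\gamma^2}\mathcal{V}$. That is exactly the prefactor in the claim, so it suffices to prove $\sup_n\E\mathcal{V}(\bvartheta_n,\bv_n)\leqslant C_{\mathcal V}$.

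The second step is the one-step drift. We expand $\mathcal{V}(\bvartheta_{n+1},\bv_{n+1})$ around $(\bvartheta_n,\bv_n)$. For the $U^\lambda$ part we use $M^\lambda$-smoothness, and the quadratic part is expanded exactly; we then take conditional expectations.

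The first-order terms reproduce the generator of the kinetic dynamics applied to $\mathcal{V}$. We bound them using strong convexity and the identity $\langle\nabla U^\lambda(\bvartheta),\bvartheta\rangle\geqslant U^\lambda(\bvartheta)-U^\lambda(0)+\frac m2\|\bvartheta\|^2$. With the choice of $\tau$ (the constraint $\tau\leqslant m/(M^\lambda+\gamma^2/2)$), this gives
\begin{align*}
\mathcal{L}\mathcal{V}\leqslant -\gamma\tau\,\mathcal{V}+\gamma\Bigl(p+\frac{m f(0)}{2M^\lambda+\gamma^2}\Bigr)\cdot\text{const}.
\end{align*}
The zero-mean noise $\zeta_n$ drops out of the first-order terms by Assumption~\ref{asm:stoch_grad}(i). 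The second-order terms are $O(h^2)$ multiples of $\|\nabla U^\lambda(\bvartheta_n)\|^2\leqslant (M^\lambda)^2\|\bvartheta_n\|^2$, of $\|\bv_n\|^2$, of $\E\|\zeta_n\|^2\leqslant\sigma_1^2L^2\|\bvartheta_n\|^2$, and of the Brownian increment variance $\propto\gamma hp$. Using the quadratic equivalence, each of these is bounded by $h^2K_1\mathcal{V}$ plus $O(h p)$; the two entries of $K_1$ are designed to absorb the position-type and velocity-type terms respectively.

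Under $h\leqslant \gamma\tau/(2K_1)$ we get $h^2K_1\leqslant h\gamma\tau/2$. This yields the recursion
\begin{align*}
\E\mathcal{V}_{n+1}\leqslant \Bigl(1-\frac{\gamma\tau h}{2}\Bigr)\E\mathcal{V}_n+2\gamma h\Bigl(p+\frac{mf(0)}{2M^\lambda+\gamma^2}\Bigr).
\end{align*}
The conditions $h\leqslant 2/(\gamma\tau)$ and $h\leqslant 1/(10\gamma)$ keep the contraction factor in $(0,1)$ and control the cross terms. Iterating the recursion gives $\E\mathcal{V}_n\leqslant \E\mathcal{V}_0+\frac{4}{\tau}\bigl(p+\frac{mf(0)}{2M^\lambda+\gamma^2}\bigr)=C_{\mathcal V}$ uniformly in $n$, and the first step then yields the stated moment bound.

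The main obstacle is the second step: it requires tracking all second-order cross terms of the Euler update so that they are dominated by $h^2K_1\mathcal{V}$. In particular, the multiplicative SG variance $\sigma_1^2L^2\|\bvartheta\|^2$ must be absorbed into the drift, which is exactly why $\sigma_1^2L^2$ appears in $K_1$. My first attempt will be to bound each cross term with Young's inequality and convert all terms to $\mathcal{V}$ using the quadratic equivalence from the first step.
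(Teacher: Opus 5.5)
Your proposal is correct and is essentially the paper's argument. The paper's proof simply defers to Proposition 2.22 of \cite{Mert24} and Lemma EC.5 of \cite{gao2022global}, which is exactly this discrete drift argument for $\mathcal{V}$: the lower bound $\mathcal{V}\geqslant \frac{(1-2\tau)\gamma^2}{8}\|\bvartheta\|^2$, the generator bound $\mathcal{L}\mathcal{V}\leqslant \gamma\bigl(p+\frac{mf(0)}{2M^\lambda+\gamma^2}-\tau\mathcal{V}\bigr)$, and the $O(h^2)$ terms of the Euler step absorbed into $h^2K_1\mathcal{V}$, whose two entries do dominate the position and velocity coefficients.

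One small correction: you cannot shift $f$ for free, because $C_{\mathcal{V}}$ is not shift-invariant and becomes negative for $f(0)\ll 0$. What the argument really needs, as in the cited lemma, is $f(0)=\min f\geqslant 0$, which gives $U^\lambda\geqslant 0$ for both the quadratic equivalence and the final conversion to a bound on $\|\bvartheta_n\|^2_{\Ltwo}$.
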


\begin{proof}
The proof follows the same steps as Proposition 2.22 in~\cite{Mert24}, relying on Lemma EC.5 in~\cite{gao2022global}.
\end{proof}

We are now ready to prove Theorem~~\ref{thm:CKLMC_SG}.
\begin{proof}[Proof of Theorem~\ref{thm:CKLMC_SG}]
Let $\bvartheta_n$, $\bv_n$ be the iterates of the CKLMC algorithm with a stochastic gradient. 
Let $(\bL_t,\bV_t)$ be the 
kinetic Langevin diffusion, coupled with
$(\bvartheta_n, \bv_n)$ through the same Brownian
motion $(\bW_t; t\geqslant 0)$ and starting from 
a random point $(\bL_0, \bV_0)\propto \exp(- 
f(\by) + \frac{1}{2} \|\bw \|_2^2)$ such that
$\bV_0=\bv_0$. This means that 
\begin{align*}
    \bv_{n+1} & = e^{-\gamma h}\bv_n   -  \frac{1-e^{-\gamma h}}{\gamma}
    \nabla \tilde U^\lambda(\bvartheta_n) + \sqrt{2\gamma } \xi_{n+1}\\
    \bvartheta_{n+1} & = \bvartheta_n + \frac{1-e^{-\gamma h}}{\gamma} \bv_n
    - \frac{e^{-\gamma h}-1+\gamma h}{\gamma^2}
    \nabla \tilde U^\lambda(\bvartheta_n) 
    + \sqrt{2\gamma}\xi'_{n+1}\,.
\end{align*}
To proceed with the proof, we introduce an auxiliary process.
\begin{align*}
    \tilde\bv_{n+1} & = e^{-\gamma h}\bv_n   -  \frac{1-e^{-\gamma h}}{\gamma}
    \nabla U^\lambda(\bvartheta_n) + \sqrt{2\gamma } \xi_{n+1}\\
    \tilde\bvartheta_{n+1} & = \bvartheta_n + \frac{1-e^{-\gamma h}}{\gamma} \bv_n
    -  \frac{e^{-\gamma h}-1+\gamma h}{\gamma^2}
    \nabla U^\lambda(\bvartheta_n) + \sqrt{2\gamma} \xi'_{n+1}\,.
\end{align*}
We set $\tilde \bv_0=\bv_0$ and $ \tilde\bvartheta_0=\bvartheta_0.$
% We also consider the kinetic Langevin diffusion,
% $(\bL',\bV')$, defined on $[0,h]$ with the starting point 
% $(\bvartheta_n, \bv_n)$ and driven by the Brownian motion 
% $(\bW_{nh+t} - \bW_{nh};t \in[0,h])$. It satisfies
% \begin{align*}
% \bV'_t & = \bv_n e^{-\gamma^2 t}  - \int_0^t e^{-\gamma^2(t-s)}
%         \nabla f(\bL'_s)\,\rmd s + \sqrt{2} \int_0^t e^{-\gamma^2(t-s)}\,\rmd\bW_s\\
% \bL'_t & = \bvartheta_n + \gamma^2\int_0^t \bV'_s\,\rmd s\,.
% \end{align*}
Our goal will be to bound the term $x_n$ defined by
\begin{align}\label{eq:xn}
    x_n = \bigg\| \bfC^{-1}
    \begin{bmatrix}
        \bv_n- \bV_{nh}\\
        \bvartheta_n-\bL_{nh}
    \end{bmatrix}
    \bigg\|_{\Ltwo}
    \quad \text{with}\quad 
    \bfC = \frac{1}{\gamma}\begin{bmatrix}
    \mathbf 0_{p\times p} & -\gamma \mathbf I_{p}\\
    \mathbf I_p & \mathbf I_p
    \end{bmatrix}.
\end{align}
where the $\Ltwo$-norm of a random vector $X$ is defined as $\E[\|X\|_2^2]$.
Applying the triangle inequality, we have
\begin{align}
\label{eq:cklmc}
x_{n+1} \leqslant 
\bigg\| \bfC^{-1}
    \begin{bmatrix}
        \bv_{n+1}- \tilde\bv_{n+1}\\
        \bvartheta_{n+1}-\tilde\bvartheta_{n+1}
    \end{bmatrix}
    \bigg\|_{\Ltwo}+
    \bigg\| \bfC^{-1}
    \begin{bmatrix}
        \tilde\bv_{n+1} - \bV_{(n+1)h} \\
        \tilde\bvartheta_{n+1}- \bL_{(n+1)h}
    \end{bmatrix}
    \bigg\|_{\Ltwo}
\end{align}
By Lemma~\ref{lem:sg-cklmc_mom} and $\gamma h<0.1$, we have
\begin{align*}
\bigg\| \bfC^{-1}
    \begin{bmatrix}
        \bv_{n+1}- \tilde\bv_{n+1}\\
        \bvartheta_{n+1}-\tilde\bvartheta_{n+1}
    \end{bmatrix}
    \bigg\|_{\Ltwo}
&\leqslant 2\|\bv_{n+1}- \tilde\bv_{n+1}\|_{\Ltwo}
+ \gamma \| \bvartheta_{n+1}-\tilde\bvartheta_{n+1}\|_{\Ltwo}\\
&\leqslant 2.1h\sigma_1 L \|\bvartheta_n\|_{\Ltwo}\\
&\leqslant  2.1h\sigma_1 L\sqrt{\frac{8}{(1-2\tau)\gamma^2}C_{\mathcal{V}}}\,. 
\end{align*}
By Theorem 2 in \cite{dalalyan2020sampling}, we obtain
\begin{align*}
 \bigg\| \bfC^{-1}
    \begin{bmatrix}
        \tilde\bv_{n+1} - \bV_{(n+1)h} \\
        \bvartheta_{n+1}- \bL_{(n+1)h}
    \end{bmatrix}
    \bigg\|_{\Ltwo}
    \leqslant \bigg(1-\frac{0.75mh}{\gamma}\bigg)  \bigg\| \bfC^{-1}
    \begin{bmatrix}
        \bv_{n} - \bV_{nh} \\
        \bvartheta_{n}- \bL_{nh}
    \end{bmatrix}
    \bigg\|_{\Ltwo} 
    + 0.75M^\lambda h^2\sqrt{p}\,,
\end{align*}
provided that $h\leqslant \frac{m}{4\gamma M^\lambda}$ and $\gamma\geqslant \sqrt{M^\lambda+m}.$
Combining these two displays with inequality~\eqref{eq:cklmc} gives
\begin{align*}
x_{n+1}\leqslant \bigg(1-\frac{0.75mh}{\gamma}\bigg) x_n + 
0.75M^\lambda h^2\sqrt{p}
+ 2.1h\sigma_1 L \sqrt{\frac{8C_{\mathcal{V}}}{(1-2\tau)\gamma^2}}\,.
\end{align*}
It then follows by induction that
\begin{align*}
x_n\leqslant  \bigg(1-\frac{0.75mh}{\gamma}\bigg)^n x_0 +  
\frac{ \gamma M^\lambda h\sqrt{p}}{m}
+\frac{8\sigma_1 L}{m} \sqrt{\frac{C_{\mathcal{V}}}{1-2\tau}} \,.
\end{align*}
Note that $x_0=\gamma \wstwo(\nu_0,\nu^\lambda)$ and $\wstwo(\nu_n,\nu^\lambda)
\leqslant \| \bvartheta_n-\bL_{nh}\|_{\Ltwo}
\leqslant \gamma^{-1}\sqrt{2}x_n.$
We then have
\begin{align*}
\wstwo(\nu_n,\nu^\lambda)\leqslant 
 \bigg(1-\frac{0.75mh}{\gamma}\bigg)^n  \wstwo(\nu_0,\nu^\lambda) 
 + \frac{ \sqrt{2}M^\lambda h\sqrt{p}}{m} 
 + \frac{8\sqrt{2}\sigma_1 L}{m \gamma} \sqrt{\frac{C_{\mathcal{V}}}{1-2\tau}} \,.
\end{align*}
When $p>2$, applying the triangle inequality and Proposition~\ref{prop:w2_tight}, we obtain
\begin{align*}
\wstwo(\nu_n,\nu)
& \leqslant 
\wstwo(\nu_n,\nu^\lambda) + \wstwo(\nu,\nu^\lambda) \\
& \leqslant \bigg(1-\frac{0.75mh}{\gamma}\bigg)^n  \wstwo(\nu_0,\nu^\lambda)
+ \frac{ \sqrt{2}M^\lambda h\sqrt{p}}{m} 
 + \frac{8\sqrt{2}\sigma_1 L}{m \gamma} \sqrt{\frac{C_{\mathcal{V}}}{1-2\tau}}
+ C(p,2)\lambda ^{1/2+1/p} \\
& \leqslant \bigg(1-\frac{0.75mh}{\gamma}\bigg)^n  \wstwo(\nu_0,\nu) + 
+ \frac{ \sqrt{2}M^\lambda h\sqrt{p}}{m} 
 + \frac{8\sqrt{2}\sigma_1 L}{m \gamma} \sqrt{\frac{C_{\mathcal{V}}}{1-2\tau}}
+ 2C(p,2)\lambda ^{1/2+1/p}  
\end{align*}
as desired.

The bound for the Wasserstein-1 distance follows a strategy similar to the proof of Theorem~\ref{thm:CUBU}, relying on the monotonicity of the Wasserstein distance, the triangle inequality, and the results established in Proposition~\ref{prop:w2_tight}.
\end{proof}

% \red{$n=\Omega (\varepsilon^{\frac{-10p}{p+2}}),b=\Omega(\varepsilon^{-\frac{-8p}{p+2}})$}

\newpage
\section{Algorithms}
\label{sec:app_alg}

\begin{algorithm}[h!]
    \footnotesize
    		\begin{itemize}
		\item Initialize $\left(\bvartheta_{0},\bv_{0}\right) \in \mathbb{R}^{2p}$, stepsize $h > 0$, \textcolor{black}{sample size $K>0$} and friction parameter $\gamma > 0$.
            % \item Sample $W_{1} \sim \rho$
            % \item $G_{0} \to \mathcal{G}(x_{0},\omega_{1})$
		\item for {$k = 1,2,...,K$}
                \item[] Sample $Z^{\bvartheta}_{k},Z^{\bv}_{k},\Tilde{Z}^{\bvartheta}_{k},\Tilde{Z}^{\bv}_{k}$ according to
                \begin{align}
\nonumber Z^{\bvartheta}&:= \sqrt{\frac{2}{\gamma}}\left(\mathcal{Z}^{(1)}\left(h/2,\xi^{(1)}\right) - \mathcal{Z}^{(2)}\left(h/2,\xi^{(1)},\xi^{(2)}\right)\right),\\ Z^{\bv}&:=\sqrt{2\gamma}\mathcal{Z}^{(2)}\left(h/2,\xi^{(1)},\xi^{(2)}\right),\label{def:ZxZv}
\end{align}
		\begin{itemize}
			\item[(U)]
                     $(\bvartheta,\bv) \to (\bvartheta_{k-1} + \frac{1-\eta^{1/2}}{\gamma}\bv_{k-1}+Z^{\bvartheta}_k, \eta^{1/2} \bv_{k-1}+Z^{\bv}_k)$
   			\item[] 
                    Sample $\omega_{k} \sim \rho$
                   \item[(B)] $\bv\to \bv-h  \mathcal{G}(x,\omega_{k})$
			\item[(U)] 
                     $(\bvartheta_k,\bv_k) \to (x + \frac{1-\eta^{1/2}}{\gamma}v+\tilde{Z}^{\bvartheta}_k, \eta^{1/2} \bv+\tilde{Z}^{\bv}_k)$
                     % $v \to v_{k-1} - \frac{h}{2}G_{k-1}$
                  \end{itemize}   
            \item Output: Samples $(\bvartheta_{k})^{K}_{k=0}$.
                              \end{itemize}   
		\caption{Stochastic Gradient Constrained UBU (SG-CUBU)}
	\label{alg:SG-CUBU}
\end{algorithm}

\begin{algorithm}[h!]
    \footnotesize
    
	\begin{itemize}
		\item Initialize $\left(\bvartheta_{0},\bv_{0}\right) \in \mathbb{R}^{2p}$, stepsize $h > 0$ and friction parameter $\gamma > 0$.
            \item Sample $\omega_{1} \sim \rho$
            \item $G_{0} \to \mathcal{G}(\bvartheta_{0},\omega_{1})$
		\item for $k = 1,2,...,K$ do
		\begin{itemize}
			\item[(B)] $\bv \to \bv_{k-1} - \frac{h}{2}G_{k-1}$
                \item[(A)] $\bvartheta \to \bvartheta_{k-1} + \frac{h}{2}v$
                \item[] Sample $\xi_{k} \sim \mathcal{N}(0_{p},I_{p})$
                \item[(O)] $\bv \to \eta \bv + \sqrt{1-\eta^{2}}\xi_{k}$
                \item[(A)] $\bvartheta_{k} \to \bvartheta + \frac{h}{2}v$
                \item[] Sample $\omega_{k+1} \sim \rho$
                \item[] $G_{k} \to \mathcal{G}(\bvartheta_{k},\omega_{k+1})$
                \item[(B)] $\bv_{k} \to \bv - \frac{h}{2}G_{k}$
		\end{itemize}
            \item Output: Samples $(\bvartheta_{k})^{K}_{k=0}$.
	\end{itemize}
	\caption{Stochastic Gradient Constrained BAOAB (SG-CBAOAB)}
	\label{alg:SG-CBAOAB}
\end{algorithm}

\begin{algorithm}[h!]
    \footnotesize
    
	\begin{itemize}
		\item Initialize $\left(\bvartheta_{0},\bv_{0}\right) \in \mathbb{R}^{2p}$, stepsize $h > 0$ and friction parameter $\gamma > 0$.
            % \item Sample $\omega_{0} \sim \rho$
            % \item $G_{0} \to \mathcal{G}(x_{0},\omega_{0})$ 
		\item for $k = 1,2,...,K$ do
		\begin{itemize}
                \item[] Sample $\omega_{k} \sim \rho$
                \item[] Sample $\xi_{k} \sim \mathcal{N}(0_{p},I_{p})$
                \item[] $\bvartheta_{k} \to \bvartheta_{k-1} + h \bv_{k-1}$
                \item[] $\bv_{k} \to \bv_{k-1} - h \mathcal{G}(\bvartheta_{k-1},\omega_{k}) - h \gamma \bv_{k-1} + \sqrt{2\gamma h}\xi_{k}$

		\end{itemize}
            \item Output: Samples $(\bvartheta_{k})^{K}_{k=0}$.
	\end{itemize}
            
	\caption{Stochastic Gradient Euler-Maruyama (SG-CKLMC)}
	\label{alg:SG-CEM}
\end{algorithm}

\end{document}